\documentclass[10pt, onecolumn, letterpaper]{IEEEtran}

\usepackage{subfigure}
\usepackage{times}
\usepackage{latexsym}
\usepackage{morefloats}
\usepackage{amssymb}
\usepackage{amsmath}

\usepackage{enumitem}
\usepackage{mathrsfs}
\usepackage{amsgen,amsfonts,amsbsy,amsthm}
\usepackage{graphicx}
\usepackage{wasysym}
\usepackage{pxfonts}
\usepackage{stackrel}
\usepackage{array}
\newcommand{\remark}[1]{{\emph{Remark:}} #1}
\newcommand{\remarks}{\emph{Remarks:}}

\newcommand{\BEQA}{\begin{eqnarray}}
\newcommand{\EEQA}{\end{eqnarray}}
\newcommand{\define}{\stackrel{\triangle}{=}}
\newcommand{\EXP}[1]{\mathsf{E}\!\left(#1\right)}
\newcommand{\prob}[1]{\mathsf{Pr}\left(#1\right)}

\newcommand{\ind}{\mathbf{1}}

\newcommand{\bs}{\beta_s}
\newcommand{\bc}{\beta_c}
\newcommand{\bd}{\beta_d}
\newcommand{\bx}{\beta_x}
\newcommand{\Ccal}{\mathcal{C}}
\newcommand{\Acal}{\mathcal{A}}
\newcommand{\Bcal}{\mathcal{B}}
\newcommand{\Tcal}{\mathcal{T}}

\bibliographystyle{ieeetr}
\newtheorem{theorem}{Theorem}

\newtheorem{lemma}{Lemma}




\begin{document}
\title{Analytical Modeling of IEEE~802.11 \\Type CSMA/CA Networks with\\ Short Term Unfairness}
\author{Abhijit Bhattacharya and Anurag Kumar\\Dept.\ of Electrical Communication Engineering\\Indian Institute of Science, Bangalore, 560012, India\\ email: \{abhijit, anurag\}@ece.iisc.ernet.in}


\maketitle

\begin{abstract}
We consider single-hop topologies with saturated transmitting nodes, using IEEE~802.11 DCF for medium access. However, unlike the conventional WiFi, we study systems where one or more of the protocol parameters are different from the standard, and/or where the propagation delays among the nodes are not negligible compared to the duration of a backoff slot. We observe that for several classes of protocol parameters, and for large propagation delays, such systems exhibit a certain performance anomaly known as short term unfairness, which may lead to severe performance degradation. The standard fixed point analysis technique (and its simple extensions) do not predict the system behavior well in such cases; a mean field model based asymptotic approach also is not adequate to predict the performance for networks of practical sizes in such cases. We provide a detailed stochastic model that accurately captures the system evolution. Since an exact analysis of this model is computationally intractable, we develop a novel approximate, but accurate, analysis that uses a parsimonious state representation for computational tractability. Apart from providing insights into the system behavior, the analytical method is also able to quantify the extent of short term unfairness in the system, and can therefore be used for tuning the protocol parameters to achieve desired throughput and fairness objectives. 

\end{abstract}

\section{Introduction}
\label{sec:intro}

The IEEE~802.11 Distributed Coordination Function (DCF) is perhaps the most widely known CSMA/CA standard, thanks to its ubiquitous presence in ``WiFi'' networks. Due to its simple implementation, and the advent of inexpensive chipsets, however, the DCF is being considered for applications beyond conventional WiFi, e.g., energy harvesting wireless sensor networks \cite{fafoutis-etal14}, Unmanned Aerial Vehicle (UAV) communications \cite{brown-etal04uav}, etc. UAV systems are becoming a popular choice for aerial remote sensing applications \cite{colomina-molina14}, thus further widening the range of possibilities with DCF.

The most popular performance analysis of IEEE~802.11 CSMA/CA (WiFi) networks was provided by Bianchi in the seminal work \cite{bianchi00performance}, and was later generalized by Kumar et al. \cite{kumar-etal04new-insights}. We shall provide a brief overview of this technique later in this chapter. However, it is now well-known that this analysis might not work if the DCF backoff parameters are different from those in the standard; in particular, Ramaiyan et al. \cite{ramaiyan-etalYYfp-analysis} demonstrated via some examples that the analysis may not capture the system performance well when the backoff sequences are such that the system exhibits short-term unfairness, i.e., one node or the other repeatedly succeeds in acquiring the channel for a long random time period, while the other nodes languish at large backoff durations, followed by another, randomly selected node acquiring the privileged status, and so on. We shall present these examples, as well as some new examples of short term unfairness in Section~\ref{sec:stu-examples}. Further, we have found that the phenomenon of short-term unfairness is also observed under the practical setting where the backoff sequences are as per the standard, but the \emph{propagation delays} among the participating nodes are large compared to the duration of a backoff slot; this situation arises in a variety of applications such as providing broadband connectivity to remote rural areas using WiFi based \emph{long distance} networks \cite{raman-chebrolu07wifi-rural}, or network formation among UAVs, or between UAVs and a ground station over distances of several kilometres \cite{brown-etal04uav}. Furthermore, with the evolution of WiFi standards, the slot durations\footnote{Throughout the paper, we use the terms ``slot'' and ``backoff slot'' interchangeably.} are decreasing; e.g., the latest WiFi standard IEEE~802.11ac adopts a slot duration of 9~$\mu secs$, as compared to 20~$\mu secs$ in the widely used IEEE~802.11b. Thus, even the propagation delays that were negligible compared to the slot duration in earlier WiFi standards may occupy multiple slot durations in future. In this case also, the analysis in \cite{bianchi00performance} (or simple extensions thereof) does not work well. 

As pointed out above, the situations where the analysis in \cite{bianchi00performance} and \cite{kumar-etal04new-insights} does not work are those where there is significant short term unfairness. The analysis of \cite{bianchi00performance} and \cite{kumar-etal04new-insights} makes the key modeling simplification that, in steady state, during contention periods, the nodes make attempts as \emph{equal rate} independent Bernoulli processes embedded at the backoff slot boundaries. Since the node attempt model is state-independent, such a model does not capture the possibly advantageous position that a successful node might be in, as compared to the unsuccessful nodes, and hence cannot yield the short term unfairness that has been observed. Thus, \emph{a good, parsimonious analytical model to understand and predict the system behavior when the system evolution exhibits high correlation (manifested as short-term unfairness) is still lacking}. Our work is intended as a first step in that direction. In this work, we address this problem for the case of a single-hop topology consisting of \emph{saturated} transmitting nodes and their receivers, using the IEEE~802.11 DCF basic access mechanism for medium access. 
We use the theory of Markov Regenerative Processes to develop a tractable generalization of the Bianchi analysis that incorporates general backoff sequences, as well as large propagation delays. Comparison against extensive simulations have shown that the analysis captures the system performance well even in the presence of high correlation in system evolution.

\noindent
\textbf{Summary of contributions}
\vspace{1mm}

\noindent
Based on a study of the evolution of the system, and a stochastic simulation, we find that the phenomenon of short term unfairness in IEEE~802.11 DCF networks renders the state-less, constant attempt rate approach adopted in \cite{kumar-etal04new-insights}, \cite{bianchi00performance}, and later in \cite{simo-reigadas10wild}, inaccurate (see Section~\ref{sec:stu-examples}). In our analytical approach, we maintain some state information, and introduce state-dependent attempt rates. How we do this in a parsimonious and computationally tractable manner, while developing an accurate approximation, is the primary contribution of this work (Section~\ref{sec:mrp-state-dependent}). Furthermore, our analysis can be used to quantify the extent of short term unfairness in a system, and thus allows tuning of the protocol parameters for performance optimization (Section~\ref{sec:optimization-throughput-fairness}). In addition, our results reveal several interesting, sometimes counterintuitive, traits in the dependence of the system performance on propagation delay (Sections~\ref{sec:numerical} and \ref{sec:opt-slot-duration}).

\subsection{Related Work} 
\label{subsec:related-work}
There is a considerable body of literature on performance analysis of IEEE~802.11 DCF, starting with the seminal work by Bianchi \cite{bianchi00performance}, which was later generalized by Kumar et al. \cite{kumar-etal04new-insights} to incorporate general backoff parameters. Several extensions have been proposed since then. For example, Jindal and Psounis \cite{jindal09} proposed a throughput analysis for multi-hop IEEE~802.11 networks with non-saturated nodes. Nardelli and Knightly \cite{nardelli-knightley12} proposed a closed form analysis for the saturation throughput in the presence of hidden terminals, but under several simplifying assumptions. Considerable attention has also been given to performance analysis of IEEE~802.11e EDCA; see, for example, \cite{tinnirello-bianchi10edca, ramaiyan-etalYYfp-analysis}, and the references therein. However, none of this work is suitable for predicting the performance of systems that exhibit short term unfairness, and the same has been explicitly pointed out in \cite{tinnirello-bianchi10edca}. We will shed more light on this as we proceed further. 

Short term as well as long term unfairness have been observed (and modeled) before in the presence of \emph{hidden terminals} in WLANs by Garetto et al. \cite{garetto-etal05short-term-unfairness}. However, they assume negligible propagation delay throughout their work, and parts of their analysis rely on the assumption that under no hidden nodes, the system is fair, and existing techniques predict system behavior accurately, which is not quite correct as demonstrated in \cite{ramaiyan-etalYYfp-analysis}, and also our current work. Therefore, in the light of the findings in our current work and in \cite{ramaiyan-etalYYfp-analysis}, the problems in \cite{garetto-etal05short-term-unfairness} need a relook.
   
Rademacher et al.\ \cite{radamacher14wild} attempted to address the problem of large propagation delays in WiFi networks by \emph{making the slot duration at least as large as the propagation delay}, and then using the existing analysis techniques. However, this approach does not provide any insight into the system behavior under the default standard, and is suboptimal in general in terms of throughput (see Section~\ref{sec:opt-slot-duration}). 

Simo-Reigadas et al.\ \cite{simo-reigadas10wild} aimed to develop an extension of the Bianchi model to predict the performance of IEEE~802.11 DCF with non-negligible propagation delays. However, we shall argue in Section~\ref{subsubsec:simo-reigadas-performance} that the analysis in \cite{simo-reigadas10wild} \emph{does not capture two distinct features of such systems}, and as a consequence, the \emph{collision/success probabilities computed using the analysis are inaccurate compared to simulation results} obtained from a detailed stochastic model, as well as from the Qualnet simulator\cite{qualnet}\footnote{This anomaly does not show up significantly in the numerical results presented in their work primarily because they \emph{do not compare the collision/success probabilities obtained from their analysis against any experimental or simulation results}, and provide comparison results only for system throughput, which, as our numerical results later on demonstrate, is \emph{less sensitive to} (but not unaffected by) inaccuracies in the analysis than other performance measures such as collision probability.}.

Our work is thus intended as a first step towards an accurate analytical model for such systems. Our key contribution is the development of a principled approach for analyzing IEEE~802.11 DCF based systems with short term unfairness.

\vspace{2em}
\begin{center}
 \textbf{\Large{Part~I: Generalized Backoff Sequences}}
\end{center}
\normalsize
\section{A brief description of IEEE~802.11 DCF (adapted from \cite{kumar-etal04new-insights})}
\label{subsec:dcf-description}

\begin{figure}[ht]
  \begin{center}
  \includegraphics[scale=0.45]{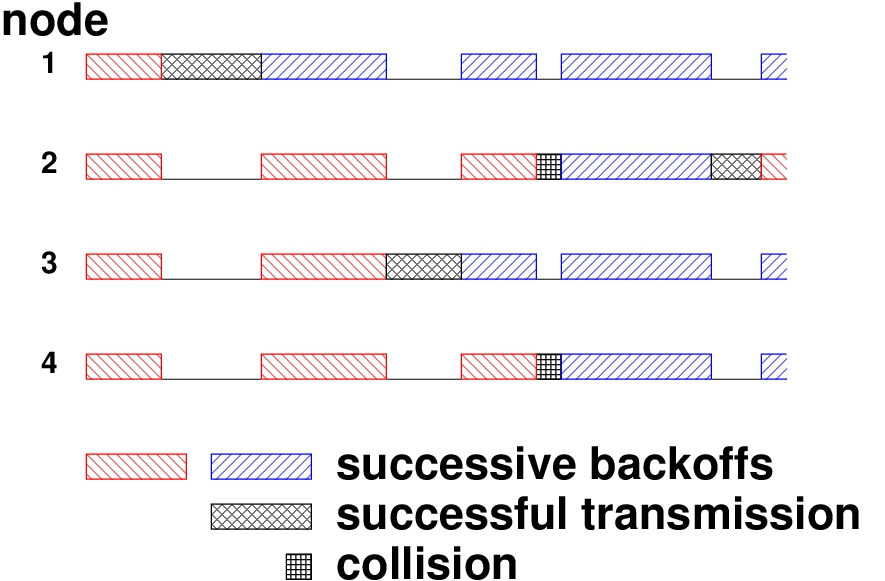}  
  \end{center}
\caption{(Reproduced from \cite{kumar-etal04new-insights}) The evolution of the backoff periods and channel activity for four
  nodes. Backoffs are interrupted by channel activity, i.e., packet
  transmissions and collisions.}
\label{fig:evolution_abstraction}
 \end{figure} 

  \begin{figure}[ht]
  \begin{center}
  \includegraphics[scale=0.45]{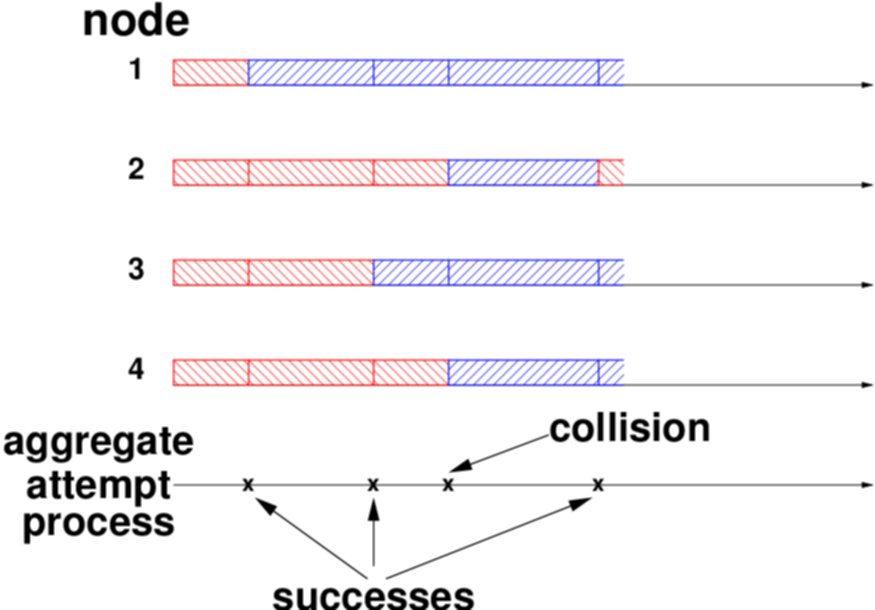}  
  \end{center}
\caption{(Reproduced from \cite{kumar-etal04new-insights}) After removing the channel activity from Figure~\ref{fig:evolution_abstraction}
  only the backoffs remain. At the bottom is shown the aggregate
  attempt process on the channel, with three successes and one
  collision. }
\label{fig:backoff_abstraction}
\end{figure}
We assume basic access without RTS-CTS. Figure~\ref{fig:evolution_abstraction} is a depiction of the the evolution of the
system for 4 nodes; shown are the backoffs, the successful transmissions and
collisions (including overheads). In the IEEE~802.11 standard, the backoff durations are in
multiples of a standardized time interval called a \emph{backoff slot} (e.g.,
$20~\mu$s in IEEE~802.11b). When a node completes its backoff (for example, Node~1 is the first to complete
its backoff in Figure~\ref{fig:evolution_abstraction}), if it senses the channel idle, it starts a packet transmission on the channel. If none of the other nodes finish their backoff simultaneously, they hear the ongoing transmission, and freeze their backoffs. Note that we assume that all the nodes can hear one another's transmissions, i.e., there are \emph{no hidden nodes}. In this case, the packet transmission is successful, and the intended receiver sends a MAC level ACK. Upon receiving the ACK, the node that transmitted the packet waits for an interval called DIFS, and samples a new backoff
interval. All the other nodes resume counting down their old residual backoffs. Note that we assume throughout that nodes always have
packets to transmit; i.e., \emph{all the transmission queues are saturated}.

If two or more nodes complete their backoffs together, then they both start a packet transmission at the same time, leading to a collision (note that the phenomenon of packet capture is not modeled). At the end of the collision duration, each node involved in the collision waits for an interval called EIFS, before sampling fresh backoffs. For example, in Figure~\ref{fig:evolution_abstraction} Nodes 2
and 4 collide after the first two attempts (by Nodes 1 and 3,
respectively) are successful. The other nodes, not involved in the
collision, freeze their backoff during the collision duration (including the EIFS), and resume counting down their old residual backoffs thereafter. If attempts to send the packet at the head-of-the-line (HOL) meet with several successive failures, this packet is discarded. Note that the evolution of the channel activity after an attempt is deterministic. It is either the time taken for a successful transmission or for a collision. 

In the DCF mechanism, the nodes sample their backoff intervals uniformly from a contention window. The initial contention window size is typically small, and after each collision, the subsequent backoff is sampled from a larger contention window. For example, in the IEEE~802.11 standard, the initial contention window is the interval $[1,32]$, and after each collision, the contention window size is doubled, until it reaches a maximum allowed value of 1024. After a successful transmission, the contention window size is reset to the initial value. Throughout this work, we shall assume a \emph{homogeneous} system, i.e., all the nodes have the same backoff parameters (the contention window size, how these are varied in
response to collisions and successes, and the number of retries of a
packet).

\section{Modeling and Analysis of IEEE~802.11 DCF as in \cite{kumar-etal04new-insights} (adapted from \cite{kumar-etal04new-insights})}
\label{sec:modeling-dcf}

Since all nodes freeze their backoffs during channel activity, the
total time spent in backoff up to any time $t$, is the \emph{same}
for every node. With this observation, let us now look at
Figure~\ref{fig:backoff_abstraction} which shows the backoffs of
Figure~\ref{fig:evolution_abstraction} with the channel activity
removed. Thus in this picture ``time'' is just the cumulative backoff
time at each node. In the IEEE~802.11 standard the backoffs are
multiples of the slot time. A success occurs if a single backoff ends
at a slot boundary, and a collision occurs when two or more backoffs
end at a slot boundary. Clearly, the (random) sequence in which the
nodes seek turns to access the channel and whether or not each such
attempt succeeds depends only on the backoff process shown in
Figure~\ref{fig:backoff_abstraction}. It is therefore sufficient to
analyse the backoff process in order to understand the channel
allocation process. We must caution, however, that this simplification of working in backoff time alone will not work if the propagation delays among the nodes are large compared to the duration of a backoff slot; see Part~II for details. 

Thus, in summary, we can delete the channel activity periods, and we
are left with a ``conditional time'' , called \emph{backoff time}. In \cite{kumar-etal04new-insights}, Kumar et al./ analyze this backoff process conditioned on being in
backoff time. Note that since the channel activity durations are deterministic, the original process (in unconditional time) can be exactly reconstructed from the backoff process.

\subsection{A Markov model for system evolution}
\label{subsec:markov-model-dcf}
Let us introduce the following notation for the protocol parameters.
\begin{description}
 \item $K$: The maximum number of reattempts before a packet is discarded
 \item $W_k$: Contention window size for the $k^{th}$ reattempt
 \item $b_k$: Mean backoff duration for the $k^{th}$ reattempt. Note that for uniform backoff distribution, $b_k=\frac{1+W_k}{2}$.
\end{description}
Suppose there are $n$ transmitter-receiver pairs, with saturated queues. As mentioned in the foregoing discussion, we look at the system evolution over backoff time alone; recall Figure~\ref{fig:backoff_abstraction}. We adopt a discrete time model by focusing on the system evolution over backoff slots, $t \in \{0, 1, 2, \cdots\}$. Let $S_i(t)$ be the backoff stage of Node~$i$ in slot $t$, i.e., the number of reattempts so far for the current HOL packet at Node~$i$; $S_i(t) \in \{0, 1, \cdots , K\}$. Let $B_i(t)$ be the residual backoff of Node~$i$ in slot $t$; $B_i(t)\in\{1,\ldots,W_{S_i(t)}\}$. Then it follows from the foregoing discussion of the IEEE~802.11 DCF protocol that $\{(S_i(t),B_i(t))_{i=1}^n, t \geq 0\}$ is a Discrete Time Markov chain (DTMC) embedded at the backoff slot boundaries. However, the size of the state space of this DTMC is $(W_0+W_1+\ldots+W_K)^n$, i.e., growing exponentially with the number of nodes. For the default protocol parameters of IEEE~802.11b, the state space size is over 9 million, even for $n=2$, thus making this DTMC analytically intractable.

An alternative Markov model was also proposed in \cite{kumar-etal04new-insights} \emph{by assuming a geometric backoff distribution} instead of the uniform distribution adopted in the standard. In particular, the assumption is that when a node is in back-off stage $k$, it attempts in the next slot with probability $\frac{1}{b_k}$. With this assumption, let us look at the process that counts the number of nodes in each back-off stage. This will be a $(K+1)$-dimensional process
for any number of nodes. Define the number of nodes in the back-off
stage $k \in \{0, 1, \cdots, K\}$ in slot $t$
to be $Y^{(n)}_k(t)$.  Let $\mathbf{Y}^{(n)} (t)$ denote the vector
random process with components $Y^{(n)}_k(t)$. Then, due to the assumption of Bernoulli attempt processes, $\mathbf{Y}^{(n)} (t)$ is a Markov process taking
values in the set \( {\cal Y}^{(n)} := \{\mathbf{y}: y_k \ 
\mbox{nonnegative integers}; \sum_{k=0}^K y_k = n\} .  \)

It was shown in \cite{kumar-etal04new-insights} that the DTMC
$\mathbf{Y}^{(n)} (t)$ is positive recurrent, and hence has a unique stationary distribution, which, in principle, can be obtained, and the system performance measures computed therefrom. However, the state space size of even this DTMC is $\binom{n+K}{K}$, which quickly becomes unwieldy as $n$ and $K$ increase. 

Since an exact analysis of the system evolution for the DCF mechanism is computationally intractable, approximate analytical techniques were developed to predict the system performance with reasonable accuracy. We describe next, the approximate analysis developed in \cite{kumar-etal04new-insights}, which was a generalization of the seminal work \cite{bianchi00performance} by Bianchi.

\subsection{The approximate analysis in \cite{kumar-etal04new-insights} (adapted from \cite{kumar-etal04new-insights})}
\label{subsec:bianchi-analysis}
We start with the following key approximation.

\noindent
\textbf{The Decoupling Approximation:} Let $\beta$ denote the long run
average back-off rate (\emph{in back-off time}) for each node. By symmetry, all nodes achieve the same value of $\beta$.  Let
there be $n$ contending transmitters, and consider a tagged node. The
decoupling approximation is to assume that the aggregate attempt
process of the other $(n-1)$ nodes is independent of the back-off
process of the tagged node. Then the overall approach is the following: 

\noindent
(i) \emph{Modeling the evolution at a tagged node: }The
``influence'' of the other nodes on a tagged node is modeled via the
decoupling approximation. Attempts by a tagged node over slots
experience the collision probability $\gamma$. For a given collision
probability this yields one equation $\beta = G(\gamma)$ (see
Eqn.~\ref{eqn:G_gamma}). 

\noindent
(ii) \emph{Modeling the system evolution: }The nodes are assumed to attempt in each
slot with a constant (state independent) probability equal to the
average attempt rate, $\beta$. Then, conditional on a tagged node
attempting, the number of attempts by other nodes is binomially
distributed. This yields the other (``coupling'') equation $\gamma =
\Gamma(\beta)$ (see Eqn.~\ref{eqn:Gamma_beta_binomial}).  When these
equations are put together we obtain the desired fixed point equation.

A justification for the decoupling approximation comes from a Mean Field type asymptotic analysis. Please see Section~\ref{sec:mean-field} for details.

\subsubsection{Analysis of the backoff process at a tagged node}
\label{sec:back-off_process}

\begin{figure}[t]
  \begin{center}
    \includegraphics[scale=0.45]{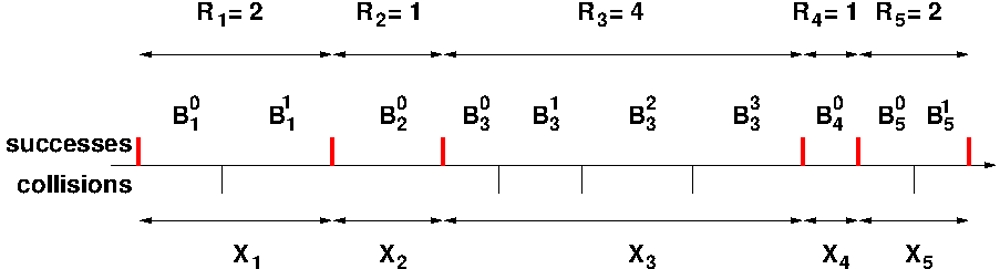}
  \end{center}
\caption{(Reproduced from \cite{kumar-etal04new-insights}) Evolution of the back-offs of a node. Each attempted packet starts
  a renewal ``cycle.''}
\label{fig:renewal_model}
\end{figure}
In Figure~\ref{fig:renewal_model} we show the evolution of the
back-off process for a single node. There are $R_j$ attempts until
success for the $j$th packet (no case of a discarded packet is shown
in this diagram), and the sequence of back-offs for the $j$th packet
is $B_j^{(i)}, 0 \leq i \leq R_j-1$. Thus the total back-off for the
$j$th packet is given by $ X_j = \sum_{i=0}^{R_j-1} B_j^{(i)} $ with
$\EXP{B_j^{(i)}} = b_{i}$. We observe that the sequence $X_j, j \geq
1, $ are renewal life times, since after each success or packet discard, the node returns to backoff stage 0. Hence, viewing the number of attempts
$R_j$ for the $j$th packet as a ``reward'' associated with the renewal
cycle of length $X_j$, we obtain from the renewal reward theorem that
the back-off rate is given by ${\EXP{R}}/{\EXP{X}}$.  Now let $\gamma$
be the collision probability seen by a node, i.e.,
\[\gamma := \prob{\mbox{an attempt by a node fails because of a collision}} \]
By the approximation made earlier, the successive collision
events are independent. It is then easily seen that
\begin{eqnarray*}
  \EXP{R} &=& 1 + \gamma  + \gamma^2 +  \cdots + \gamma^K  \\
  \EXP{X} &=& b_0 + \gamma b_1 + \gamma^2 b_2 + \cdots +  \gamma^k b_k 
             + \cdots +  \gamma^K b_K
\end{eqnarray*}
which yields the following formula for the attempt rate for a given 
collision probability $\gamma$.
\begin{eqnarray}
  \label{eqn:G_gamma}
 G(\gamma)  &:=& \frac{1 + \gamma  + \gamma^2  + \cdots + \gamma^K}{
  b_0 + \gamma b_1 + \gamma^2 b_2 + \cdots + \gamma^k b_k + \cdots +
  \gamma^K b_K}  
\end{eqnarray}
Note that, since the back-off times are in slots, the attempt rate
$G(\gamma)$ is in \emph{attempts per slot}.

\subsubsection{The fixed point equation}
\label{sec:fixed_point_equation}

Since each node attempts with probability $\beta, 0 \leq \beta
\leq 1$, in each slot, conditioning on an attempt of a given node, the
probability of this attempt experiencing a collision is the
probability that any of the other nodes attempts in the same slot. Thus, 
under the decoupling approximation, the probability of collision of an
attempt by a node is given by
\begin{eqnarray}
   \label{eqn:Gamma_beta_binomial}
   \Gamma(\beta) &:=& 1 - (1 - \beta)^{(n-1)}
\end{eqnarray}
which, for a large number of nodes, can be approximated by (see \cite{kumar-etal04new-insights} for details)
\begin{eqnarray}
  \label{eqn:Gamma_beta_poisson}
  \Gamma(\beta) &:=& 1 - e^{-(n-1) \beta} 
\end{eqnarray}
Thus, we have the following fixed
point equation, which is expected to approximate the equilibrium behavior of the system.
\begin{eqnarray}
  \label{eqn:fp_equation}
  \gamma &=& \Gamma( G (\gamma))
\end{eqnarray}

\noindent
\remarks
\begin{enumerate}
 \item It was shown in \cite{kumar-etal04new-insights} that $\Gamma( G (\gamma)): [0,1] \to [0,1]$, has a unique fixed point
    if $b_k, k \geq 0$, is a nondecreasing sequence, which is, in fact, the case for the IEEE~802.11 standard.
 \item The \emph{distribution} of the back-off durations does
not matter in the above analysis. \hfill \Square
\end{enumerate}
\begin{figure}[tpb]
  \begin{center}
     
    \includegraphics[scale=0.35]{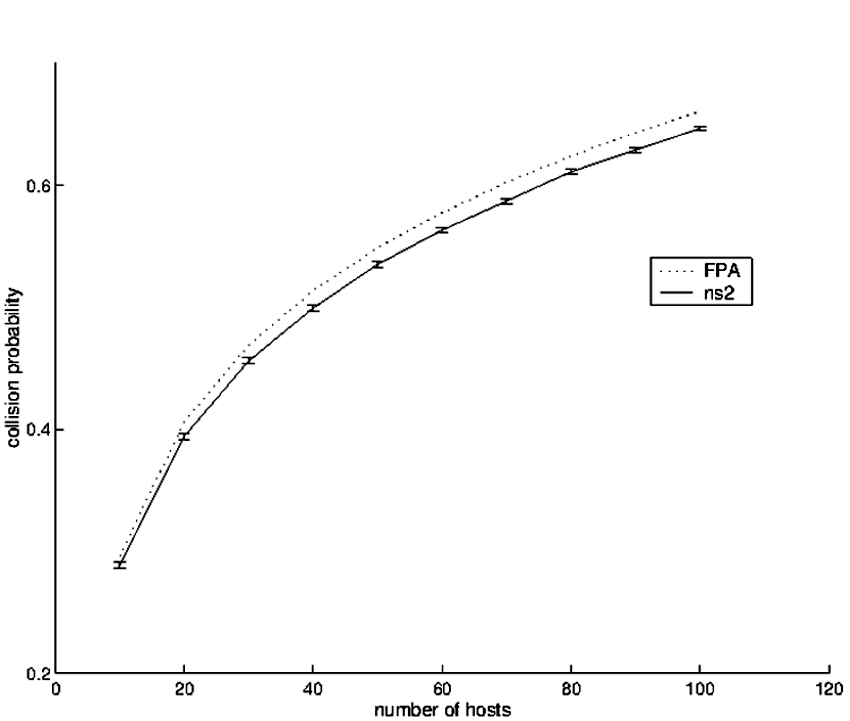}
  \end{center}
\caption{(Reproduced from \cite{kumar-etal04new-insights}) Plot of collision probability versus number of nodes.
  Comparison of collision probability ($\gamma$) obtained from an
  \emph{ns2} simulation (plot labeled \emph{ns2}), and the fixed point
  analysis (plot labeled \emph{FPA}). In the \emph{ns2} simulation the default IEEE~802.11 parameters are
  used}
\label{fig:comparison_collision_ns2_fp}
\end{figure}
Figure~\ref{fig:comparison_collision_ns2_fp} shows the collision probabilities obtained from the fixed point
method and from an \emph{ns2} simulation for a wide range of values of $n$, and for the default parameters of IEEE~802.11. It can be seen that the fixed
point analysis works well for the default IEEE~802.11 parameters even for moderate values of $n$. 

\section{A Mean Field Perspective}
\label{sec:mean-field}
A mean field type asymptotic approximation has been used in the literature in an attempt to understand the scope and limitations of the fixed point analysis proposed in \cite{bianchi00performance,kumar-etal04new-insights}; see, for example, \cite{bordenave-etal05,benaim-leboudec08}. We provide here, a brief overview of the mean field approach; for details of the approach, see, for example, \cite{benaim-leboudec08}. 

\subsection{An overview of the mean field asymptotic approximation}
\label{subsec:scaled-model}
Consider $N$ identical, saturated nodes contending for medium access; the propagation delay among the nodes is negligible, and there are no hidden nodes. The system is slotted. This model is entirely in backoff time (see Section~\ref{sec:modeling-dcf}, and \cite{kumar-etal04new-insights}). Each node's attempt and backoff model is as follows: a node can make up to $K$ reattempts. After each unsuccessful attempt, its \emph{backoff stage} is incremented by one (after the $K^{th}$ reattempt, it is reset to zero); thus the \emph{backoff stage} of a node $\in \mathsf{S} := \{ 0, 1, 2,\cdots, K\}$. In back-off stage $k$, at the beginning of a slot, a node attempts with probability $\frac{p_k}{N}$, independent of everything else. If two or more nodes attempt, there is a collision. 

Let $X^{(N)}_n (i) \in \mathsf{S}, i \geq 0, $ be the state of Node~$n$ at the beginning of Slot~$i$. Then it is easy to observe that $(X^{(N)}_1(k), \cdots, X^{(N)}_N(k)), k \geq 0,$ is an irreducible DTMC on $\mathsf{S}^N$ for each $N \geq 1$. However, the state space of this DTMC grows exponentially in $N$, and is therefore, computationally intractable. Hence, instead, the following method is adopted. It can be argued that (see \cite{benaim-leboudec08}) starting with an exchangeable law at $k=0$, $(X^{(N)}_1(k), \cdots, X^{(N)}_N(k))$ is exchangeable for each $k$. Consider the \emph{empirical measure} process, defined for each $N$, and $k \geq 0$, as follows:
  \begin{align*}
    \mathbf{M}^{(N)}(k) &= \frac{1}{N} \sum_{l=1}^N \mathbf{e}_{X^{(N)}_l(k)}
  \end{align*}
where $\mathbf{e}_i$ is the $i^{th}$ unit vector in $\mathbf{R}^{K+1}$. Thus, $(\mathbf{M}^{(N)}(k))_i$ is the fraction of particles in state $i$ at step $k$, $0 \leq i \leq K$. Clearly, $\mathbf{M}^{(N)}(k)$ is a DTMC on $\mathcal{P}(\mathsf{S})$, the set of probability measures on $\mathsf{S}$. 

\subsubsection{An ordinary differential equation (ODE) limit for the time scaled empirical measure process}
\label{subsubsec:ode}
For $t \geq 0$, the above construction, with the scaled attempt probabilities, is used to define a time scaled version of the empirical measure process as follows: 

\begin{eqnarray}
  \overline{\mathbf{M}}^{(N)} (t) := \mathbf{M}^{(N)} (\lfloor Nt \rfloor)\label{eqn:interpolated-empirical-measure}
\end{eqnarray}

At this point, it is worth recalling the original IEEE~802.11 system with \emph{geomtric backoff distribution} introduced in Section~\ref{subsec:markov-model-dcf}. To see the connection of the current model with the original 802.11 DCF system analyzed earlier, think of the process $\overline{\mathbf{M}}^{(N)}(t)$ intuitively as follows. Each backoff slot is divided into $N$ ``mini-slots'', and in each mini slot, each node in backoff stage $k$, $0\leq k\leq K$, attempts w.p. $\frac{p_k}{N}$, independent of everything else; we set $p_k=\frac{1}{b_k}$. It is as if, in each mini-slot, each node ``chooses'' to attempt with probability $\frac{1}{N}$, and then, having chosen to attempt, actually attempts with probability $p_k$, if its back-off stage is $k$. Thus, the expected number of times that a node chooses to attempt in a slot is 1, and the expected number of attempts that a node actually makes in a slot is $p_k=\frac{1}{b_k}$, the same as in the original system. The process $\mathbf{M}^{(N)}(i), i \geq 0,$ is the empirical measure process for this scaled process, embedded at mini-slots. $\overline{\mathbf{M}}^{(N)} (t)$, defined in Eqn.~\ref{eqn:interpolated-empirical-measure}, is then just the step interpolation of $\mathbf{M}^{(N)}(i), i \geq 0$. It is a continuous time random process, taking values in $\mathcal{P}(\mathsf{S})$.

It can be shown that (see \cite{benaim-leboudec08}), if $\mathbf{M}^{(N)} (0) \stackrel{p}{\to} \mathbf{\mu}$ then, for each $t \geq 0$,
\begin{eqnarray*}
   \overline{\mathbf{M}}^{(N)} (t)  \stackrel{w}{\to} \mathbf{\mu}(t)
\end{eqnarray*}
i.e., the scaled and interpolated empirical measure Markov chain converges weakly to the deterministic function $\mathbf{\mu}(t)$ as $N \to \infty$, where $\mathbf{\mu}(t)$ is the (unique) solution to the following Ordinary Differential Equation (ODE) on $\mathcal{P}(\mathsf{S})$ with initial condition $\mathbf{\mu}(0) = \mathbf{\mu}$
\begin{align}
  \dot{\mu_0} (t) &= -\mu_0p_0 + \mathbf{p}\cdot\mathbf{\mu}e^{-\mathbf{p}\cdot\mathbf{\mu}} + p_K\mu_K (1-e^{-\mathbf{p}\cdot\mathbf{\mu}})\nonumber\\
  \dot{\mu_i} (t) &= -\mu_i p_i + \mu_{i-1} p_{i-1}(1-e^{-\mathbf{p}\cdot\mathbf{\mu}})\quad i=1,\ldots,K \label{eqn:ode}
\end{align}
where $\mathbf{p}=[p_0\:p_1\ldots p_K]$.
 
A formal derivation of the ODE requires considering the expected drift of the process $\mathbf{M}^{(N)}(k)$, and taking limits of an appropriately scaled version of this drift as $N\to\infty$. However, one can intuitively interpret the equations as follows: a node in backoff state 0 leaves state 0 if it makes an attempt; thus the rate of leaving state 0 is $N\mu_0 \frac{p_0}{N}=\mu_0p_0$. A node in state $i\neq 0,K$ enters state 0 if its attempt is successful. Thus the rate of entering state 0 from state $i\neq 0,K$ is $N\mu_i\frac{p_i}{N}(1-\frac{p_i}{N})^{N\mu_i-1}\prod_{k\neq i}(1-\frac{p_k}{N})^{N\mu_k} \to \mu_ip_i e^{-\mathbf{p}\cdot\mathbf{\mu}}$ as $N \to \infty$. If a node in state $K$ makes an attempt, it enters state 0, irrespective of success or collision. Thus, the rate of entering state 0 from state $K$ is $\mu_Kp_K$. Combining all these, we get the R.H.S. of the first equation. Interpretation for the expression for $\dot{\mu_i} (t)$ is similar.

\subsubsection{Convergence to ``chaos:'' A motivation for the ``decoupling'' approximation}
\label{subsubsec:decoupling-approx-from-ode}
Denote by $\mathcal{L}( X^{(N)}_1 (\infty), X^{(N)}_2 (\infty), \cdots, X^{(N)}_k (\infty))$, the joint probability law of any $k$ nodes in the steady state regime. Suppose the ODE \eqref{eqn:ode} has a unique stationary point $\mu^*$ to which \emph{all trajectories of the ODE converge} (also called the globally asymptotically stable equilibrium (g.a.s.e.) of the ODE). Then, it can be shown that (see \cite{benaim-leboudec08}), as $N\to\infty$,
\begin{eqnarray*}
\mathcal{L}( X^{(N)}_1 (\infty), X^{(N)}_2 (\infty), \cdots, X^{(N)}_k (\infty))
  &\stackrel{w}{\to}& (\mu^*)^k
\end{eqnarray*} 
i.e., the stationary joint probability law of the backoff states any $k$ nodes in the steady state regime is approximately $(\mu^*)^k$ for large $N$. In other words, for large $N$, in steady state, the time scaled empirical measure process is approximately an i.i.d. vector across the nodes, with common marginal measure $\mu^*$. This motivates the ``decoupling approximation'', and provides a justification for the independence assumption (of node attempt processes) made in the saturation analysis of IEEE~802.11 (Section~\ref{sec:modeling-dcf}).  

\subsection{A justification of the Bianchi approximation from mean field perspective}
\label{subsec:critique-bianchi}
To find the stationary point of the ODE \eqref{eqn:ode}, we need to solve
\begin{eqnarray*}
  p_0 \mu_0 &=& \beta(\mathbf{\mu}) (1 - \gamma(\mathbf{\mu})) + p_K \mu_K \gamma(\mathbf{\mu})\\
\end{eqnarray*}
and,  for $1 \leq k \leq K$,
\begin{eqnarray*}
    p_k \mu_k &=& p_{k-1} \mu_{k-1} \gamma(\mathbf{\mu})
\end{eqnarray*}
where $\beta(\mathbf{\mu})=\mathbf{p}\cdot\mathbf{\mu}$, the total attempt rate of the nodes in a minislot, and $\gamma = 1-e^{-\beta}$. This, in turn, gives the following fixed-point equation after some algebraic manipulations:
\begin{align*}
 \gamma &= (1 - e^{-\beta})  \\
  \beta  &= \frac{\sum_{k=0}^K \gamma^k}{\sum_{k=0}^K \frac{\gamma^k}{p_k}}
\end{align*}
which is of the same form as the fixed point equation in the Bianchi approximation (Section~\ref{sec:fixed_point_equation}), and is known to have a unique solution. Observe that $\gamma$ still has the interpretation of collision probability. To see this, note that for the $N$-node system, the collision probability of a node is given by $1-\prod_{k=0}^K (1-\frac{p_k}{N})^{N\mu_k} \to 1-e^{-\mathbf{p}\cdot\mathbf{\mu}}=1-e^{-\beta}=\gamma$ as $N\to\infty$.

Thus, Bianchi's method amounts to finding the unique stationary point of the ODE, which is then taken as the steady state operating point of the system. However, we need to exercise some caution.

\begin{enumerate}
 \item From the discussion in Section~\ref{subsubsec:decoupling-approx-from-ode}, the asymptotic independence (which motivates the decoupling approximation) \emph{provably holds} when the stationary point is also the globally asymptotically stable equilibrium of the ODE \cite{benaim-leboudec08}. Uniqueness of the stationary point alone is not enough to ensure that. When the stationary point is not a g.a.s.e. of the ODE (for example, when the ODE has a limit cycle), the stationary point may not represent the equilibrium behavior of the ODE (\cite{benaim-leboudec08}). 
 \item Even when the ODE has a g.a.s.e., the asymptotic independence is, after all, only an asymptotic result that holds for large $N$. Thus, for moderate values of $N$, the independence assumption made in the analysis in Section~\ref{sec:modeling-dcf} might not hold. The accuracy of the Bianchi analysis for default backoff parameters of the IEEE~802.11 standard appears to indicate that the decoupling approximation works well even for small values of $N$. However, in the next section, we provide examples of backoff sequences that result in high correlation in the system evolution for small to moderate values of $N$, thus violating the decoupling approximation. 
\end{enumerate}
From here onwards, we refer to the analysis in Section~\ref{sec:modeling-dcf} as Bianchi analysis or mean field analysis interchangeably.

\section{Systems with Short Term Unfairness and the Bianchi Analysis}
\label{sec:stu-examples}
The DCF mechanism is finding its way to newer applications beyond the WLAN standards, thanks in large part to its simple, distributed implementation, and the advent of inexpensive chipsets. For example, the IEEE~802.15.4 MAC protocol used in IoT applications is a simple variation of the DCF protocol of IEEE~802.11. Also, there have been proposals to use the IEEE~802.11 DCF for rural broadband access, and for Unmanned Aerial Vehicle (UAV) communications. When variations of the same protocol are used for a wide range of different applications, a common engineering practice is to adapt the parameters of the protocol to suit the needs of the particular application at hand. For example, the backoff parameters of the IEEE~802.15.4 are quite different from those of the IEEE~802.11 standard. It is, therefore, convenient to have an analytical technique that can predict the system performance not just for the standard protocol parameters, but for more general backoff parameters as well.

This brings up the following natural question: will the mean field analysis continue to predict the system performance well, if the protocol parameters are changed from those in the IEEE~802.11 standard? In particular, will it work for any \emph{non-decreasing} backoff sequence $\{b_0,\ldots, b_K\}$ (recall from Section~\ref{sec:fixed_point_equation} that the mean field analysis has a unique fixed point for non-decreasing backoff sequences), and any number of nodes, $n$? The following examples demonstrate that this is not the case. 

\subsection{Example~1: IEEE~802.11-like backoff expansion framework (adapted from \cite{ramaiyan-etalYYfp-analysis})}
\label{subsec:example1} 
\begin{figure}[t]
\begin{center}
\includegraphics[scale=0.35]{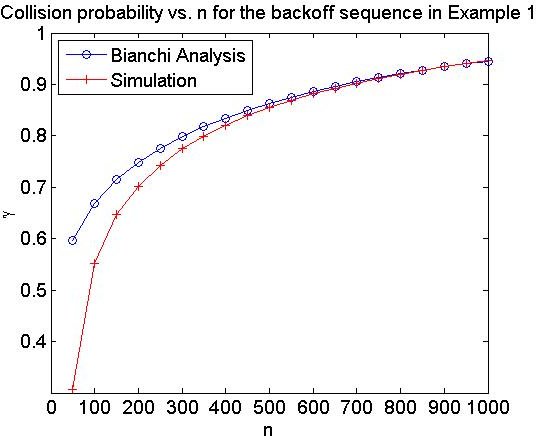}
\caption{Example~1: Collision probability vs. number of nodes; comparison of the values obtained from the Bianchi analysis against those obtained from simulations. We see that the error in the collision probability obtained from the Bianchi analysis is much worse than 10\% when the number of nodes, $n$, is less than 100. Note that in a practical network, the number of nodes could just be in the tens.}
\label{fig:gamma-vs-n-example1}
\vspace{-5mm}
\end{center}
\normalsize
\end{figure}
\begin{figure}[t]
\begin{center}
\includegraphics[scale=0.3]{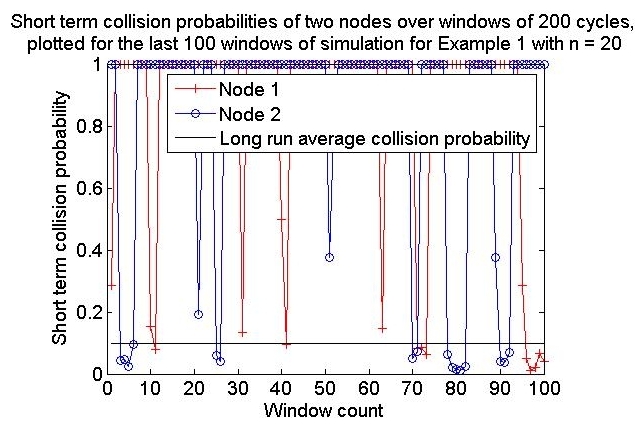}
\hspace{0.1mm}
\includegraphics[scale=0.3]{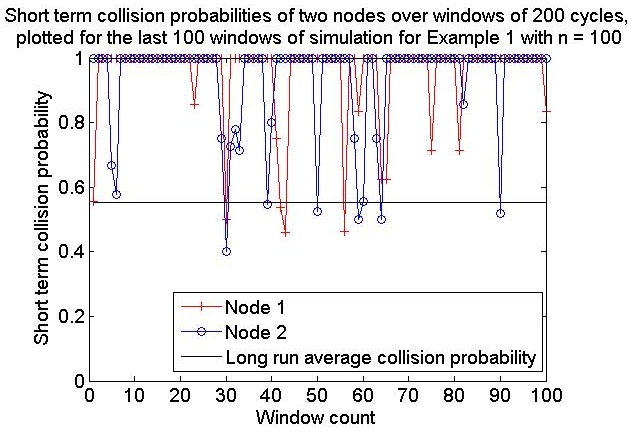}
\vspace{0.1mm}
\includegraphics[scale=0.3]{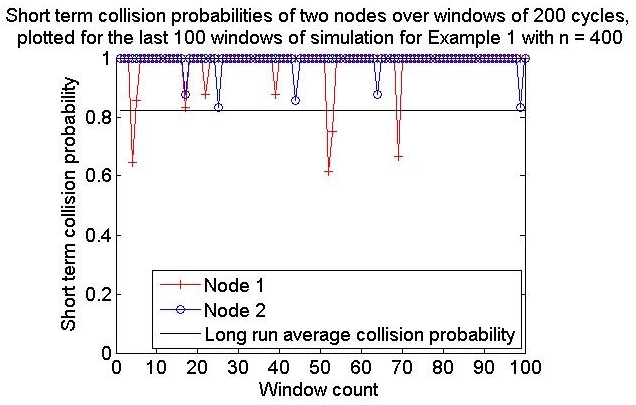}
\hspace{0.1mm}
\includegraphics[scale=0.3]{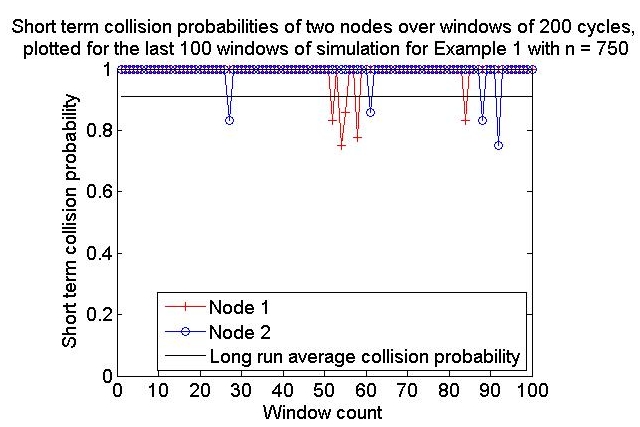}
\caption{Example~1: Simulation results depicting short term unfairness for various $n$. Shown are the short term collision probabilities of two of the transmitters; also plotted are the long run average collision probabilities, averaged over all the nodes and all simulation time. We see that short term unfairness decreases as the number of nodes, $n$, increases.}
\label{fig:short-term-unfairness-example1}
\vspace{-5mm}
\end{center}
\normalsize
\end{figure}
Consider a system where all nodes use the IEEE~802.11 DCF backoff expansion framework for medium access, but with parameters different from those in the standard, namely, $K=7$, $b_0=1$, $b_k = 3^k b_0$ for all $0\leq k\leq K$. This system is of interest because of its close resemblance with the standard IEEE~802.11 DCF backoff mechanism, except for the values of the protocol parameters; specifically the retry limit $K$ has been changed to 7 from 6 in the standard, the initial mean backoff $b_0=1$ instead of 16.5 in the standard, and the backoff multiplier has been changed to 3 from 2 in the standard.

Figure~\ref{fig:gamma-vs-n-example1} demonstrates the performance of the Bianchi analysis in predicting the collision probabilities for this example for various $n$. As we can see from the plot, the error in the collision probability obtained from the Bianchi analysis is much worse than 10\% when the number of nodes, $n$, is less than 100. 

To understand why the mean field analysis does not capture the system performance, let us take a closer look at the system behavior for lower values of $n$. Consider a system with $n=20$ nodes, and backoff parameters as above. It turns out that this system exhibits \emph{short term unfairness}, in the sense that when a node's transmission is successful, it monopolizes the channel for the next several thousands of backoff slots, resulting in starvation and high short term collision probabilities for the other nodes \cite{ramaiyan-etalYYfp-analysis}. 

Panel~1 (panel numbers are row-wise, from left to right) of Figure~\ref{fig:short-term-unfairness-example1} depicts the short term collision probabilities of two of the 20 transmitters. Each point in the plot is the short term collision probability of a node computed over a window of 200 consecutive system activities (i.e., successful transmissions or collisions), and the process was repeated for the last 100 windows in the simulation, thus giving 100 values for each node. The short term collision probability of Node~$i$ in Window~$j$ is computed as $\frac{C_j(i)}{A_j(i)}$, where $C_j(i)$ and $A_j(i)$ are respectively the number of collisions experienced, and the number of attempts made by Node~$i$ in Window~$j$. Also plotted is the long run average collision probability, averaged over all the nodes, and the simulation duration. This is given by $\frac{1}{n}\sum_{i=1}^n\frac{C(i)}{A(i)}$, where $C(i)$ and $A(i)$ are respectively the total number of collisions experienced, and the total number of attempts made by Node~$i$ over the entire simulation duration. It can be observed from the plot that there is high variance in the short term collision probabilities of the two nodes w.r.t the long run average collision probability. In particular, it is often the case that in a window where Node~1 has a low short term collision probability, Node~2 has a very high short term collision probability, and vice-versa, thus indicating that one of the nodes monopolizes the channel in each window, shutting out the other node, thus leading to a high collision probability for the other node during that period.

Intuitively, the short term unfairness in this system can be explained as follows: when a node succeeds, it attempts again in the immediate next slot (since the initial backoff window is only 1 slot), whereas due to the large variability in backoff, the other nodes are busy counting down their large residual backoffs. This causes the successful node to monopolize the channel (attempt in every slot). See \cite{ramaiyan-etalYYfp-analysis} for more details. 

This also explains why the collision probability predicted by the Bianchi analysis is higher than that obtained from simulations. This is because in the presence of short term
unfairness, the last successful node has a much larger probability of accessing the channel in the next slot than the other nodes, thus further boosting its success probability, unlike in a fair system, where all the nodes have comparable probability of accessing the channel, resulting in a higher probability of collision. \emph{The mean
field analysis ignores the correlation in the system evolution in an unfair system.} The high correlation in the system evolution (manifested as short term unfairness) means that the asymptotic independence yielded by the mean field approach in Section~\ref{subsec:scaled-model} (and hence the decoupling approximation made in the Bianchi analysis) does not hold, which explains why the analysis does not work. 

Figure~\ref{fig:short-term-unfairness-example1} also demonstrates the variation in short term unfairness as a function of the number of nodes, $n$ (see Panels~2, 3, and 4). It can be seen that as the number of nodes increases, the variance in the short term collision probabilities w.r.t. the long run average collision probability decreases, implying fairer access to the channel for all the nodes, i.e., the short term unfairness gradually decreases. This is consistent with the fact that the Bianchi analysis (and the decoupling approximation) works well for larger $n$. 

The decrease in short term unfairness with increasing $n$ can be intuitively explained as follows. The successful node goes to backoff stage~0, where it attempts again with probability 1 in the very next slot. The other nodes have large backoffs and hence the probability of any individual node attempting in the same slot as the successful node is small. However, if there are enough of other nodes (i.e., $n$ is sufficiently large) then the probability of the successful node colliding in its next attempt can be large, thereby causing that node as well to quickly join the ranks of the nodes with large backoffs, thus ameliorating the unfairness. 

These observations are further reinforced in the next examples.

\subsection{Example~2: Large backoff variability (adapted from \cite{ramaiyan-etalYYfp-analysis})}
\label{subsec:example2}
\begin{figure}[htpb]
\begin{center}
\includegraphics[scale=0.3]{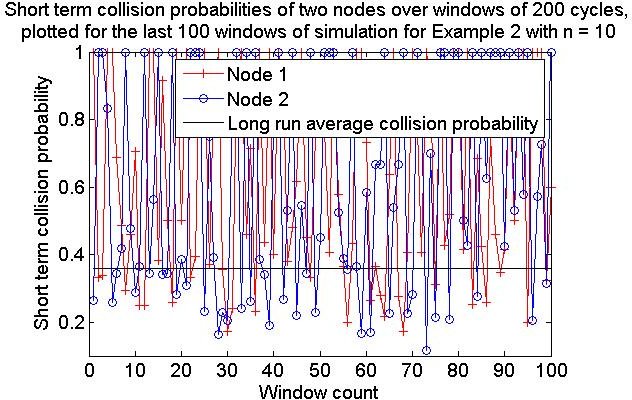}
\hspace{0.1mm}
\includegraphics[scale=0.3]{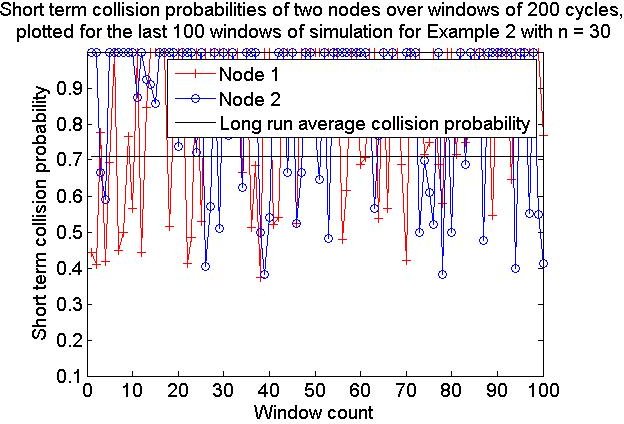}
\vspace{0.1mm}
\includegraphics[scale=0.3]{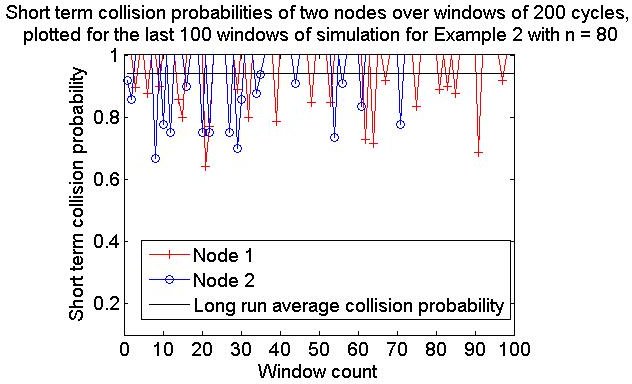}
\hspace{0.1mm}
\includegraphics[scale=0.3]{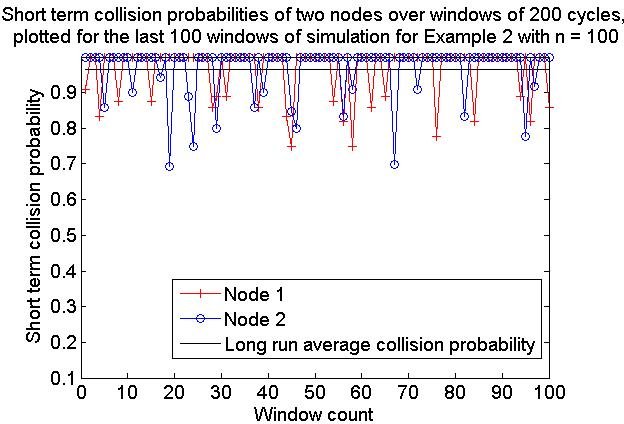}
\caption{Example~2: Simulation results depicting short term unfairness for various $n$. Shown are the short term collision probabilities of two of the transmitters; also plotted are the long run average collision probabilities, averaged over all the nodes and all simulation time. We see that short term unfairness decreases as the number of nodes, $n$, increases.}
\label{fig:short-term-unfairness-example2}
\vspace{-5mm}
\end{center}
\normalsize
\end{figure}
\begin{figure}[htpb]
\begin{center}
\includegraphics[scale=0.4]{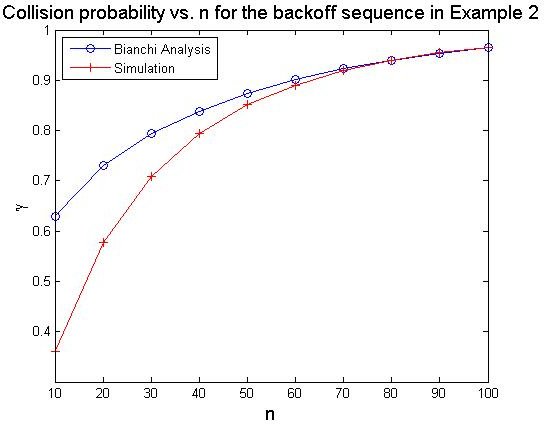}
\caption{Example~2: Collision probability vs. number of nodes; comparison of the values obtained from the Bianchi analysis against those obtained from simulations. We see that for relatively smaller values of $n$, the Bianchi analysis does not predict the performance well.}
\label{fig:gamma-vs-n-example2}
\vspace{-5mm}
\end{center}
\normalsize
\end{figure}
Consider a system where all nodes use the following backoff parameters: $K=7$, $b_0 = b_1 = b_2 = b_3 = 1$, $b_4 = b_5 = b_6 = b_7= 64$. Intuitively, this system will also encounter the same problem as the previous one; one node will attempt in every slot, while the others will be in large backoff.

Panel~1 (panel numbers are row-wise, from left to right) of Figure~\ref{fig:short-term-unfairness-example2} depicts the short term collision probabilities of two of the nodes for a system with $n=10$ transmitters, computed in the same way as in Example~1. As in Example~1, there is high variance in the short term collision probabilities of the two nodes w.r.t the long run average collision probability. In particular, it is often the case that in a window where Node~1 has a low short term collision probability, Node~2 has a very high short term collision probability, and vice-versa, thus indicating that one of the nodes monopolizes the channel in each window, shutting out the other node. Comparison of Panels~1 to 4 in Figure~\ref{fig:short-term-unfairness-example2} also shows that short term unfairness decreases with increasing $n$. 

A comparison of Figure~\ref{fig:short-term-unfairness-example2} with Figure~\ref{fig:short-term-unfairness-example1} reveals that for the backoff sequence in Example~2, fairness kicks in with fewer number of nodes compared to that in Example~1. This can be explained intuitively as follows. The maximum backoff a node can sample in Example~2 is much smaller compared to that in Example~1 (127 in Example~2 vs. 4373 in Example~1). Hence, after a successful transmission in the system, the residual backoffs of the nodes are likely to be much smaller in Example~2 than those in Example~1. Hence, the probability of any individual node attempting in the same slot as the successful node is higher than that in Example~1. Hence, a smaller number of nodes than that in Example~1 would be needed to cause the successful node to collide with a high probability in its next attempt. 

Finally, Figure~\ref{fig:gamma-vs-n-example2} demonstrates the performance of the Bianchi analysis in predicting the collision probabilities for Example~2 for various $n$. As can be expected from the short term unfairness observations earlier, the analysis (and the decoupling approximation) does not work well for $n<30$, and the accuracy gets better as $n$ increases. 

\subsection{Example~3: Small number of nodes, limited retry}
\label{subsec:example3}
\begin{figure}[htpb]
\begin{center}
\includegraphics[scale=0.3]{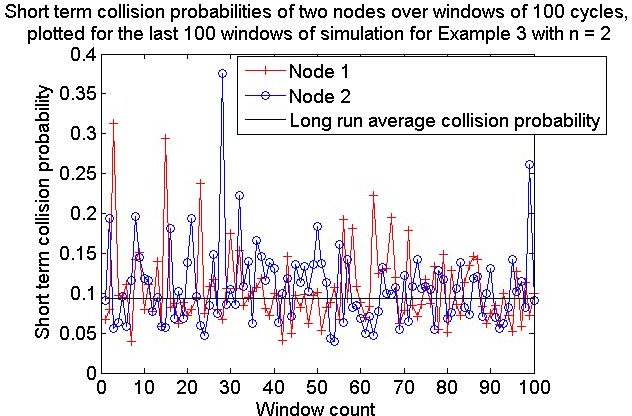}
  \hspace{0.1mm}
  \includegraphics[scale= 0.3]{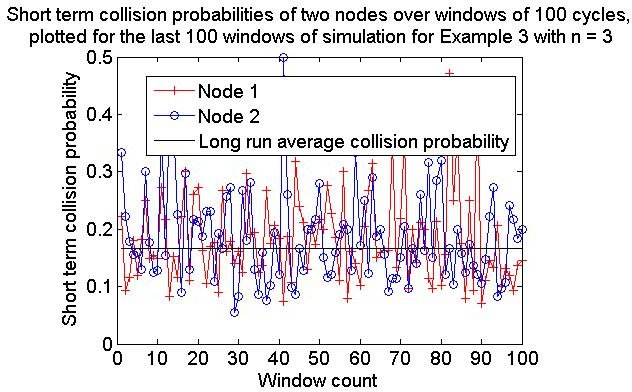}
  \vspace{0.1mm}
  \includegraphics[scale=0.3]{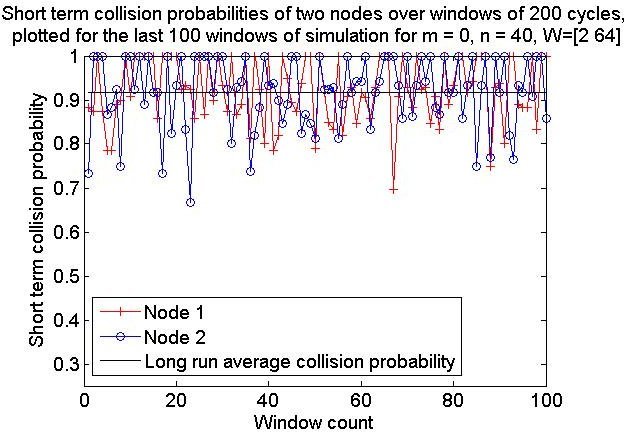}
  \hspace{0.1mm}
  \includegraphics[scale= 0.3]{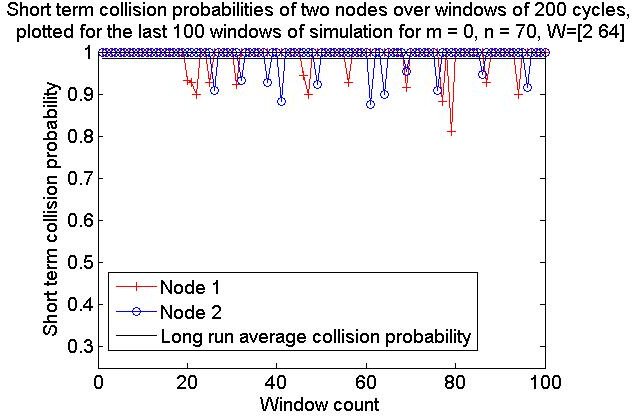}
\caption{Example~3: Simulation results depicting short term unfairness for various $n$. Shown are the short term collision probabilities of two of the transmitters; also plotted are the long run average collision probabilities, averaged over all the nodes and all simulation time. We see that short term unfairness decreases as the number of nodes, $n$, increases.}
\label{fig:short-term-unfairness-example3}
\vspace{-5mm}
\end{center}
\normalsize
\end{figure}
\begin{figure}[htpb]
\begin{center}
\includegraphics[scale=0.45]{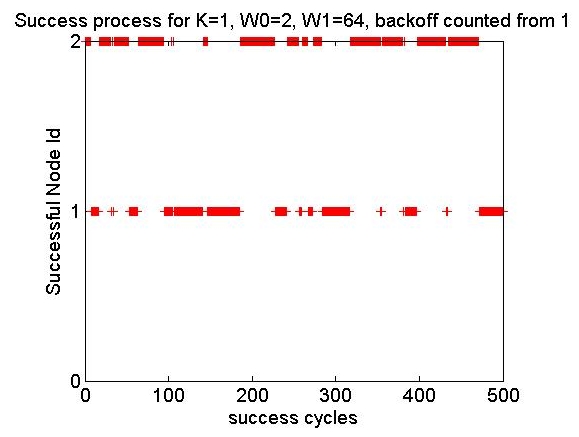}
\caption{Example~3: Simulation results depicting short term unfairness for a system with 2 nodes. Shown is the evolution of the success process of the two nodes over 500 successful transmissions of the system in Example~3. The success process is bursty, indicating short term unfairness.}
\label{fig:success-process-example3}
\vspace{-5mm}
\end{center}
\normalsize
\end{figure} 
\begin{figure}[htpb]
\begin{center}
\includegraphics[scale=0.45]{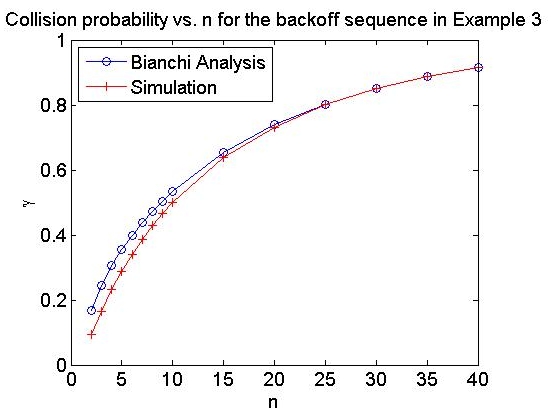}
\caption{Example~3: Collision probability vs. number of nodes; comparison of the values obtained from the Bianchi analysis against those obtained from simulations. We see that for $n<10$, errors in the Bianchi analysis are more than 10\%.}
\label{fig:gamma-vs-n-example3}
\vspace{-5mm}
\end{center}
\normalsize
\end{figure}

In the previous examples, the number of nodes were ten or more, and the retry limit $K$ was moderate. What if the number of nodes, and the retry limit are both small? Does a large backoff variability still cause unfairness? Turns out it does. To see this, we consider a system with the following backoff parameters: $K=1$, $b_0=1.5$, $b_1=32.5$.

Panel~1 (panel numbers are row-wise, from left to right) of Figure~\ref{fig:short-term-unfairness-example3} depicts the short term collision probabilities of the nodes for a system with $n=2$ transmitters, computed in the same way as in Example~1. As in Examples~1 and 2, there is high variance in the short term collision probabilities of the two nodes w.r.t the long run average collision probability. In particular, it is often the case that in a window where Node~1 has a low short term collision probability, Node~2 has a relatively high short term collision probability, and vice-versa, thus indicating that one of the nodes monopolizes the channel in each window, shutting out the other node. However, compared to Examples~1 and 2, the variance in short term collision probabilities is lower in this example, indicating that the extent of unfairness is less compared to Examples~1 and 2. This is because, due to the smaller retry limit and less backoff variability compared to Examples~1 and 2, the node in the higher backoff stage can return to backoff stage 0 faster compared to Examples~1 and 2. Nevertheless, the system does exhibit some short term unfairness. To see this more clearly, we plot in Figure~\ref{fig:success-process-example3}, the Node IDs of the successful nodes for the last 500 successful transmissions in a simulation run of the 2-node system. It can be seen from Figure~\ref{fig:success-process-example3} that the success processes at the two nodes are bursty in nature, indicating that one node captures the channel over prolonged durations, while the other gets zero throughput during that period, i.e., there is short term unfairness. 

Comparison of Panels~1 to 4 in Figure~\ref{fig:short-term-unfairness-example3} also shows that short term unfairness decreases with increasing $n$.

Finally, Figure~\ref{fig:gamma-vs-n-example3} demonstrates the performance of the Bianchi analysis in predicting the collision probabilities for this example for various $n$. As can be expected from the short term unfairness observations earlier, the analysis (and the decoupling approximation) does not work well at relatively lower values of $n$ ($n<10$), and the accuracy gets better as $n$ increases. 

\subsection{Example~4: Limited backoff variability, large retransmission limit}
\label{subsec:example4}
\begin{figure}[htpb]
\begin{center}
\includegraphics[scale=0.3]{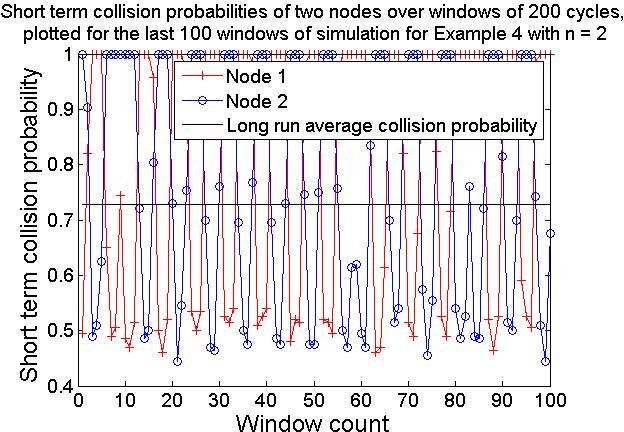}
  \hspace{0.1mm}
  \includegraphics[scale= 0.3]{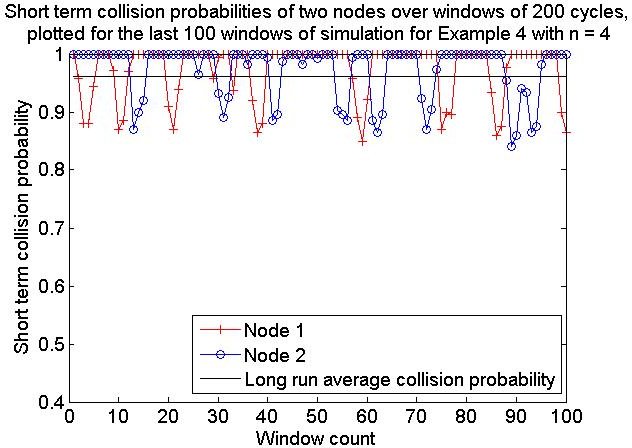}
  \vspace{0.1mm}
  \includegraphics[scale=0.3]{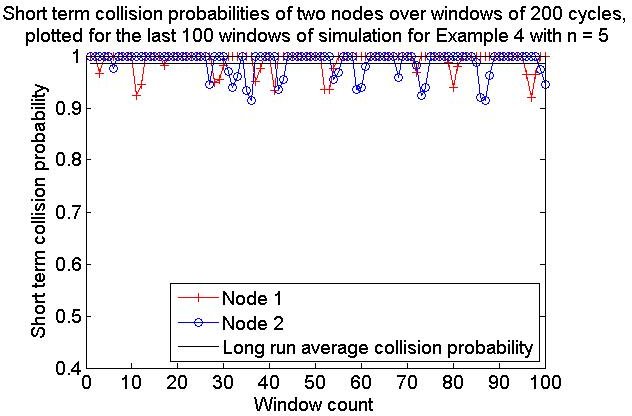}
  \hspace{0.1mm}
  \includegraphics[scale= 0.3]{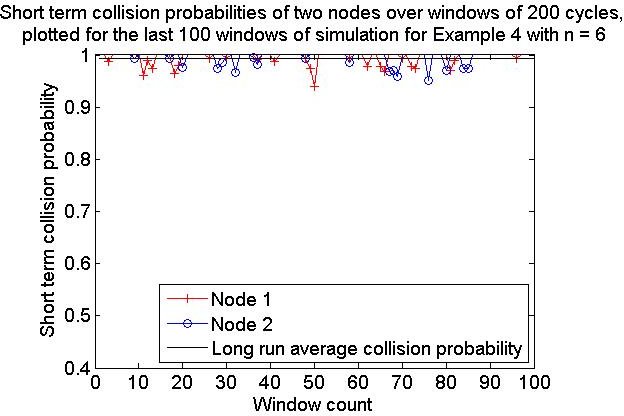}
\caption{Example~4: Simulation results depicting short term unfairness for various $n$. Shown are the short term collision probabilities of two of the transmitters; also plotted are the long run average collision probabilities, averaged over all the nodes and all simulation time. We see that short term unfairness decreases as the number of nodes, $n$, increases.}
\label{fig:short-term-unfairness-example4}
\vspace{-5mm}
\end{center}
\normalsize
\end{figure}
\begin{figure}[htpb]
\begin{center}
\includegraphics[scale=0.4]{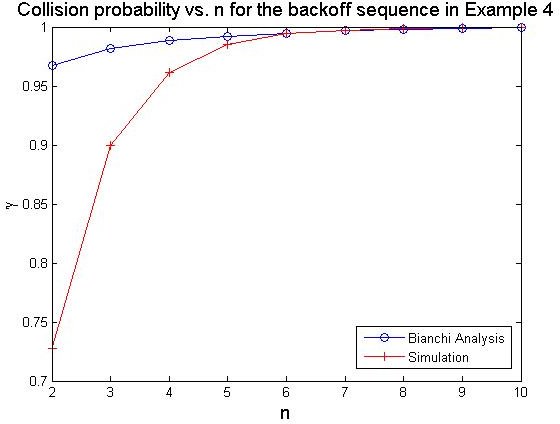}
\caption{Example~4: Collision probability vs. number of nodes; comparison of the values obtained from the Bianchi analysis against those obtained from simulations. We see that for $n\leq 3$, the Bianchi analysis does not predict the performance well.}
\label{fig:gamma-vs-n-example4}
\end{center}
\normalsize
\end{figure}
In all the previous examples, the short term unfairness arose from the large variability in backoff. What if the minimum and maximum backoffs are comparable? The following example demonstrates that there could still be short term unfairness if the retry limit is large enough. Consider a system with backoff parameters $K=400$, $b_0 = \ldots = b_{100} = 1$, $b_{101}=\ldots=b_{400}=2$. 

Panel~1 (panel numbers are row-wise, from left to right) of Figure~\ref{fig:short-term-unfairness-example4} depicts the short term collision probabilities of the nodes for a system with $n=2$ transmitters, computed in the same way as in Example~1. As in the previous examples, there is relatively high variance in the short term collision probabilities of the two nodes w.r.t the long run average collision probability, implying short term unfairness.

The short term unfairness in this example can be explained intuitively as follows. Suppose both the nodes start their backoffs together from the initial backoff stage. Since the backoff window is 1 up to backoff stage 100, they will continue to attempt together and collide until they both reach backoff stage 101 together. Beyond this point, they sample their backoffs from a larger window, and hence one of them, say Node~1 will succeed at some point in time. The backoff stage of Node~1 is reset to 0, while Node~2's backoff stage is somewhere between 101 and 400. For concreteness, let us say Node~2's backoff stage is 101. Now Node~1 will attempt in every backoff slot until Node~1's backoff stage exceeds 100 (i.e., it encounters 100 successive collisions), since $b_0=\ldots=b_{100}=1$; the probability of this event is very small. On an average, Node~2 makes an attempt every 2 backoff slots; thus Node~1's attempt succeeds every alternate slot on an average. Note that all of Node~2's attempts encounter collisions, since Node~1 attempts in every slot. To return to backoff stage 0 (and thus again be on the same page as Node~1), Node~2 has to encounter 300 successive collisions, starting from backoff stage 101, since $K=400$. Since a collision occurs every 2 backoff slots on an average, 300 collisions will require 600 backoff slots. Over these 600 backoff slots, Node~1 will have 300 successful transmissions on an average, while all of Node~2's attempts will collide. Thus, we will see a burst in the success process of Node~1, and zero throughput for Node~2 in the corresponding period.

Comparison of Panels~1 to 4 in Figure~\ref{fig:short-term-unfairness-example4} also shows that short term unfairness decreases with increasing $n$ for this example as well. 

Finally, Figure~\ref{fig:gamma-vs-n-example4} demonstrates the performance of the Bianchi analysis in predicting the collision probabilities for this example for various $n$. As can be expected from the short term unfairness observations earlier, the analysis (and the decoupling approximation) does not work well for $n\leq 3$, and the accuracy gets better as $n$ increases. 

\subsection{Convergence of the ODE trajectories}
\label{subsec:ode-convergence}
For each of the above examples, we also studied the ODE trajectories obtained from the mean-field analysis (Section~\ref{sec:mean-field}) starting from different initial conditions. The trajectories were obtained using the ode45/ode23 tool in MATLAB. We also obtained the unique stationary point, $\mu^*$, of the ODE in each case using the method described in Section~\ref{subsec:critique-bianchi}, and studied the euclidean norm difference, $||\mu(t)-\mu^*||$, of the ODE trajectory and the stationary point as a function of time, for different initial conditions. The results are summarized in Figure~\ref{fig:ode-convergence}. 
\begin{figure}[htpb]
\begin{center}
\includegraphics[scale=0.3]{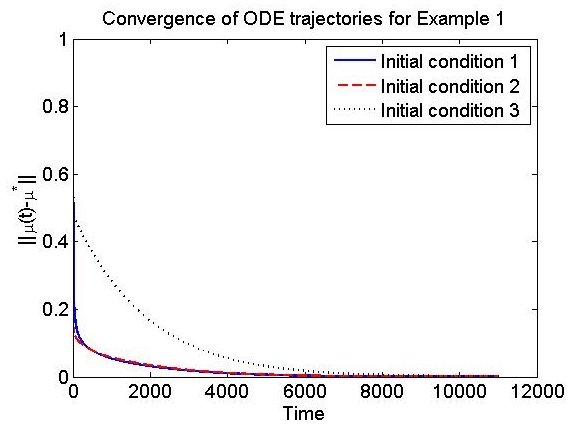}
  \hspace{0.1mm}
  \includegraphics[scale= 0.3]{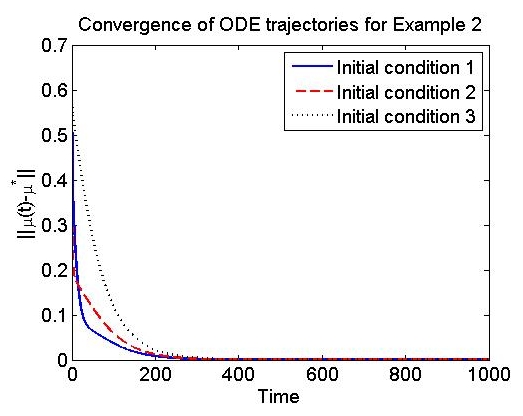}
  \vspace{0.1mm}
  \includegraphics[scale=0.3]{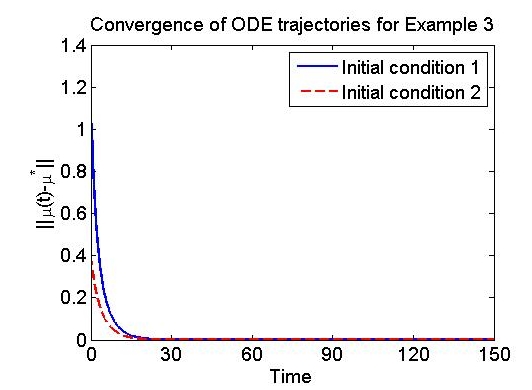}
  \hspace{0.1mm}
  \includegraphics[scale= 0.3]{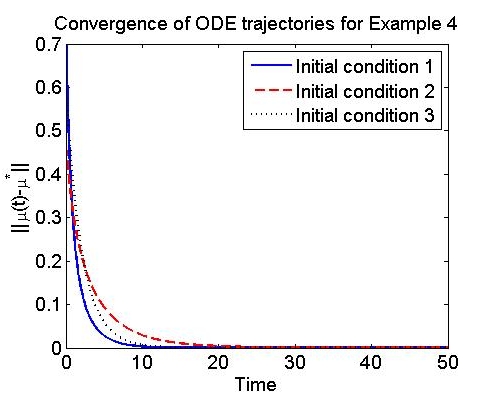}
\caption{Convergence of ODE trajectories to the unique stationary point starting with different initial conditions; seems to suggest that the stationary point is, in fact, a g.a.s.e.}
\label{fig:ode-convergence}
\vspace{-5mm}
\end{center}
\normalsize
\end{figure}
For each example, the ODE trajectories seem to converge to the unique stationary point, starting from different initial conditions. This indicates that the stationary point might, in fact, be a globally asymptotically stable equilibrium of the ODE in each case, and therefore, the decoupling approximation should hold asymptotically. Despite this observation, we have seen that the decoupling approximation does not hold for these examples at smaller values of $n$, a clear evidence that the mean-field asymptotic approach is not adequate to predict the system behavior for practical values of $n$.     

\subsection{Summary}

The above discussion can be summarized as follows.

\begin{enumerate}
 \item There exist several classes of backoff parameters that lead to short term unfairness at small to moderate values of $n$. 
 \item For such systems, due to the high correlation in the system evolution, the decoupling approximation does not hold. In particular, the Bianchi analysis does not predict the system performance well.
 \item However, as the number of nodes in the system is made large (resulting in an increase in collision probability), the short term unfairness gradually disappears, and the Bianchi analysis becomes more accurate in predicting the system performance.  
\end{enumerate}

Since for several classes of backoff parameters, the Bianchi approximation does not work well for systems with a ``permissible'' number of nodes (permissible in the sense that the resulting collision probability is not too high; for example, $\gamma \leq 0.7$), and it is very hard to know beforehand if the analysis will work well for a given number of nodes, we need an alternate analytical technique that can predict the system peformance well even in the presence of high correlation in the system evolution. This will be the focus of our work from here onwards.

Unlike the Bianchi model where a state independent, constant attempt rate is assumed for all the nodes, we will need state dependent attempt rates to capture the bursty nature of the success processes of the nodes (see Figure~\ref{fig:success-process-example3}, and Section~\ref{subsec:system-evolution-mrp}). To this end, we need to maintain some state at the end of each transmission over the medium, and determine appropriate attempt rates following the transmission periods. 

We start with an alternate stochastic model of the system.

\section{A Markov Renewal Model of the System Evolution}
\label{sec:exact_model}

\begin{figure}[ht]
\begin{center}
\includegraphics[scale=0.4]{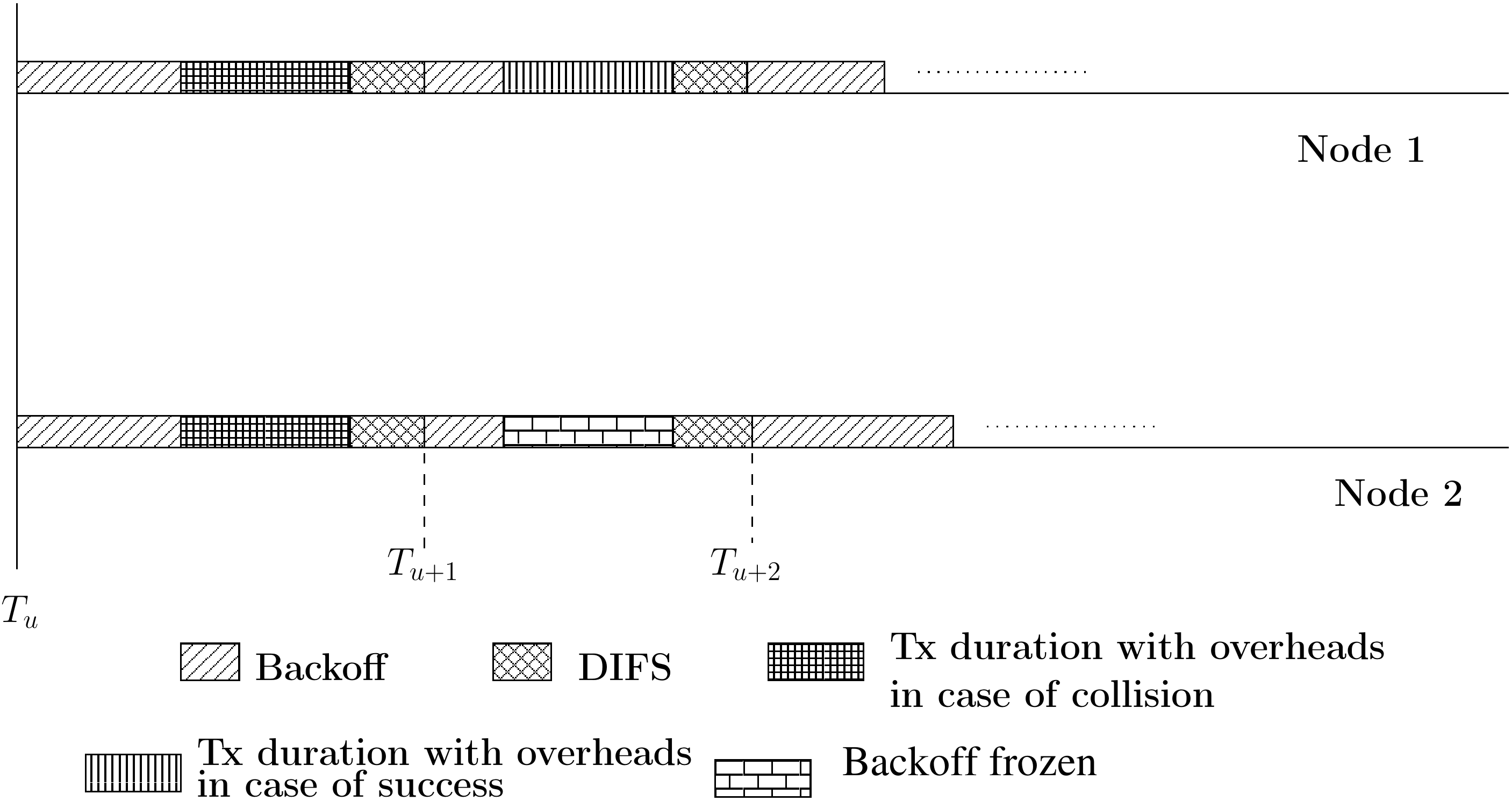}
\caption{\textbf{Transmission Cycles for $n=2$}. Denote by $T_u$, the first instant after the $u^{th}$ activity in the medium when the nodes start counting down their backoffs; we set $T_0=0$. The intervals $[T_u, T_{u+1}]$ and $[T_{u+1},T_{u+2}]$ are, respectively, the $(u+1)^{th}$ and $(u+2)^{th}$ \emph{transmission cycles}. In each transmission cycle, the system encounters one successful packet transmission, or a packet collision.}
\label{fig:tx-cycle-explain}
\end{center}
\normalsize
\end{figure}

Our system consists of $n\geq 2$ saturated transmitting nodes, and their receivers, operating under IEEE~802.11 DCF. Recall from Section~\ref{subsec:markov-model-dcf} that the system evolution can be modeled by a DTMC embedded at the backoff slot boundaries, the states of the DTMC being the backoff stage and the residual backoff count of each node. 

Alternatively, we can model the system evolution as a Markov renewal process. This model is equivalent to the DTMC model as explained in the remark at the end of this section, but unlike the DTMC model, it avoids embedding at deterministic transitions. The model is as follows: let $T_u$ be the first instant after the $u^{th}$ activity in the medium when the nodes start counting down their backoffs. See, for example, Figure~\ref{fig:tx-cycle-explain}, which depicts a sample path of the system evolution for $n=2$. We call the interval $[T_u, T_{u+1}]$ the $(u+1)^{th}$ \emph{transmission cycle}. In each transmission cycle, there is excactly one activity in the medium. 

Let $B_{u,i}, S_{u,i}$, denote respectively the residual backoff count, and backoff stage of Node~$i$, $i=1,2,\ldots,n$ at $T_u$. Recalling the notation for the protocol parameters of IEEE~802.11 DCF, $S_{u,i}\in\{0,1,\ldots,K\}$, $B_{u,i}\in\{1,\ldots,W_{S_{u,i}}\}$. Then, the process $(\{B_{u,i}, S_{u,i}\}_{i=1}^n, T_u)$ is a Markov Renewal Process \cite{kulkarni95modeling-stochastic-systems}, with $\{B_{u,i}, S_{u,i}\}_{i=1}^n$ being the embedded Markov chain, whose transition structure is explained next. 

Note that $(T_u+B_{u,i})$ is the instant when Node~$i$ is scheduled to finish its backoff, and attempt a transmission in the $(u+1)^{th}$ transmission cycle. Let $\underline{B}_u = \min_{1\leq i\leq n} B_{u,i}$, and $I_u = \arg \min_{1\leq i\leq n} B_{u,i}$. In case of a tie, take $I_u$ to be the smallest node ID among the nodes involved in the tie. 

\noindent
\textbf{Observations:}

\begin{enumerate}
 \item $(T_u+B_u)$ and $I_u$ are, respectively, the attempt instant, and Node id of the first node to attempt transmission in the $(u+1)^{th}$ transmission cycle. 
 \item A successful transmission happens \emph{iff} for all $i\neq I_u$, $B_{u,i} > \underline{B}_u$, and a collision happens otherwise. 
\end{enumerate}

With the above information, the transition structure of the embedded Markov chain can be summarized as follows:
\begin{enumerate}
\item Initialize the set of nodes attempting in the $(u+1)^{th}$ transmission cycle as $S_{a,u} = \phi$. For each node~$i$, $1\leq i\leq n$, if $B_{u,i} > \underline{B}_u$, i.e., the node hears the ongoing transmission before finishing its backoff, then Node~$i$'s backoff is frozen in the $(u+1)^{th}$ transmission cycle, and its backoff states are updated as $B_{u+1,i} = B_{u,i}  - \underline{B}_u$, and $S_{u+1,i}=S_{u,i}$.

If, on the other hand, $B_{u,i}=\underline{B}_u$, then Node~$i$ attempts in the $(u+1)^{th}$ transmission cycle, and the set of attempting nodes is updated as $S_{a,u} = S_{a,u}\cup \{i\}$.

\item If $|S_{a,u}|=1$, i.e., exactly one node, namely, Node~$I_u$ attempted in the $(u+1)^{th}$ transmission cycle, then the transmission is successful, and Node~$I_u's$ backoff stage becomes $S_{u+1,I_u}=0$; $B_{u+1,I_u}$ is sampled from a uniform distribution from $\{1, CW_{\min}\}$. The duration of the transmission cycle in this case is \\$B_u + \text{successful transmission duration with overheads}$. See, for example, the transmission cycle $[T_{u+1},T_{u+2}]$ in Figure~\ref{fig:tx-cycle-explain}.

\item If $|S_{a,u}|> 1$, then more than one node attempted in the $(u+1)^{th}$ transmission cycle, resulting in a collision for all the nodes in $S_{a,u}$. For each node~$j\in S_{a,u}$, its backoff stage will be updated as $S_{u+1,j} = (S_{u,j}+1) mod (K+1)$, where $K$ is the maximum allowed number of retransmissions. $B_{u+1,j}$ is sampled uniformly from the contention window corresponding to $S_{u+1,j}$. The duration of the transmission cycle in this case is $B_u + \text{collision duration with overheads}$. See, for example, the transmission cycle $[T_{u},T_{u+1}]$ in Figure~\ref{fig:tx-cycle-explain}. 

Note that this step captures the practical fact that if $K+1$ attempts are reached without success then the HOL packet is discarded, the next packet (in the saturated queue) enters the HOL location, and the backoff state is reset.  
\end{enumerate}

\noindent
\remark
Observe that the Markov renewal model embedded at the epochs $T_u$ is equivalent to the DTMC model (Section~\ref{subsec:markov-model-dcf}) embedded at the backoff slot boundaries in the following sense: for any given system, suppose we simulate the two models starting with the same initial conditions (backoff stages of the nodes), and the same random seed; the same random seed ensures that the backoff sampled by a Node~$i$ after the $k^{th}$ retransmission of its $j^{th}$ packet is the same for both the simulations, for all $i,j,k$. Then, the two models give rise to the same sample path for the system evolution (after reconstructing the original process in unconditional time from the backoff process obtained from the DTMC model). To see this, note that in the DTMC model, in each backoff slot intervening the epochs $T_u$, the nodes do nothing but count down their residual backoffs by 1. The net effect of this countdown process is just an update of the residual backoff count, and the backoff stage of each node at the subsequent $T_u$ epoch. This is incorporated in the Markov renewal model through the update rules for $\{B_{u,i}, S_{u,i}\}_{i=1}^n$.  \hfill \Square

From now on, we shall focus on the Markov renewal model. However, even this model involves an embedded $2n$-dimensional Markov chain, whose state space is, in fact, the same as the DTMC model. The size of the state space is $(W_0+W_1+\cdots+W_K)^n$, which grows \emph{exponentially} with the number of nodes, and is prohibitively large even for $n=2$ for the default protocol parameters of IEEE~802.11b, making an exact analysis of the embedded Markov chain computationally intractable. We, therefore, focus on developing an approximate, parsimonious analysis that still accurately captures the system behavior.

\section{A Parsimonious Simplification of the Markov Renewal Model in Section~\ref{sec:exact_model}}
\label{sec:mrp-state-dependent} 

\subsection{An approximate Markov renewal model for system evolution}
\label{subsec:system-evolution-mrp}
While retaining the embedded Markov process structure at the starts of transmission cycles, we aim to simplify the evolution of the process between these embedding points to reduce the computational complexity. In particular, we aim to avoid the exponential growth of the underlying state space size with the number of nodes. The complexity of the analysis of the detailed process constructed in Section~\ref{sec:exact_model} comes from the complex transition structure, due to the necessity to keep track of the various events, and their timing, between the embedding points. One possible way to simplify the evolution between the embedding instants is to adopt the state independent, Bernoulli attempt process approximation in \cite{bianchi00performance,kumar-etal04new-insights} (see Section~\ref{subsec:bianchi-analysis}). In the context of the Markov renewal model, this amounts to making the following approximation: in each transmission cycle, each node attempts with a constant probability $\beta$ in each slot, conditioned on being in backoff, independent of everything else. Consider the consequence of this approximation on the success processes of the nodes. Observe that under this approximation, the probability that the next successful transmission in the system is due to a particular Node~$j\in\{1,\ldots,n\}$ is $\frac{1}{n}$, \emph{independent of which node made the last successful transmission}. To see this, note that under the constant, state independent attempt rate approximation, the evolution of the process from the last successful transmission onwards does not depend on the node id of the last successful node. 

Let us compare this against observations from simulations. 
\begin{figure}[t]
\begin{center}
\includegraphics[scale=0.35]{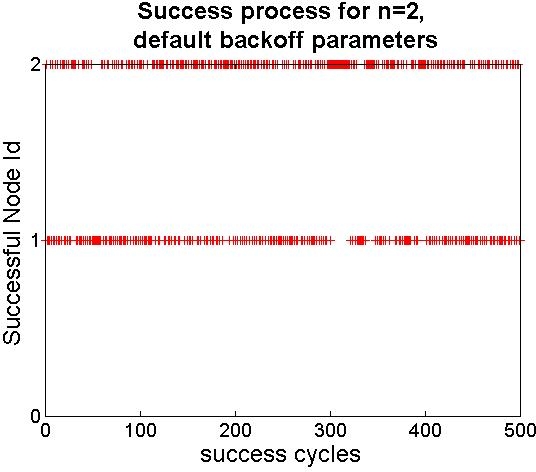}
\hspace{0.1mm}
\includegraphics[scale=0.35]{plots/success_sequence_m0_K1_new_unfairness_example.jpg}
\hspace{0.1mm}
\includegraphics[scale=0.35]{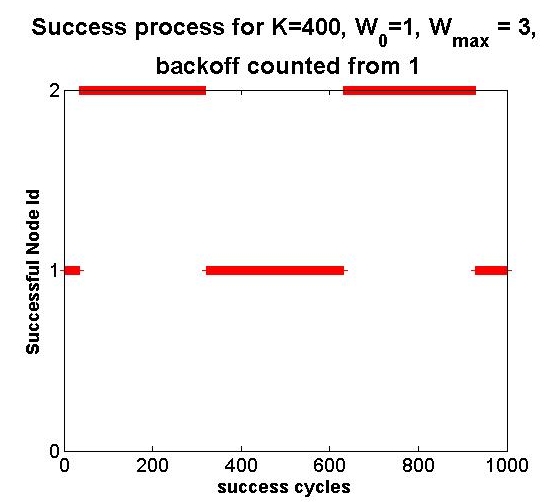}
\caption{Evolution of the success process for a 2-node system over hundreds of successful transmission cycles. (Panel numbers are row-wise from left to right) Panel 1: For default backoff parameters of IEEE~802.11b}. Panel 2: For the backoff sequence in Example~3 (Section~\ref{subsec:example3}). Panel 3: For the backoff sequence in Example~4 (Section~\ref{subsec:example4}).
\label{fig:success-processes}
\vspace{-5mm}
\end{center}
\normalsize
\end{figure} 
Figure~\ref{fig:success-processes} shows the success processes of the nodes for a 2-node system over several hundreds of successful transmission cycles. The plots were obtained in the same manner as Figure~\ref{fig:success-process-example3} in Section~\ref{subsec:example3}. Panel~1 shows the success process for the default backoff parameters of IEEE~802.11b. Panels~2 and 3 show the success processes for the backoff sequences introduced in Examples~3 and 4 respectively in Sections~\ref{subsec:example3} and \ref{subsec:example4}. We observe from Figure~\ref{fig:success-processes} that while the success process for the default parameters of IEEE~802.11b is consistent with the conclusions drawn earlier from the state independent constant attempt rate approximation, those conclusions clearly do not hold for the success processes in Examples~3 and 4, which exhibit significant correlation (short term unfairness). In particular, the burstiness of the success processes indicate that the attempt rates of the nodes are skewed in favor of the node that succeeded last. Thus, the constant, state independent attempt rate approximation will not work in such cases (as we have also seen in Section~\ref{sec:stu-examples}). 

\noindent
\textbf{Accounting for \emph{short term unfairness}:}
Taking cue from this, we adopt the Bernoulli attempt process approximation for the nodes as in \cite{bianchi00performance,kumar-etal04new-insights}, but introduce \emph{state dependent attempt rates, namely, $\bs,\bc$, and $\bd$ to distinguish among three cases: whether a node encountered a success, a collision, or an interruption (of its backoff), respectively, in the previous transmission cycle}. Furthermore, observe that under this approximation, in order to construct the system evolution in a transmission cycle, we need to know the attempt rates of the nodes at the start of the transmission cycle, which, in turn, depend on the number of nodes that attempted in the last cycle, since the nodes that did not attempt (i.e., were interrupted) will attempt at rate $\bd$ in the next cycle, while the nodes that attempted in the last cycle will attempt at rate $\bs$ or $\bc$, depending on whether the last transmission was a success or a collision. Hence, we associate with each epoch $T_u$, a state, $N_u$, \emph{the number of nodes that attempted in the previous cycle}. In the detailed model of Section~\ref{sec:exact_model}, we did not need this state since we kept track of more detailed states, namely, the backoff stage, and the residual backoff of each node, which completely determine the subsequent evolution (including the number of nodes attempting in a transmission cycle). 
 
Our approximations about the node attempt processes are summarized as follows:

\noindent
\textbf{(A1)} Suppose there was a success in the $u^{th}$ transmission cycle. All the nodes start their backoffs from $T_u$. The node that was successful in the previous transmission cycle attempts independently with probability $\beta_s$ in each slot, conditioned on being in backoff. The other nodes, all having been interrupted during their backoffs in the previous cycle, attempt independently with probability $\beta_d$ in each slot, conditioned on being in backoff.\hfill \Square

\noindent 
\textbf{(A2)} Suppose there was a collision involving $N_u$ nodes in the $u^{th}$ transmission cycle. All nodes start their backoffs from $T_u$. $N_u$ of the nodes attempt independently with probability $\beta_c$ in each slot, while the remaining $n-N_u$ nodes attempt independently with probability $\beta_d$ in each slot, all conditioned on being in backoff. \hfill \Square 

\noindent
\remarks

\begin{enumerate}
 \item After a successful transmission in the system, we may expect the residual backoffs of the interrupted nodes to be relatively large compared to the next backoff of the successful node (which samples its backoff from the smallest contention window), especially for backoff sequences that lead to short term unfairness; thus, the attempt rates of the interrupted nodes can be expected to be significantly lower than that of the successful node. This is the rationale behind introducing the attempt rates $\bs$ and $\bd$ to distinguish between the successful node, and the interrupted nodes. 
 \item Following a similar rationale, in case of a collision, we may expect the nodes that were interrupted (did not participate in the collision) to have relatively large residual backoffs compared to the nodes involved in the collision. Also, since after a collision, a node will sample backoff from a larger contention window, its attempt rate after a collision can be expected to be lower than that after a success. Hence we introduce the attempt rate $\bc$ to distinguish the colliding nodes from the interrupted nodes, as well as the successful node.
 \item One can of course, think of more complicated, and perhaps more accurate, approximate models; e.g., we may want to distinguish between the interrupted nodes in case of a success, and the interrupted nodes in case of a collision. That will require two different values of $\bd$, instead of a common value as above. However, this model is significantly harder, and computationally more complex, to analyze than the approximate model we have introduced. Moreover, it turns out that the approximate analysis based on the three-attempt-rates model introduced above predicts the system performance quite accurately even in the presence of high correlation in the system evolution. \hfill \Square  
\end{enumerate}

\noindent
\textbf{A simple Markov renewal process model for the system:}
With these approximations, observe that the process $\{N_u, T_u\}$, is a Markov renewal process (MRP), the state space of the embedded Markov chain being $\{1,\ldots,n\}$. 

\subsection{Analysis of the MRP, given $\beta_c, \beta_d$, and $\beta_s$}
\label{subsec:mrp-analysis-given-beta}

The Markov renewal process model has $n$ as a parameter, and requires the quantities $\beta_c, \beta_d$, and $\beta_s$, which are not known a priori. We shall explain how to compute $\beta_c, \beta_d$, and $\beta_s$ in Section~\ref{subsec:tagged-node-evolution}. Given $\beta_c, \beta_d$, and $\beta_s$, let $P$ be the transition probability matrix of the embedded Markov chain. We now proceed to write down the transition probabilities. We use the shorthand $p(n_a,n^\prime_a)$ to denote the probability $Pr[N_{u+1}=n^\prime_a|N_u = n_a]$. 

\noindent
\emph{Computation of $p(n_a,n^\prime_a)$}:

Define the sets $F(n_a,n^\prime_a) = \{(i,j): 0\leq i\leq n_a, 0\leq j\leq n-n_a, i+j = n^\prime_a\}$ for all $n_a, n^\prime_a\in\{1,\ldots,n\}$. Also define

\begin{align}
 q(n_a,n^\prime_a) &= \sum_{(i,j)\in F(n_a,n^\prime_a)}\binom{n_a}{i}\binom{n-n_a}{j}\bx^i(1-\bx)^{n_a-i}\bd^j(1-\bd)^{n-n_a-j}\label{eqn:q_na_na_prime}
\end{align}
where $\bx=\bs$ if $n_a=1$, and $\bx=\bc$, if $n_a > 1$. 

Observe that given the information that $n_a$ nodes are attempting at rate $\bx$, and remaining $(n-n_a)$ nodes are attempting at rate $\bd$, $q(n_a,n^\prime_a)$ is the probability that $n_a^\prime$ nodes attempt together in a backoff slot, while the remaining $(n-n_a^\prime)$ nodes remain silent. 

Then we can write

\begin{align}
 p(n_a,n^\prime_a) &= (1-\bx)^{n_a}(1-\bd)^{n-n_a}p(n_a,n^\prime_a) + q(n_a,n^\prime_a)\nonumber\\
\end{align}
Here, the first term corresponds to the event that none of the nodes attempt in the first backoff slot; in this case, due to the assumption of Bernoulli attempt processes, the system encounters a renewal with state $n_a$, and the conditional probability (given that none of the nodes attempted in the first slot) of the next state being $n^\prime_a$ remains $p(n_a,n^\prime_a)$. Thus we have

\begin{align}
 p(n_a,n^\prime_a) &= \frac{q(n_a,n^\prime_a)}{1-(1-\bx)^{n_a}(1-\bd)^{n-n_a}}\label{eqn:p_na_n_prime}
\end{align}
where $\bx=\bs$ if $n_a=1$, and $\bx=\bc$, if $n_a > 1$.
 
From the above transition probability structure, it is easy to observe that for positive attempt rates, the embedded DTMC is finite, irreducible, and hence, \emph{positive recurrent}. Let $\pi$ denote the stationary distribution of this DTMC, which can be obtained as the unique solution to the system of equations $\pi=\pi P$, subject to $\displaystyle{\sum_{n_a\in\{1,2,\ldots,n\}}}\pi(n_a)=1$. 

\subsubsection{Obtaining the collision probability, $\gamma$}
\label{subsubsec:gamma_general_n}
By symmetry, the long run average collision probability for all the nodes is the same, which we denote by $\gamma$. It is defined as 
\begin{equation*}
 \gamma = \lim_{t\to\infty}\frac{C_i(t)}{A_i(t)},\:i=1,2,\ldots,n
\end{equation*}
where, $C_i(t)$ and $A_i(t)$ denote respectively, the number of collisions and the number of attempts by Node~$i$ until time $t$. Denoting $C(t)\define\sum_{i=1}^n C_i(t)$, the total number of collisions in the system until time $t$, and $A(t)\define\sum_{i=1}^n A_i(t)$, the total number of attempts in the system until time $t$, it is also easy to observe (by noting that the long run time-average collision rates, and the long run time-average attempt rates of all the nodes are equal by symmetry) that 
\begin{equation*}
 \gamma = \lim_{t\to\infty}\frac{C(t)}{A(t)}
\end{equation*}

Denote by $\Ccal$ and $\Acal$, respectively, the random variables representing the number of collisions, and the number of attempts in the system in a transmission cycle. Then, using Markov regenerative theory (see, for example, \cite{kulkarni95modeling-stochastic-systems}), we have
\begin{equation}
\gamma = \frac{\sum_{n_a=1}^n\pi(n_a)E\Ccal(n_a)}{\sum_{n_a=1}^n\pi(n_a)E\Acal(n_a)}\: a.s\label{eqn:gamma-expression-general-n} 
\end{equation}
where, $E\Ccal(n_a)$ and $E\Acal(n_a)$ denote respectively, the expected number of collisions, and attempts in the system in a transmission cycle starting with state $n_a$, and can be computed by using renewal arguments similar to those used for obtaining the transition probabilities earlier, and observing that every collision event involving $n^\prime_a$ nodes results in $n^\prime_a$ collisions (and involves $n^\prime_a$ attempts, one from each node), and every success event involves 1 attempt (from the successful node). We have, for all $n_a = 1,2,\ldots, n$,

\begin{align}
E\Ccal(n_a) &= \sum_{n^\prime_a=2}^n p(n_a,n^\prime_a)n^\prime_a\\
E\Acal(n_a) &= \sum_{n^\prime_a=1}^n p(n_a,n^\prime_a)n^\prime_a
\end{align}
where, $\bx=\bs$ if $n_a=1$, and $\bx=\bc$, if $n_a > 1$.

\subsubsection{Obtaining the normalized system throughput, $\Theta$}
\label{subsubsec:throughput_general_n}
The normalized system throughput is defined as
\begin{equation*}
 \Theta = \lim_{t\to\infty}\frac{T(t)}{t}
\end{equation*}
where $T(t)$ is the total successful data transmission duration without overheads until time $t$. 

Denote by $\Tcal$, the random variable representing the duration of successful data transmission excluding overheads in a transmission cycle. Then, by Markov regenerative theory, we have

\begin{equation}
 \Theta = \frac{\sum_{n_a=1}^n\pi(n_a)E\Tcal(n_a)}{\sum_{n_a=1}^n\pi(n_a)EX(n_a)}\: a.s\label{eqn:theta-expression-general-n}
\end{equation}
where, $E\Tcal(n_a)$ and $EX(n_a)$ are, respectively, the mean duration of successful data transmission excluding overheads, and the mean duration of the transmission cycle when the transmission cycle starts in state $n_a$. We can write down the expressions for $E\Tcal(\cdot)$ and $EX(\cdot)$ using renewal arguments similar to those given earlier as follows.

\begin{align}
E\Tcal(n_a) &= \frac{q(n_a,1)T_d}{1-(1-\bx)^{n_a}(1-\bd)^{n-n_a}}\\
EX(n_a) &= \frac{1+q(n_a,1)T_s+\sum_{n^\prime_a=2}^n q(n_a,n^\prime_a)T_c}{1-(1-\bx)^{n_a}(1-\bd)^{n-n_a}}
\end{align}
for all $n_a=1,\ldots,n$. As before, $\bx=\bs$ if $n_a=1$, and $\bx=\bc$, if $n_a > 1$. Also, $T_s$ is the time duration in a successful transmission cycle from the start of the data transmission in the medium until the time the medium is idle again, and the nodes start counting their backoffs (i.e., until the start of the next transmission cycle), and is given by
\begin{align*}
 T_s &= T_d + ACK + 2\times PHY\:\: HDR + 2T_o + SIFS + DIFS
\end{align*}
and $T_c$ is the time duration in a collision transmission cycle from the start of the first data transmission in the medium until the time the nodes start counting their backoffs (i.e., until the start of the next transmission cycle), and is given by
\begin{align*}
 T_c &= T_d + PHY\:\: HDR + T_o + SIFS + DIFS
\end{align*}
In the above expressions, $T_o$ denotes the Rx-to-Tx turnaround time. 

This completes the analysis of the system evolution, given $\bs$, $\bd$, $\bc$. 

\noindent
\remark
Until this point, what has been shown is the procedure to get the performance measures if the attempt rates, $\bs,\bc,\bd$ are given. It is an interesting exercise to relate this to what was done in the well known Bianchi analysis (Section~\ref{subsec:bianchi-analysis}). Indeed, if we set $\bs=\bc=\bd=\beta$, i.e., a state independent, constant attempt rate, we get back from Equation~\ref{eqn:gamma-expression-general-n}, the collision probability as $\gamma = 1-(1-\beta)^{n-1}$, i.e., the same expression as in the Bianchi analysis (Equation~\ref{eqn:Gamma_beta_binomial} in Section~\ref{sec:fixed_point_equation}). 

To see this, observe that under the state independent, constant attempt rate approximation, the transition probabilities $p(n_a,n^\prime_a)$ are independent of $n_a$, so that for all $n_a=1,\ldots,n$, and $n_a^\prime=1,\ldots,n$, $p(n_a,n^\prime_a)=\pi(n_a^\prime)$, the stationary probabilities of the embedded Markov chain, and these are given by $\pi(n_a^\prime)=\frac{\binom{n}{n_a^\prime}\beta^{n_a^\prime}(1-\beta)^{n-n_a^\prime}}{1-(1-\beta)^n}$. Thus, the expectations $E\Ccal(n_a)$ and $E\Acal(n_a)$ are independent of $n_a$, and are given by 
\begin{align}
 E\Ccal &= \sum_{n^\prime_a=2}^n \pi(n_a^\prime)n^\prime_a\\
E\Acal &= \sum_{n^\prime_a=1}^n \pi(n_a^\prime)n^\prime_a
\end{align}
Using these along with the expression for $\pi(n_a^\prime)$ in Equation~\ref{eqn:gamma-expression-general-n} yields 
\begin{align}
 \gamma &= \frac{\sum_{n^\prime_a=2}^n n^\prime_a \binom{n}{n_a^\prime}\beta^{n_a^\prime}(1-\beta)^{n-n_a^\prime}}{\sum_{n^\prime_a=1}^n n^\prime_a \binom{n}{n_a^\prime}\beta^{n_a^\prime}(1-\beta)^{n-n_a^\prime}}
\end{align}
Observing that the denominator is simply the expectation of a Binomial distribution with parameters $n$ and $\beta$, and the numerator lags behind the same expectation by just one term (that corresponding to $n^\prime_a=1$), we have
\begin{align}
 \gamma &= \frac{n\beta - n\beta(1-\beta)^{n-1}}{n\beta}\nonumber\\
&= 1-(1-\beta)^{n-1}
\end{align}
The same as in the Bianchi analysis. Thus, our analysis can indeed be viewed as a generalization of the Bianchi analysis with \emph{state dependent} attempt rates. \hfill \Square
 
It remains to obtain the state dependent attempt rates $\bs$, $\bd$, $\bc$. To do this, we focus on the evolution at a tagged node as described next.

\subsection{Analysis for determining $\beta_c, \beta_d$, and $\beta_s$}
\label{subsec:tagged-node-evolution}
Here we shall set up a system of fixed point equations in $\bc$, $\bd$, and $\bs$ by modeling the evolution at a tagged node. This can, in turn, be solved iteratively to yield the rates. This step is analogous to the fixed point equation ($\beta=G(\Gamma(\beta))$) in the analysis in \cite{bianchi00performance,kumar-etal04new-insights}.
\begin{figure}[ht]
\begin{center}
\includegraphics[scale=0.4]{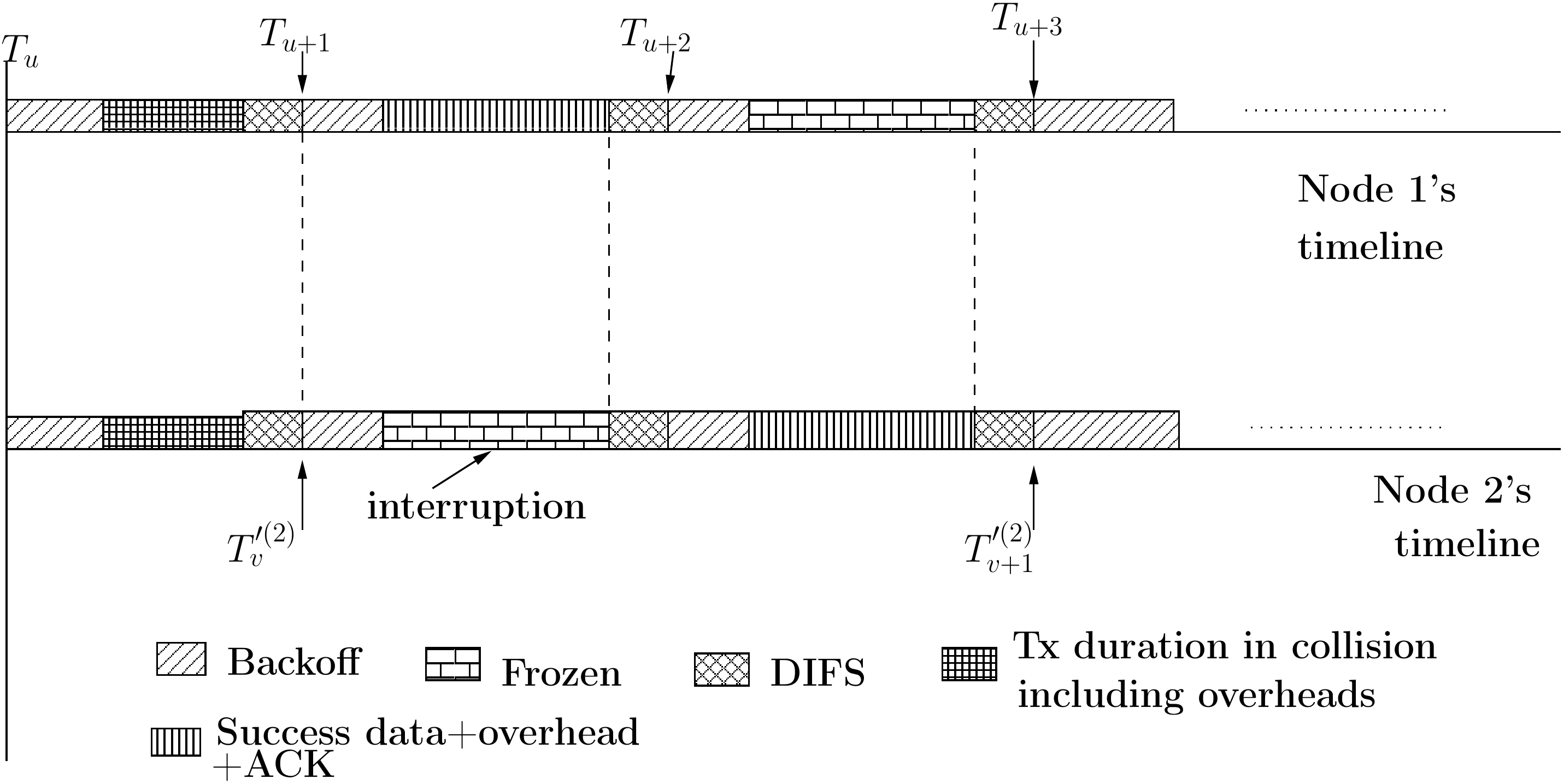}
\caption{\textbf{Backoff Cycles} for a tagged node, Node~2 in this case. The two timelines demonstrate the system evolution in unconditional time over three consecutive transmission cycles, with $T_u$,\ldots, $T_{u+3}$ being the start and end points of the transmission cycles. Denote by $T^{\prime (i)}_v$, \emph{the start of the transmission cycle following the $v^{th}$ transmission by the tagged node, $i$}, Node~2 in this example. The interval $[T^{\prime (2)}_v,T^{\prime (2)}_{v+1}]$ is called a \emph{backoff cycle} of Node~2, since in this interval, Node~2 completes one full backoff. Note that the tagged node can have \emph{exactly one attempt (backoff completion)}, and several intermediate backoff \emph{interruptions} in a backoff cycle. During each system transmission cycle $[T_u, T_{u+1}]$, any node can have at most one backoff segment. The backoff chosen at the start of a tagged node's backoff cycle is thus partitioned into several backoff segments over a random number of system transmission cycles during the tagged node's backoff cycle. Hence, a backoff cycle can encompass several transmission cycles during which the tagged node was interrupted (i.e., did not attempt).}
\label{fig:bo-cycle-explain}
\end{center}
\normalsize
\end{figure}
We consider the evolution of the process at the tagged node, say Node~$i$, and identify embedding instants $T^{\prime (i)}_v$ in this process as explained in Figure~\ref{fig:bo-cycle-explain}, where the transmission cycle break-points $T_u,\ldots$ are shown, along with the epochs $T^{\prime (2)}_v\ldots$ for Node~2 (the tagged node). After each such epoch, the tagged node samples a new backoff, using its current backoff stage $S_v$. We associate with each $T^{\prime (i)}_v$, two states: (i) $S_v\in\{0,1,\ldots,K\}$, Node~$i's$ new backoff stage, (ii) $N_v\in\{1,\ldots,n\}$, number of nodes (including the tagged Node~$i$) that attempted in the just concluded \emph{transmission cycle}.   

Notice from Figure~\ref{fig:bo-cycle-explain} that \emph{transmission cycles} are common to the entire system, whereas \emph{backoff cycles} are defined for each node. Each backoff cycle of a node \emph{comprises one or more transmission cycles of the system}. \emph{The backoff cycle of a tagged node can comprise several successful transmissions and/or collisions by the other nodes}, and ends at the end of a transmission cycle in which the tagged node transmits. 

We make the following additional approximations.

\noindent
 \textbf{(A3)} Node~$i$ samples its successive back-offs from a uniform distribution, as in the standard. When a new backoff cycle starts for Node~$i$ after a successful transmission, the other nodes, conditioned on being in backoff, attempt independently in each slot with probability $\bd$ \emph{until the end of the first transmission cycle within this backoff cycle}. If the new backoff cycle for Node~$i$ starts after a collision involving $N_v$ nodes (including Node~$i$), then $N_v-1$ of the nodes, conditioned on being in backoff, attempt independently in each slot with probability $\bc$, and the remaining $n-N_v$ nodes, conditioned on being in backoff, attempt independently in each slot with probability $\bd$ \emph{until the end of the first transmission cycle within this backoff cycle}.\hfill \Square

\noindent
\textbf{(A4)} If Node~$i$ is interrupted within a backoff cycle due to attempts by $n_a$ other nodes ($1\leq n_a\leq n-1$), thus freezing its backoff (see Figure~\ref{fig:bo-cycle-explain}), then in the next transmission cycle within this backoff cycle, Node~$i$ resumes its residual backoff countdown, all the $n-1-n_a$ nodes (excluding Node~$i$) that did not attempt in the previous transmission cycle attempt independently in each slot with probability $\bd$, conditioned on being in backoff, while the $n_a$ nodes that attempted in the previous transmission cycle attempt with probability $\beta_c$ or $\beta_s$ (depending on whether the previous transmission cycle ended in collision or success, i.e., whether $n_a > 1$ or $n_a=1$) in each slot, conditioned on being in backoff.\hfill \Square

\noindent
\remark
Note that we assumed Bernoulli attempt processes for \emph{all} nodes in obtaining the performance measures in the previous subsection, whereas to obtain the attempt rates, we now \emph{retain the standard uniform backoff process for the tagged node}; this approach is akin to the modeling in Bianchi's work \cite{bianchi00performance}.\hfill \Square
%

Under assumptions~(A3)-(A4), observe that the process $\{(S_v,N_v), T^{\prime (i)}_v\}$ is a Markov Renewal process (MRP), with the state space of the embedded Markov chain being $\{0,\ldots,K\}\times\{1,\ldots,n\}$. 

It can be shown that the embedded Markov chain has a unique stationary distribution, denoted by $\psi$. We defer the detailed derivation of this stationary distribution to the Appendix. We discuss next, how we can compute the attempt rates $\bd,\bc$ and $\bs$ given $\psi$. 
 
Recall that $\bs$ and $\bc$ are the mean attempt rates of a node in a transmission cycle after it resumes backoff following a succeessful transmission, and a collision, respectively, while $\bd$ is the mean attempt rate of a node in a transmission cycle after it resumes backoff following an interruption. Thus, observe that in a backoff cycle of a tagged node, the contributions to $\bs$ and $\bc$ come from only the first transmission cycle within the backoff cycle, whereas the remainder (if any) of the backoff cycle contributes towards $\bd$. 

\subsubsection{Computation of $\bd$}
\label{subsubsec:bd-general-n} 

Looking at the backoff evolution of the tagged Node~$i$ (see Figure~\ref{fig:bo-cycle-explain}), we can define $\bd$ more formally as

\begin{equation*} 
\bd = \lim_{t\to\infty}\frac{\sum_{k=1}^{N(t)}\ind_{\{\text{Node~$i$ interrupted in backoff cycle $k$}\}}}{\sum_{k=1}^{N(t)}B_{r,k}}
\end{equation*}
where, $N(t)$ is the number of backoff cycles until time $t$, and $B_{r,k}$ is the \emph{residual backoff} to be counted by Node~$i$ from the point of first interruption until its backoff completion in backoff cycle $k$ provided that it was interrupted; $B_{r,k}=0$ if Node~$i$ was not interrupted in backoff cycle $k$. It suffices to count the residual backoff from first interruption to backoff completion since the node does not sample any fresh backoff in between, and any intermediate interruption will find the node counting parts of the same residual backoff. Thus, the denominator is the total residual backoff counted by Node~$i$ until time $t$ \emph{after being interrupted}. The numerator is the total number of attempts made by Node~$i$ until time $t$ upon completion of its residual backoff countdown after interruptions. Note that by our definition of backoff cycles, each backoff cycle must end with an attempt by Node~$i$; the indicator function simply tracks whether the attempt followed an interruption or not. 

Denote by $\Bcal_r$, the random variable representing the residual backoff counted by Node~$i$ from the point of first interruption until its backoff completion in a backoff cycle. Then, by Markov Regenerative theory, 

\begin{equation}
 \bd = \frac{\sum_{(s,n_a)}\psi(s,n_a)P_I(s,n_a)}{\sum_{(s,n_a)}\psi(s,n_a)E\Bcal_r(s,n_a)}\: a.s\label{eqn:bd-expression-general-n}
\end{equation}
where, $P_{I}(s,n_a)$ is the probability that Node~$i$ is interrupted when the backoff completion cycle starts in state $(s,n_a)$, and $E\Bcal_r(s,n_a)$ is the mean residual backoff counted by Node~$i$ from its first interruption until its backoff completion in a backoff cycle that started with state $(s,n_a)$; they can be computed as follows.

\noindent

\emph{Computation of $P_{I}(\cdot,\cdot)$:}

\noindent
When the backoff cycle starts in state $(s,n_a)$, we know from (A3) that during the first transmission cycle within this backoff cycle, $(n_a-1)$ nodes will attempt w.p. $\bc$ in each slot, and the remaining $(n-n_a)$ nodes (that did not attempt in the previous cycle) will attempt w.p. $\bd$ in each slot. Suppose Node~$i$ samples a backoff of $l$ slots uniformly from $[1,W_s]$. Then, Node~$i$ will be interrupted if at least one of the other nodes attempts within the first $(l-1)$ slots. This happens with probability $1-((1-\bc)^{n_a-1}(1-\bd)^{n-n_a})^{l-1}$. Thus, we have

\begin{align}
 P_I(s,n_a) &=\frac{1}{W_s}\sum_{l=1}^{W_s}\bigg[1-((1-\bc)^{n_a-1}(1-\bd)^{n-n_a})^{l-1}\bigg] \label{eqn:P-I-s-na}
\end{align}
for all $s\in\{0,\ldots,K\}$, $n_a\in\{1,\ldots,n\}$. 

\noindent
\emph{Computation of $E\Bcal_r(s,n_a)$:}

Consider a backoff cycle starting with state $(s,n_a)$. Suppose Node~$i$ samples (uniformly from $\{1,1,\ldots, W_s\}$) a backoff of $l$ slots. As was explained earlier, to interrupt Node~$i$, at least one other node must make an attempt by slot $l-1$. Suppose one or more of the other nodes make an attempt at slot $w$, $1\leq w\leq l-1$; this happens with probability $((1-\bc)^{n_a-1}(1-\bd)^{n-n_a})^{w-1}(1-(1-\bc)^{n_a-1}(1-\bd)^{n-n_a})$. In this case, the residual backoff of Node~$i$ is $(l-w)$. Thus, we have, for any $s\in\{0,\ldots,K\}$, and $n_a\in\{1,\ldots,n\}$,
\begin{align}
 E\Bcal_r(s,n_a) &= \frac{1}{W_s}\sum_{l=1}^{W_s}\sum_{w=1}^{l-1}(l-w)\times((1-\bc)^{n_a-1}(1-\bd)^{n-n_a})^{w-1}\nonumber\\
&\times(1-(1-\bc)^{n_a-1}(1-\bd)^{n-n_a})\label{eqn:EB-r-s-na}
\end{align}

\subsubsection{Computation of $\bs$}
\label{subsubsec:bs-general-n}
Looking at the backoff evolution of the tagged Node~$i$, we can define $\bs$ more formally as
\begin{equation*}
 \bs = \lim_{t\to\infty}\frac{\sum_{k=1}^{N_s(t)}\ind_{\{\text{Node~$i$ was not interrupted in backoff cycle $k$}\}}}{\sum_{k=1}^{N_s(t)}B_{s,k}}
\end{equation*}
 where, $N_s(t)$ is the number of backoff cycles until time $t$ that start with the state $(0,1)$ (implying that Node~$i$ was successful in the previous transmission cycle), and $B_{s,k}$ is the backoff counted by Node~$i$ \emph{in the transmission cycle that started along with backoff cycle $k$}; in other words, $B_{s,k}$ is the backoff counted by Node~$i$ until it gets interrupted, or completes its backoff, whichever is earlier. Thus, the denominator is the total backoff counted by Node~$i$ until time $t$, in those transmission cycles that followed a successful transmission by Node~$i$. Similarly, the numerator is the total number of attempts by Node~$i$ until time $t$ in those transmission cycles that followed a successful transmission by Node~$i$. 

Denote by $\Bcal_s$, the random variable representing the backoff counted by Node~$i$ in the first transmission cycle within a backoff cycle starting in state $(0,1)$. Then, by Markov regenerative theory, it follows that

\begin{equation}
 \bs = \frac{1-P_I(0,1)}{E\Bcal_s(0,1)} \: a.s.\label{eqn:bs-expression-general-n}
\end{equation}
where, $E\Bcal_s(0,1)$ is the mean time spent in backoff by Node~$i$ until it gets interrupted, or completes its backoff in the backoff cycle starting in state $(0,1)$, and can be computed as follows.

Suppose Node~$i$ samples (uniformly from $\{1,\ldots, W_0\}$) a backoff of $l$ slots. To interrupt Node~$i$, at least one of the other nodes must attempt within slot $(l-1)$. Now there are two possibilities:
\begin{enumerate}
 \item None of the other nodes attempt up to slot $(l-1)$. Then Node~$i$ does not get interrupted, and its backoff count is $l$. This happens with probability $(1-\bd)^{(n-1)(l-1)}$.
 \item One or more of the other nodes attempt at slot $w$, $1\leq w\leq l-1$. Then, Node~$i$ is interrupted, and its backoff counted until interruption is $w$. This happens with probability $(1-\bd)^{(n-1)(w-1)}(1-(1-\bd)^{n-1})$. 
\end{enumerate}

Combining all of these together,
\begin{align}
 E\Bcal_s(0,1) &= \frac{1}{W_0}\sum_{l=1}^{W_0}\bigg[(1-\bd)^{(n-1)(l-1)}l\nonumber\\
&+\sum_{w=1}^{l-1}w(1-\bd)^{(n-1)(w-1)}(1-(1-\bd)^{n-1})\bigg]\label{eqn:EB-s-general-n}
\end{align}

\subsubsection{Computation of $\bc$}
\label{subsubsec:bc-general-n}

Looking at the backoff evolution of the tagged Node~$i$, we can define $\bc$ more formally as
\begin{equation*}
 \bc = \lim_{t\to\infty}\frac{\sum_{k=1}^{N_c(t)}\ind_{\{\text{Node~$i$ was not interrupted in backoff cycle $k$}\}}}{\sum_{k=1}^{N_c(t)}B_{c,k}}
\end{equation*}
 where, $N_c(t)$ is the number of backoff cycles until time $t$ that start with states other than $(0,1)$ (implying that Node~$i$ encountered a collision in the previous transmission cycle), and $B_{c,k}$ is defined as the backoff counted by Node~$i$ \emph{in the transmission cycle that started along with backoff cycle $k$}; in other words, $B_{c,k}$ is the backoff counted by Node~$i$ until it gets interrupted, or completes its backoff, whichever is earlier. Thus, the denominator is the total backoff counted by Node~$i$ until time $t$, in those transmission cycles that followed a collision by Node~$i$. Similarly, the numerator is the total number of attempts by Node~$i$ until time $t$ in those transmission cycles that followed a collision by Node~$i$. 

Denote by $\Bcal_c$, the random variable representing the backoff counted by Node~$i$ in the first transmission cycle following a collision involving Node~$i$. Then, by Markov regenerative theory, it follows that
\begin{equation}
 \bc = \frac{\sum_{(s,n_a)\neq (0,1)}\psi(s,n_a)(1-P_I(s,n_a))}{\sum_{(s,n_a)\neq (0,1)}\psi(s,n_a)E\Bcal_c(s,n_a)}\: a.s\label{eqn:bc-expression-general-n}
\end{equation}
where, $E\Bcal_c(s,n_a)$ is the mean time spent in backoff by Node~$i$ until it gets interrupted, or completes its backoff in the backoff cycle starting in state $(s,n_a)$, and can be computed as follows.

Suppose Node~$i$ samples (uniformly from $\{1,\ldots, W_s\}$) a backoff of $l$ slots. As explained earlier, to interrupt Node~$i$, at least one of the other nodes must make an attempt by slot $l-1$. Now, there are two possibilities:
\begin{enumerate}
 \item None of the other nodes attempt up to slot $(l-1)$. Node~$i$ does not get interrupted, and its backoff count is $l$. This happens with probability $((1-\bc)^{n_a-1}(1-\bd)^{n-n_a})^{l-1}$. 
 \item One or more of the other nodes attempt at slot $w$, $1\leq w\leq (l-1)$. Then, Node~$i$ is interrupted, and its backoff count until interruption is $w$. This happens with probability $((1-\bc)^{n_a-1}(1-\bd)^{n-n_a})^{w-1}(1-(1-\bc)^{n_a-1}(1-\bd)^{n-n_a})$.
\end{enumerate}

Combining these together, we have, for any $n_a\in\{2,\ldots,n\}$, and any $s\in\{0,\ldots,K\}$,
\begin{align}
 E\Bcal_c(s,n_a) &= \frac{1}{W_s}\sum_{l=1}^{W_s}\bigg[l((1-\bc)^{n_a-1}(1-\bd)^{n-n_a})^{l-1}\nonumber\\
&+ \sum_{w=1}^{l-1}w((1-\bc)^{n_a-1}(1-\bd)^{n-n_a})^{w-1}\nonumber\\
&\times(1-(1-\bc)^{n_a-1}(1-\bd)^{n-n_a})\bigg]\label{eqn:EB-c-s-na}
\end{align}

Equations~\ref{eqn:bd-expression-general-n}-\ref{eqn:EB-c-s-na} along with the expressions for the stationary probabilities $\psi(s,n_a)$ (derived in the Appendix) form a system of vector fixed point equations in $(\bd,\bc)$ (observe from Eqns.~\ref{eqn:P-I-s-na} and \ref{eqn:EB-s-general-n} that $\bs$ is a deterministic function of $\bd$ alone), which can be solved using an iterative procedure until convergence to obtain the attempt rates $\bd$, $\bs$, and $\bc$.

\subsubsection{Computation of the average attempt rate, $\beta$, over all backoff time}
\label{subsubsec:beta-general-n}
The backoff cycle analysis can be used to obtain the long run average attempt rate, $\beta$, averaged over all backoff time (irrespective of system state). This is the quantity that was used in the fixed point approximation proposed in \cite{bianchi00performance,kumar-etal04new-insights}; see Section~\ref{subsec:bianchi-analysis}. 

To obtain $\beta$, note that each backoff cycle contains exactly one attempt by the tagged node, and the backoff counted by the tagged node in the entire backoff cycle contributes towards $\beta$. In a backoff cycle starting in state $(s,n_a)$, the mean backoff counted by the tagged node is clearly $(W_s+1)/2$. Thus, using Markov regenerative analysis, we have
\begin{equation}
 \beta = \frac{1}{\sum_{(s,n_a)}\psi(s,n_a)\frac{W_s+1}{2}}\label{eqn:beta-wifi-general-n}
\end{equation}

\subsection{Discussion on the existence and uniqueness of the fixed point}
\label{subsec:existence}
\begin{theorem}
\label{thm:existence-general-n}
There exists a fixed point for the system of equations~\ref{eqn:bd-expression-general-n}-\ref{eqn:EB-c-s-na} in the set $\mathbf{C}=[1/W_K,1]\times[1/W_K,1]$. 
\end{theorem}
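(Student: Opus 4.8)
The plan is to read the system \eqref{eqn:bd-expression-general-n}--\eqref{eqn:EB-c-s-na} as a fixed-point condition for a single-valued self-map of $\mathbf{C}$ and invoke Brouwer's fixed point theorem. Concretely, define $T:\mathbf{C}\to\mathbb{R}^2$ by $T(\bd,\bc)=(\bd',\bc')$, where $\bd'$ and $\bc'$ are the right-hand sides of \eqref{eqn:bd-expression-general-n} and \eqref{eqn:bc-expression-general-n}: one first sets $\bs=\bs(\bd)$ via \eqref{eqn:bs-expression-general-n} (recall $\bs$ is a deterministic function of $\bd$ alone), then forms the backoff-cycle embedded chain and its stationary law $\psi=\psi(\bd,\bc)$, and finally evaluates $P_I,E\Bcal_r,E\Bcal_c$ at these rates. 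A fixed point of $T$ in $\mathbf{C}$ is exactly a solution of the system. Since $\mathbf{C}=[1/W_K,1]^2$ is compact and convex, it suffices to show that $T$ is well defined, continuous, and maps $\mathbf{C}$ into itself.

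First I would dispose of continuity and well-definedness. The quantities $P_I(\cdot,\cdot),E\Bcal_r(\cdot,\cdot),E\Bcal_c(\cdot,\cdot),E\Bcal_s(0,1)$ are finite sums of products of the continuous maps $(\bd,\bc)\mapsto(1-\bc)^{n_a-1}(1-\bd)^{n-n_a}$, hence continuous, and $\bs(\bd)$ is continuous because its denominator $E\Bcal_s(0,1)\ge1$ is bounded away from $0$. Throughout $\mathbf{C}$ the rates are at least $1/W_K>0$, so the per-slot ``nobody attempts'' probabilities stay uniformly below $1$; thus the embedded chain (derived in the Appendix) remains finite and irreducible and its transition probabilities are continuous. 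A finite irreducible chain has a unique stationary law depending continuously (indeed rationally) on its transition matrix, so $\psi(\bd,\bc)$ is continuous and strictly positive on every state. The denominators of $T$ are then strictly positive: $\sum_{(s,n_a)\ne(0,1)}\psi\,E\Bcal_c\ge 1-\psi(0,1)>0$ since $E\Bcal_c\ge1$ and there are at least two states, while $\sum_{(s,n_a)}\psi\,E\Bcal_r>0$ because in the nontrivial case $W_K\ge2$ some stage has $W_s\ge2$ with positive interruption probability, giving $E\Bcal_r>0$ there against $\psi>0$.

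The core step is the invariance $T(\mathbf{C})\subseteq\mathbf{C}$, which I would reduce to four pointwise bounds and then pass through the ratio formulas by dividing $\psi$-weighted sums. For $\bd'\le1$: when Node~$i$ is interrupted its residual is $\ge1$, so $E\Bcal_r\ge P_I$. For $\bd'\ge1/W_K$: that residual is $\le W_s-1\le W_K-1$, so $E\Bcal_r\le(W_K-1)P_I$. For $\bc'\le1$: the backoff counted in the first cycle is at least one slot, so $E\Bcal_c\ge1\ge1-P_I$. The delicate bound is $\bc'\ge1/W_K$, i.e.\ $E\Bcal_c\le W_K(1-P_I)$. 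Here I would write the counted backoff as $\min(l,\tau)$, where $\tau$ is the first slot in which some other node attempts and $\prob{\tau\ge j}=\rho^{\,j-1}$ with $\rho=(1-\bc)^{n_a-1}(1-\bd)^{n-n_a}$; the tail-sum identity gives $E[\min(l,\tau)]=\sum_{j=1}^{l}\rho^{\,j-1}$, whence $E\Bcal_c=\frac{1}{W_s}\sum_{j=1}^{W_s}(W_s-j+1)\rho^{\,j-1}$ and $W_K(1-P_I)=\frac{W_K}{W_s}\sum_{j=1}^{W_s}\rho^{\,j-1}$, and a term-by-term comparison using $W_s-j+1\le W_s\le W_K$ finishes it. Weighting by $\psi$ and dividing yields $\bd',\bc'\in[1/W_K,1]$, and Brouwer's theorem supplies the fixed point.

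I expect two obstacles. The first is the continuity and strict positivity of $\psi$ as the rates vary: this is standard for a fixed finite irreducible chain, but one must check that irreducibility never degenerates on $\mathbf{C}$, which is precisely where the lower bound $1/W_K$ on the rates is used. The second is the lower bound $\bc'\ge1/W_K$ (and likewise $\bd'\ge1/W_K$), whose clean proof rests on the $\min(l,\tau)$ representation and the telescoped form of $E\Bcal_c$; the term-by-term comparison also quietly uses $W_s\le W_K$ for all $s$, which holds for the non-decreasing window sequences under consideration. With these in hand the Brouwer argument is routine.
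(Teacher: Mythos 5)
Your proposal is correct and reaches the same conclusion through the same skeleton as the paper (continuity of the map on the compact convex set $\mathbf{C}$, invariance $T(\mathbf{C})\subseteq\mathbf{C}$, then Brouwer), but the key invariance step is argued quite differently. The paper establishes $\bd^{(1)},\bc^{(1)}\in[1/W_K,1]$ by a sample-path argument: it identifies the right-hand sides of Eqns.~\ref{eqn:bd-expression-general-n} and \ref{eqn:bc-expression-general-n} with almost-sure limits of ratios of the form $N_I(t)/\sum_{k=1}^{N_I(t)}B_{r,k}$ in a simulation of the tagged-node process obeying (A3)--(A4), and then uses the pathwise bound $1\leq B_{r,k}\leq W_K$ on each summand. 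You instead prove the pointwise analytic inequalities $P_I\leq E\Bcal_r\leq (W_K-1)P_I$ and $1-P_I\leq E\Bcal_c\leq W_K(1-P_I)$ directly from Eqns.~\ref{eqn:P-I-s-na}, \ref{eqn:EB-r-s-na} and \ref{eqn:EB-c-s-na}, the last via the $\min(l,\tau)$ representation and the tail-sum identity, and pass these through the $\psi$-weighted ratios. The two arguments encode the same underlying fact (residual and counted backoffs lie in $[1,W_K]$), but yours is purely algebraic on the stated formulas, avoids invoking the Markov-regenerative a.s.\ identification as an intermediate step, and incidentally yields the slightly tighter containment $\bd'\in[1/(W_K-1),1]$. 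You are also more explicit than the paper about two points it glosses over: the continuity and strict positivity of $\psi(\bd,\bc)$ (hence non-vanishing denominators), and the fact that $P_I(s,n_a)$ can equal zero when $W_s=1$ (as happens for Test sequence~1 with $b_0=1$), so that only $P_I>0$ for \emph{some} state --- not all, as the paper's proof asserts --- should be claimed and is what is actually needed. Both proofs are valid; yours is the more self-contained verification of the self-mapping property.
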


\begin{proof}
Observe that all the functions involved in the system of equations are continuous in $(\bd,\bc)$ when $(\bd,\bc)\in \mathbf{C}$. We need to show that the system of equations maps the set $\mathbf{C}$ into itself. 

Suppose we start an iteration of the fixed point equations with $(\bd^{(0)},\bc^{(0)})\in\mathbf{C}$. Consider a simulation of the process evolution at a tagged node \emph{obeying the approximations (A3)-(A4)}, under attempt rates $\bd^{(0)}$, $\bc^{(0)}$, and the corresponding $\bs^{(0)}$. Note that the system of equations~\ref{eqn:bd-expression-general-n}-\ref{eqn:EB-c-s-na} is an exact representation of this simulated system under attempt rates $\bd^{(0)}$, $\bc^{(0)}$, and $\bs^{(0)}$. For $(\bd^{(0)},\bc^{(0)})\in\mathbf{C}$, it can be observed that the transition probabilities given by Eqns.~\ref{eqn:trans-prob-gen-n-first-eqn}-\ref{eqn:transition-prob-gen-n-last-eqn} are positive, so that the embedded Markov chain in the evolution of the tagged node is finite, irreducible, and hence positive recurrent. Furthermore, it can be observed from Eqn.~\ref{eqn:P-I-s-na} that $0<P_I(\cdot,\cdot)<1$ for all states when $(\bd^{(0)},\bc^{(0)})\in\mathbf{C}$. It follows that the tagged node gets interrupted infinitely often in the simulated system. Consider the quantity 

\begin{equation*}
 \lim_{t\to\infty}\frac{N_I(t)}{\sum_{k=1}^{N_I(t)}B_{r,k}}
\end{equation*}
in the simulated system, where $N_I(t)$ is the number of backoff cycles until time $t$ where the tagged node was interrupted, and $B_{r,k}$ is the \emph{residual backoff} of Node~$i$ in backoff cycle $k$ when it is first interrupted. Since the maximum backoff sampled by the tagged node in any cycle is $W_K$, it follows that $1\leq B_{r,k}\leq W_K$. Hence, the above quantity is lower bounded by $1/W_K$, and upper bounded by 1. But by Markov regenerative theory, this quantity is \emph{almost surely} equal to the R.H.S of Eqn.~\ref{eqn:bd-expression-general-n}, which is nothing but our estimate for the next iterate $\bd^{(1)}$. Thus, we have $1/W_K \leq \bd^{(1)}\leq 1$. 

Similarly, we can argue that $1/W_K\leq \bc^{(1)}\leq 1$. Thus, it follows that the system of equations~\ref{eqn:bd-expression-general-n}-\ref{eqn:EB-c-s-na} map $(\bd^{(0)},\bc^{(0)})\in\mathbf{C}$ to $(\bd^{(1)},\bc^{(1)})\in\mathbf{C}$, as desired. 

Thus, the system of fixed point equations is a continuous mapping from the closed, bounded, convex set $\mathbf{C}$ to $\mathbf{C}$. Hence, it follows from Brouwer's Fixed Point theorem that a fixed point exists for the system in $\mathbf{C}$. 
\end{proof}

We do not have proof of uniqueness of the fixed point. However, in our numerical experiments, the iterations always converged to the same solutions (within a tolerance of $10^{-8}$) even when starting with different initial values. 

\section{Model Validation Through Simulations}
\label{sec:numerical-general-n}

To validate our analytical model, we performed extensive simulations with four different backoff sequences, each chosen so as to ensure that the resulting system exhibits short term unfairness (due to large variability in backoff) for low to moderate number of nodes, and the standard fixed point analysis does not work. Henceforth, we shall call these backoff sequences as test sequences. These test sequences are summarized in Table~\ref{tbl:test-sequences}.
\begin{table*}[ht]
  \centering
\caption{Details of the Test backoff sequences}
\label{tbl:test-sequences}
  \begin{tabular}{|c|c|}\hline
    Test sequence & Parameters\\    
   \hline
   1 & $K=7$, $b_0=1$,$b_k=3^kb_0$\\   
   \hline
   2 & $K=7$, $b_0=\cdots=b_3=1.5$, $b_4=\cdots=b_7=64$\\   
  \hline
   3 & $K=1$, $b_0=1.5$, $b_1=32.5$\\ 
  \hline
   4 & $K=6$, $b_0=\cdots=b_3=1.5$, $b_4=\cdots=b_6=32.5$\\
  \hline
\end{tabular}
\normalsize
\end{table*} 
Note that test sequences 1 and 3 are the same as the backoff sequences introduced in Examples~1 and 3 respectively in Sections~\ref{subsec:example1} and \ref{subsec:example3}. Test sequence~2 is almost identical to the backoff sequence in Example~2, Section~\ref{subsec:example2}, except that we have made the initial mean backoff 1.5 instead of 1. This change was made for the following reason: a mean initial backoff of 1 implies $\bs=1$, and in this case, one can easily check that our approximate analysis always predicts $\bs$ exactly (see also, the plots in Figure~\ref{fig:approx-model-validation-seq1} for Test sequence~1, where we retained $b_0=1$). In order to verify the accuracy of the analysis for non trivial values of $\bs$, we chose the mean intial backoff to be 1.5 instead of 1. This will also help to demonstrate that to cause short term unfairness, a deterministic initial backoff is not necessary. Finally, test sequence~4 is a new backoff sequence, which is a modified version of test sequence~3 (and Example~3 in Section~\ref{subsec:example3}) with a higher retry limit. 

In Table~\ref{tbl:test-sequences},we have not introduced a test sequence corresponding to Example~4 in Section~\ref{subsec:example4}, since there the short term unfairness was insignificant beyond $n=3$. We have, however, verified that the performance measures predicted by our analysis match well with simulations even for this case; e.g., the collision probability predicted by the analysis for $n=2$ for the backoff sequence in Section~\ref{subsec:example4} is 0.7754, whereas that obtained from simulations is 0.7325, an error of about 5\%.

For each test sequence in Table~\ref{tbl:test-sequences}, we performed simulations for a range of values of the number of nodes, $n$. In particular, for all the test sequences, we considered systems with $n=2$ to $n=10$. In addition, for test sequence~1 (which exhibits particularly severe short term unfairness; compare Figures~\ref{fig:short-term-unfairness-example1} and \ref{fig:gamma-vs-n-example1} with Figures~\ref{fig:short-term-unfairness-example2}-\ref{fig:gamma-vs-n-example4}), we also considered systems with $n=20,40,60,80$, and 100. This gives us a total of 41 test cases.  

For each test case, we used the method of simulating the detailed Markov renewal model, described in Section~\ref{sec:exact_model}, since it is much faster compared to detailed ``off-the-shelf'' event-driven simulation tools such as Qualnet. As remarked at the end of Section~\ref{sec:exact_model}, this model is equivalent to the DTMC model introduced in \cite{kumar-etal04new-insights} (see also Section~\ref{subsec:markov-model-dcf}), which is known to give excellent accuracy (see, for example, \cite{ramaiyan-etalYYfp-analysis}). This also provides us more flexibility in examining the finer details of the system evolution (e.g., it is considerably harder to obtain the conditional attempt rates such as $\bd$ from a Qualnet simulation).

The long run average collision probability, $\gamma$, is obtained from the simulations using the method outlined in Section~\ref{subsec:example1}. To obtain the attempt rates $\bd,\bs,\bc$ from the simulations, we followed the formal definitions of $\bd,\bs,\bc$ introduced in Section~\ref{subsec:tagged-node-evolution} in the context of computing these rates. More precisely, for each node~$i$, we estimated its mean attempt rate following an interruption, $\bd^{(i)}$ as 
\begin{equation} 
\bd^{(i)} = \frac{\sum_{k=1}^{N_i}\ind_{\{\text{Node~$i$ interrupted in backoff cycle $k$}\}}}{\sum_{k=1}^{N_i}B_{r,k}} \label{eqn:bd-from-sim}
\end{equation}
where, $N_i$ is the number of backoff cycles of Node~$i$ during the entire simulation, and $B_{r,k}$ is the \emph{residual backoff} counted by Node~$i$ from the point of first interruption until its backoff completion in backoff cycle $k$ provided that it was interrupted; $B_{r,k}=0$ if Node~$i$ was not interrupted in backoff cycle $k$. As explained in Section~\ref{subsubsec:bd-general-n}, the denominator is the total residual backoff counted by Node~$i$ \emph{after being interrupted}. The numerator is the total number of attempts made by Node~$i$ upon completion of its residual backoff countdown after interruptions. 

If the simulation duration is long enough, then, due to symmetry, $\bd^{(i)}\approx \bd^{(j)}$ for all $i\neq j\in\{1,\ldots,n\}$. This was observed in all our simulations. Finally, we estimated $\bd$ as $\bd=\frac{1}{n}\sum_{i=1}^n \bd^{(i)}$. Similar methods yield $\bs$, and $\bc$ from the simulations. 

The results for test sequences~1 to 4 are summarized in Figures~\ref{fig:approx-model-validation-seq1}-\ref{fig:approx-model-validation-seq4}, where we have compared the collision probabilities, throughputs, and attempt rates obtained from the approximate Markov renewal analysis (henceforth, also called the MRP analysis) against those obtained from simulations. In case of collision probability, we also compared the corresponding values obtained from the Bianchi/Mean field analysis (\cite{bianchi00performance,kumar-etal04new-insights}; see also Section~\ref{subsec:bianchi-analysis}).

\begin{figure*}[htp]
\begin{center}
\includegraphics[scale=0.35]{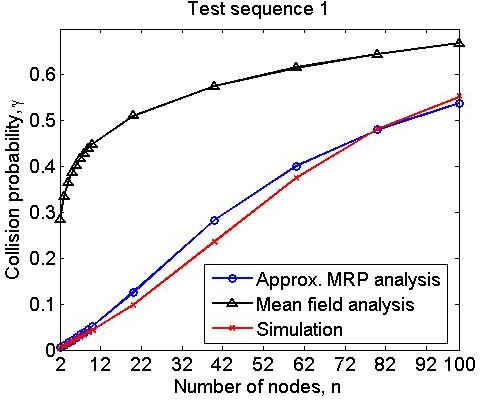}
\hspace{1mm}
\includegraphics[scale=0.35]{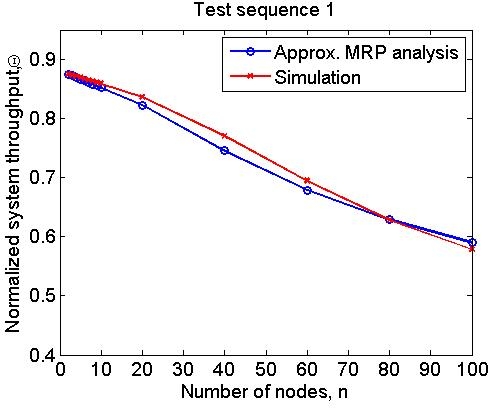}
\vspace{2mm}
\includegraphics[scale=0.33]{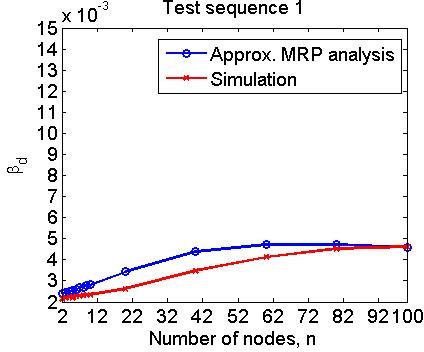}
\hspace{1mm}
\includegraphics[scale=0.32]{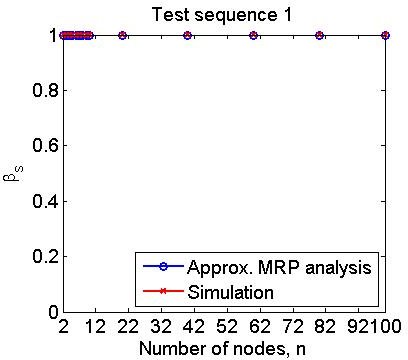}
\hspace{1mm}
\includegraphics[scale=0.33]{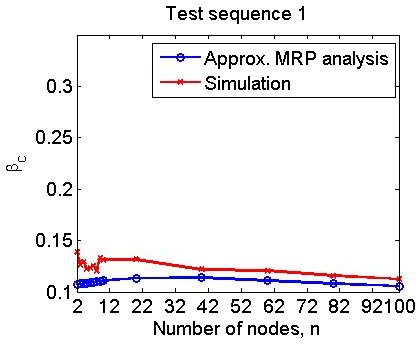}
\caption{Test sequence~1 ($K=7$, $b_0=1$, $b_k=3^kb_0$): Comparison of collision probabilities, throughputs, and attempt rates obtained from the approximate analytical model against simulations for various $n$.}
\label{fig:approx-model-validation-seq1}
\vspace{-5mm}
\end{center}
\normalsize
\end{figure*}

\begin{figure*}[htp]
\begin{center}
\includegraphics[scale=0.32]{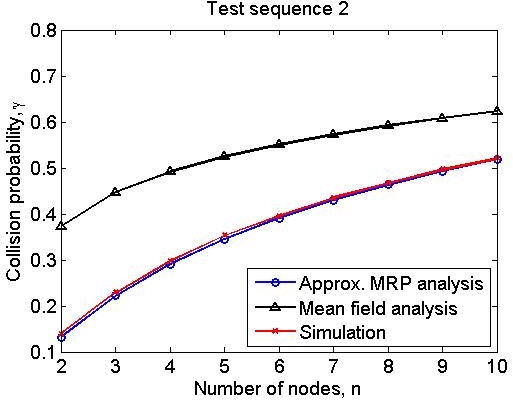}
\hspace{1mm}
\includegraphics[scale=0.32]{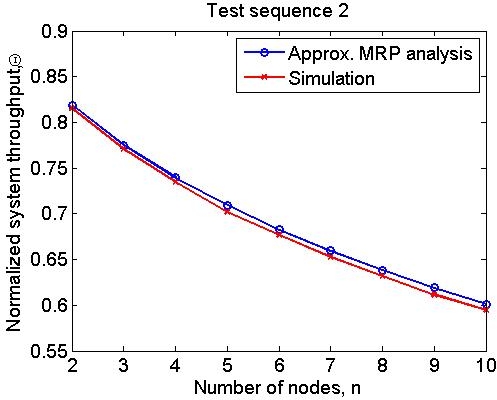}
\vspace{2mm}
\includegraphics[scale=0.28]{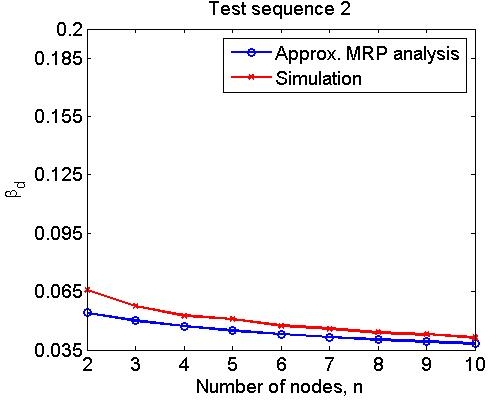}
\hspace{0.1mm}
\includegraphics[scale=0.3]{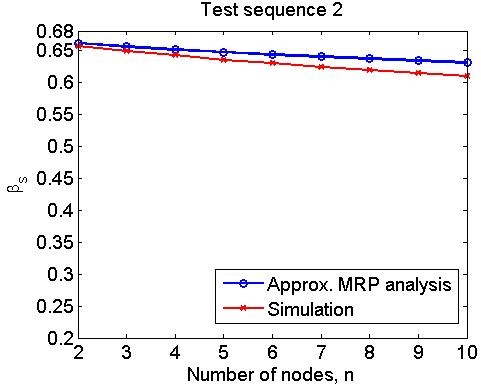}
\hspace{0.1mm}
\includegraphics[scale=0.33]{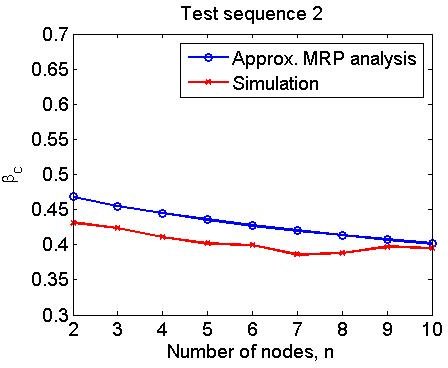}
\caption{Test sequence~2 ($K=7$, $b_0=\cdots=b_3=1.5$, $b_4=\cdots=b_7=64$): Comparison of collision probabilities, throughputs, and attempt rates obtained from the approximate analytical model against simulations for various $n$.}
\label{fig:approx-model-validation-seq2}
\vspace{-5mm}
\end{center}
\normalsize
\end{figure*}

\begin{figure*}[htp]
\begin{center}
\includegraphics[scale=0.34]{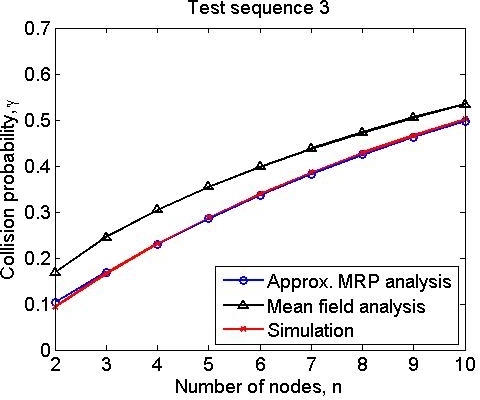}
\hspace{1mm}
\includegraphics[scale=0.36]{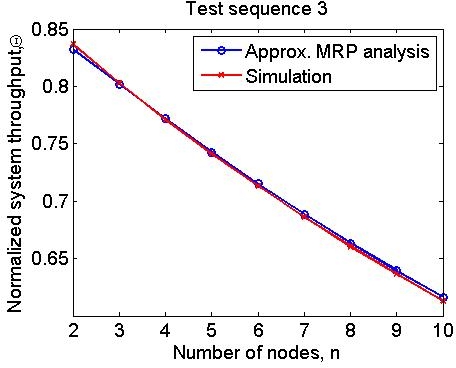}
\vspace{2mm}
\includegraphics[scale=0.31]{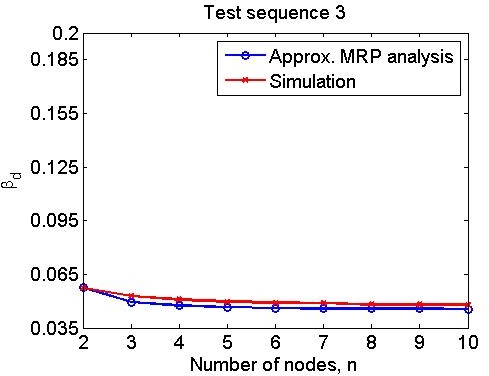}
\hspace{0.1mm}
\includegraphics[scale=0.3]{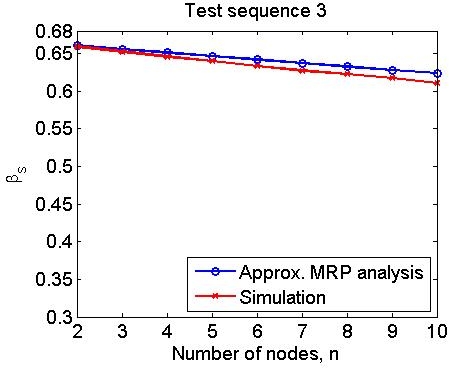}
\hspace{0.1mm}
\includegraphics[scale=0.32]{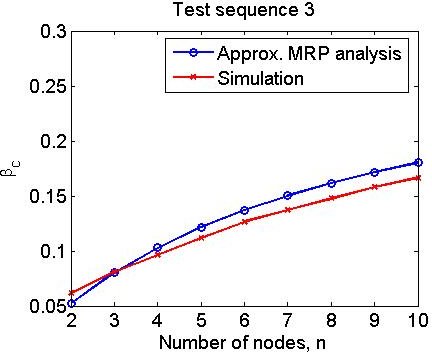}
\caption{Test sequence~3 ($K=1$, $b_0=1.5$, $b_1=32.5$): Comparison of collision probabilities, throughputs, and attempt rates obtained from the approximate analytical model against simulations for various $n$.}
\label{fig:approx-model-validation-seq3}
\vspace{-5mm}
\end{center}
\normalsize
\end{figure*}

\begin{figure*}[htp]
\begin{center}
\includegraphics[scale=0.33]{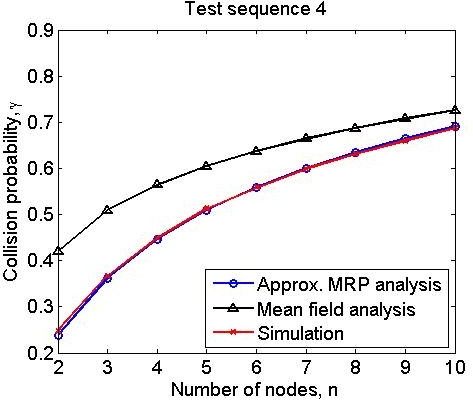}
\hspace{1mm}
\includegraphics[scale=0.355]{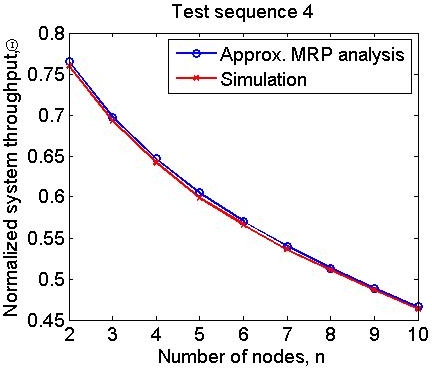}
\vspace{2mm}
\includegraphics[scale=0.3]{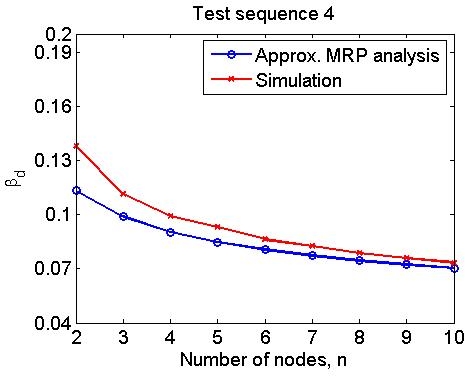}
\hspace{0.1mm}
\includegraphics[scale=0.31]{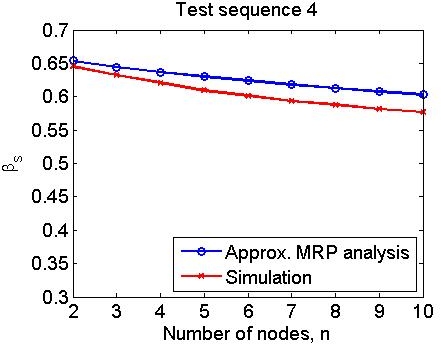}
\hspace{0.1mm}
\includegraphics[scale=0.29]{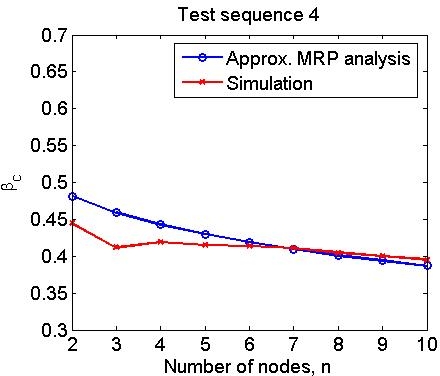}
\caption{Test sequence~4 ($K=6$, $b_0=\cdots=b_3=1.5$, $b_4=\cdots=b_6=32.5$): Comparison of collision probabilities, throughputs, and attempt rates obtained from the approximate analytical model against simulations for various $n$.}
\label{fig:approx-model-validation-seq4}
\vspace{-5mm}
\end{center}
\normalsize
\end{figure*}


From these plots, we can make the following observations:

\noindent
\textbf{Observations:}

\begin{enumerate}
 \item For Test sequence~1, the MRP analysis predicts the collision probability within an error of at most 12\% compared to simulations. Recall from the discussion earlier in this section that this is the sequence that exhibits the most severe short term unfairness among all the examples. In all the other test cases, the MRP analysis predicts the collision probability with excellent accuracy (within 1-2\% error). The mean field analysis, on the other hand, is quite inaccurate in all the test cases.
 \item The MRP analysis also predicts the throughput within an error of at most 2-3\%.
 \item The errors in the MRP analysis compared to simulations are at most 10-14\% in predicting the attempt rates, $\bd,\bs,$ and $\bc$. For all test sequences, the qualitative trends in the attempt rates as a function of $n$ are captured by the MRP analysis.
 \item For all test sequences, the collision probability, $\gamma$, increases with the number of nodes, $n$, as expected.
 \item For all test sequences, the normalized system throughput, $\Theta$, decreases with increasing $n$. An intuition behind this is as follows: as $n$ increases, collision probability increases, causing the nodes to sample backoffs from stochastically larger contention windows. Since the nodes sample stochastically larger backoffs, their attempt rates decrease, increasing the idle time in the system. Moreover, the increase in the number of collisions also tends to waste more slots. Hence throughput decreases with increasing $n$.
 \item In all test cases, $\bs\gg \bd$, i.e., the attempt rate is skewed in favor of the successful node, \emph{a reflection of the short term unfairness property}.
 \item In the presence of short term unfairness, the collision probability predicted by the mean field analysis is always larger compared to that obtained from simulations. This is because in the presence of short term unfairness, the last successful node has a much larger probability of accessing the channel in the next slot than the other nodes, thus further boosting its success probability, unlike in a fair system, where all the nodes have comparable probability of accessing the channel, resulting in a higher probability of collision. \emph{The mean field analysis ignores the correlation in the system evolution in an unfair system.}
 \item Consider the following question: what is the probability that a tagged node gets interrupted in a backoff cycle (i.e., at least one other node samples a backoff, or has a residual backoff, smaller than the tagged node)? Recall that if it gets interrupted, we have a contribution towards $\bd$, and if it does not get interrupted, we have a contribution towards $\bs$ or $\bc$ (see Eqns~\ref{eqn:bd-expression-general-n},\ref{eqn:bs-expression-general-n}, and \ref{eqn:bc-expression-general-n}). 

As the number of nodes increases, this probability is influenced in two ways: 

\noindent
(i) if we hold the contention windows of the nodes fixed, then intuitively, as $n$ increases, this probability of interruption should increase. 

\noindent
(ii) However, as $n$ increases, the collision probability increases, causing the contention windows, and hence the sampled backoffs of the nodes to be stochastically larger; this causes the probability of interruption to decrease. Thus, there is a trade-off. 

Further note that since after a success, the tagged node always samples from the lowest contention window, irrespective of the number of nodes in the system, a decrease in the probability of interruption will increase $\bs$, while an increase in the probability of interruption will decrease $\bs$. Thus, in general, one would expect to see a non-monotonic variation in $\bs$ with the number of nodes, when the initial backoff is stochastic, i.e., $b_0\neq 1$. However, for test sequences~2, 3, and 4, we see $\bs$ decreasing monotonically as $n$ increases. This is because for these test sequences, as the backoff stages become larger, the contention windows do not change (e.g., in test sequence~2, $b_4=\cdots=b_7=64$), thus effectively causing influence~(i) above to be in force. 

Similar argument indicates that it is hard to intuitively predict the trend in $\bc$ as a function of $n$, and in general, it may be non-monotonic. 

Let us now focus on $\bd$. For test sequences~2, 3, and 4, we see that $\bd$ decreases with increasing $n$. We shall give an intuition for this for test sequence~3. Similar intuition works for test sequences~2 and 4. Note that the contention window size of a tagged node is either $W_0=2$, or $W_1=64$. Each interruption of the tagged node in backoff stage 0 contributes 1 to the numerator of Eqn.~\ref{eqn:bd-from-sim}, and 1 to the denominator of Eqn.~\ref{eqn:bd-from-sim}. However, each interruption in backoff stage~1 contributes 1 to the numerator, and a value typically much larger than 1 to the denominator. As the number of nodes increases, collision probability increases, pushing the tagged node to backoff stage~1 faster, thus causing more contributions to $\bd$ from backoff stage~1 than from backoff stage~0. Furthermore, the probability of interruption (and hence the number of interruptions) also increases, as explained earlier; majority of these interruptions happen in backoff stage~1 as just argued. Thus, increase in the denominator far exceeds that in the numerator, causing $\bd$ to decrease with increasing $n$. 

For test sequence~1, however, $\bd$ initially increases slightly, and then flattens off with increasing $n$. One possible explanation for this is as follows: for test sequence~1, the backoff sequence (contention window size) builds up in a more gradual manner than test sequences~2-4; in particular, one can imagine that as $n$ increases, initially, influence~(ii) explained above is in force, causing nodes to sample backoffs from \emph{stochastically larger contention windows}, and the \emph{probability of interruption to decrease}. These two effects together cause $\bd$ to increase slightly. However, when $n$ becomes sufficiently large, further increase in the contention window size has negligible effect, and influence~(i) explained above comes into play, causing the probability of interruption to rise again. This causes the $\bd$ curve to flatten off. 

\item On a Linux based machine with 8 GB RAM, the running time of the MRP analysis is several seconds, while that of the stochastic simulation is of the order of several minutes; it takes hours to run the Qualnet simulation, especially when the short term unfairness is severe. 
\end{enumerate}


\newpage
\begin{center}
 \textbf{\Large{Part~II: Large Propagation Delays}}
\end{center}
\normalsize

We had mentioned earlier in Section~\ref{sec:intro} that the IEEE~802.11 DCF mechanism is finding its way into new applications such as rural broadband access, and long distance UAV communications, where the propagation delay among the participating nodes are not negligible compared to the duration of a backoff slot, unlike the conventional WiFi. We shall demonstrate later in this chapter that the phenomenon of short term unfairness which was observed earlier in DCF based systems under non-standard backoff sequences, is also observed in the large propagation delay setting, \emph{even under the default protocol parameters of IEEE~802.11 standard}. 

As mentioned earlier, Simo-Reigadas et al.\ \cite{simo-reigadas10wild} aimed to develop an extension of the Bianchi model to predict the performance of IEEE~802.11 DCF with large propagation delays. However, they adopt a state independent, constant attempt rate approximation as in the Bianchi analysis \cite{bianchi00performance,kumar-etal04new-insights}, which ignores the short term unfairness property, and as a consequence, the \emph{collision/success probabilities computed using the analysis are inaccurate compared to simulation results} (see also Section~\ref{subsubsec:simo-reigadas-performance}). 

In this part, we aim to extend the analysis developed for general backoff sequences in Part~I to the case of systems with large propagation delays. We focus on the case where the transmitters are equidistant from one another, and also each receiver is equidistant from all the transmitters.

\section{IEEE~802.11 DCF Systems with Large Propagation Delays}
\label{sec:dcf-description}
\begin{figure*}[t]
\begin{center}
\includegraphics[scale=0.5]{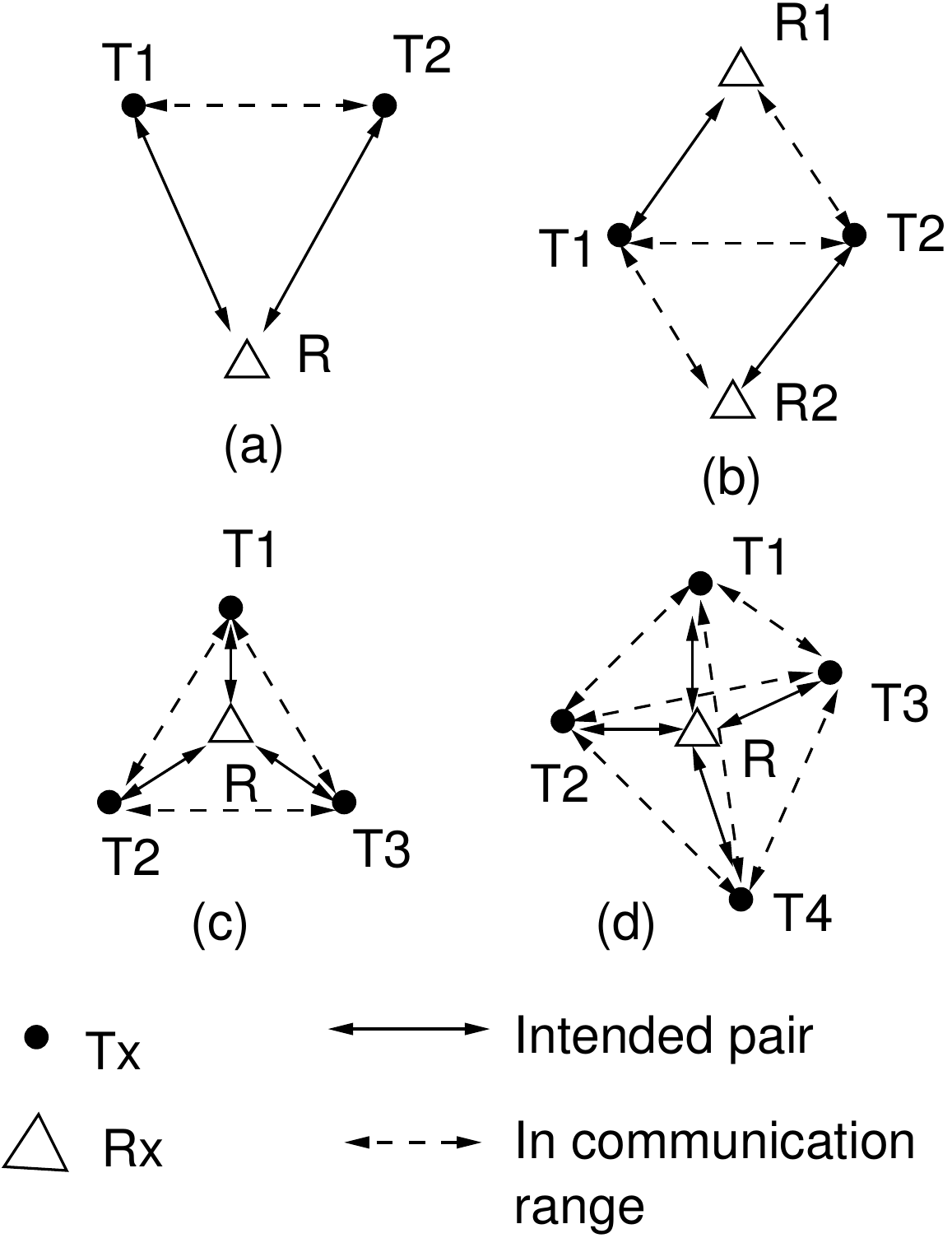}

\caption{Example systems with possibly large propagation delays where all transmitters are equidistant from one another, and each receiver is equidistant from all the transmitters.}
\label{fig:sample-scenarios}
\end{center}
\normalsize
\end{figure*}
We assume basic access without RTS-CTS. Our system consists of $n\geq 2$ saturated transmitting nodes, and their receivers, operating under IEEE~802.11 DCF. Let the propagation delay between each pair of transmitters be $\Delta$, that between each receiver and all the transmitters be $\Delta_r$, and the duration of each backoff slot be $\sigma$. Let $m \define \lfloor\frac{\Delta}{\sigma}\rfloor$, i.e., $m$ is the propagation delay among the transmitters in integer multiples of slots. Also let $m_r \define \frac{\Delta_r}{\sigma}$. See Figure~\ref{fig:sample-scenarios} for an illustration of such systems. When the propagation delays are negligible, $m=m_r = 0$. 

A node's transmission will be heard by the other nodes after a propagation delay of $m$ slots. We consider the setting where the packet duration, $T$, is much larger compared to the propagation delay, $m$.\footnote {This assumption is satisified in most scenarios of interest. For example, if the PHY layer rate is 2 Mbps, the packet duration for a 1500 bytes packet is 6000 $\mu$secs, whereas the propagation delay over a distance of 120 Kms is only 400 $\mu$secs.} In other respects our setting is the same as \cite{bianchi00performance}, i.e., there are \emph{no hidden terminals}, and \emph{no channel errors}. Thus, if two or more nodes finish their backoffs within $m$ slots of one another, their transmissions collide, and all the packets involved are lost. Note that we do not model packet capture. 

Upon a successful transmission, the transmitting node receives an ACK from its intended receiver. Due to the round-trip propagation delay between the transmitter and its receiver, the overall transmission overhead in a successful transmission is increased by $2m_r$ compared to the case without propagation delay. Thus, the ACK Timeout parameter in the protocol has to be suitably adjusted for non-negligible propagation delays. 

\subsection{A key property of the system: misaligned sensing of channel idleness}
\label{subsec:misalignment}

\begin{figure*}[t]
\begin{center}
\includegraphics[scale=0.25]{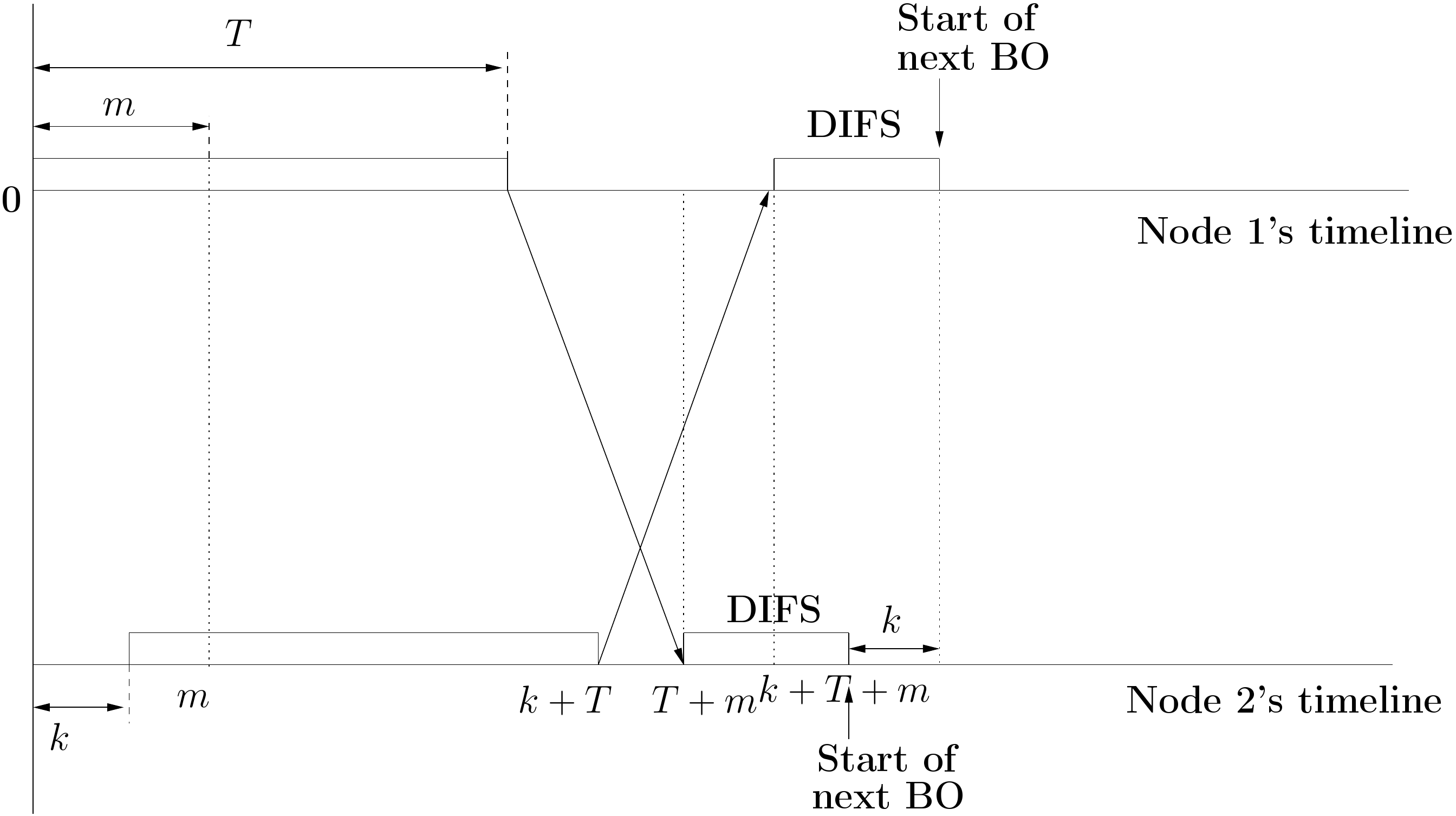}
\hspace{5mm}
\includegraphics[scale=0.25]{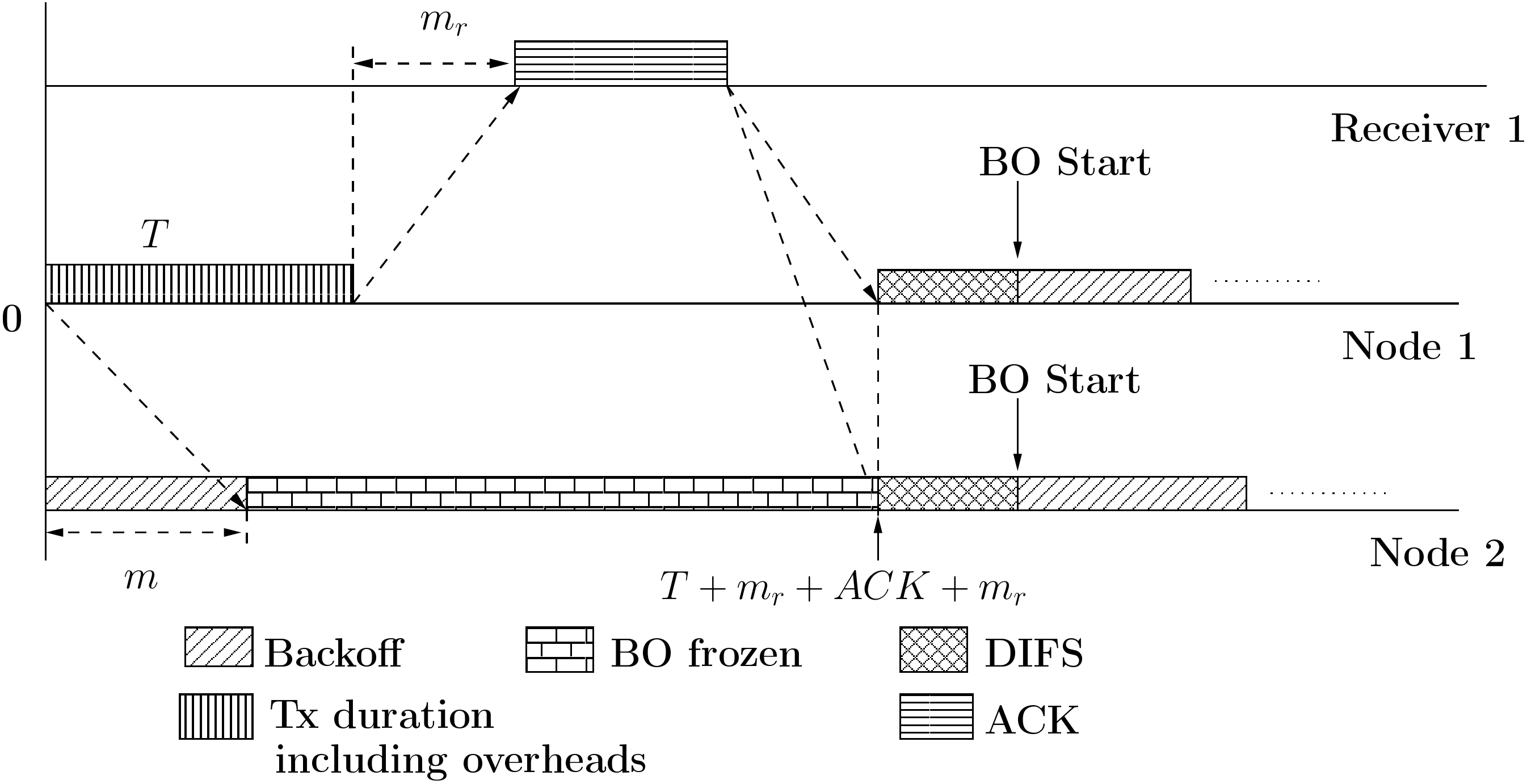}
\caption{\textbf{Left Panel: A collision, leading to misalignment}. Node~1 starts a transmission at time 0. Node~2 finishes backoff $k$ slots after Node~1, where $k<m$, and starts its transmission, only to begin to sense Node~1's transmission at time $m$, thus resulting in a collision. Node~2 will sense the channel idle at time $T+m$, and count down its DIFS, after which, it will start a fresh backoff. However, Node~1 will sense the channel idle only at time $T+k+m > T+m$. Thus, Node~1 will start counting its DIFS $k$ slots after Node~2, and hence it will also start its backoff countdown $k$ slots after Node~2. Thus, \emph{the starting points of the backoff counters are misaligned by $k$ slots}. \textbf{Right Panel: A success}. Node~1 starts a transmission at time 0. Node~2 hears this transmission after a propagation delay of $m$ slots, and freezes its backoff. Receiver~1 receives Node~1's transmission after a propagation delay of $m_r$ slots, i.e., at time $T+m_r$, and starts sending an ACK. Since the propagation delays from Receiver~1 to both the nodes are equal, both Nodes~1 and 2 hear the ACK from Receiver~1 at the same time, and hence, start their DIFS together, following which, the start their next backoffs together. Thus, no misalignment in the next backoff initiation happens in this case.}
\label{fig:misalignment-explain}
\end{center}
\normalsize
\end{figure*}
In a system with negligible (ideally, zero) propagation delay, all nodes sense the start and end of channel activity simultaneously, a DIFS period follows, and then the starts of the back-off periods at all the nodes are \emph{always aligned} (see, for example, Figure~\ref{fig:evolution_abstraction} in Section~\ref{subsec:dcf-description}). In the present case, consider the situation depicted in the left panel of Figure~\ref{fig:misalignment-explain} for a system with two transmitter-receiver pairs. As explained in Figure~\ref{fig:misalignment-explain}, when Node~2 finishes its backoff within $k < m$ slots of Node~1, they encounter a collision, and \emph{the starting points of their next backoff counters are misaligned by $k$ slots}. The misalignment, $k$, can take values in $\{0,1,\ldots, m\}$.

\remarks

\noindent
1. The possible misalignment of the backoff counters happens \emph{only when there is a collision}. In case of a success, as explained in the right panel of Figure~\ref{fig:misalignment-explain}, they start their next backoff together.

\noindent
2. Figure~\ref{fig:misalignment-explain} can be drawn for more than two nodes being involved in a collision. Consider a multiple node collision, and denote by Nodes~1 and 2 respectively, \emph{the node that attempted next to last, and the node that attempted last}. Then it is seen from the left panel of Figure~\ref{fig:misalignment-explain} that Node~2 will start its backoff earlier than the other nodes, all of whom start their backoffs together. The misalignment is precisely the difference between the attempt instants of Nodes~1 and 2. \emph{The general principle is that the node that initiates transmission earlier is the one that will have a delayed backoff in the next cycle, because it will hear the end of the other transmission later.}

\noindent
3. Most importantly, this misalignment of the backoff counters \emph{makes it difficult to apply the analytical approach in \cite{bianchi00performance,kumar-etal04new-insights} in this case}, since there the authors were able to model the process evolution by focusing \emph{only} on back-off times (see also Section~\ref{subsubsec:simo-reigadas-performance}).

\noindent
4. Such misalignment of backoff counters was also observed (even with zero propagation delay) and studied in the context of IEEE~802.11e EDCA; see \cite{tinnirello-bianchi10edca, ramaiyan-etalYYfp-analysis} and references therein. However, \emph{a crucial difference compared to our setting is that the misalignment there is deterministic for given protocol parameters, whereas in the current setting, the misalignment is random}; this prevents the use of the techniques proposed in the EDCA context to address the current problem. \hfill\Square

\section{Short Term Unfairness in Systems with Large Propagation Delays}
\label{subsec:long-distance-wifi} 
We have already seen short term unfairness in IEEE~802.11 DCF based systems where the backoff parameters were modified in some manner from the standard (Section~\ref{sec:stu-examples}). A natural question to ask is, whether the system is always well-behaved (fair) under the standard backoff parameters. It turns out that even this is not the case. In particular, in applications where the propagation delays among the nodes are not negligible compared to a backoff slot duration, especially for propagation delays more than 3 backoff slots, the system exhibits short term unfairness even under the default backoff parameters of IEEE~802.11. 

\begin{figure}[t]
\begin{center}
\includegraphics[scale=0.28]{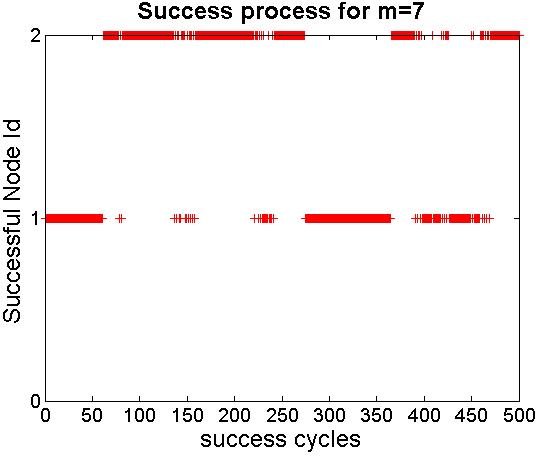}
\hspace{0.1mm}
\includegraphics[scale=0.28]{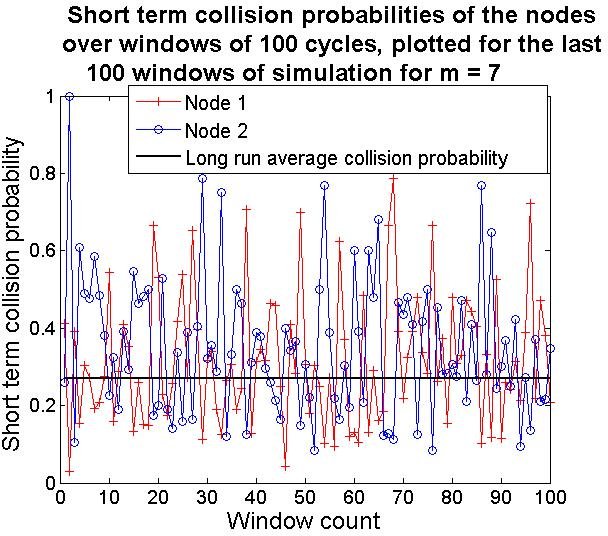}
\vspace{0.1mm}
\includegraphics[scale=0.28]{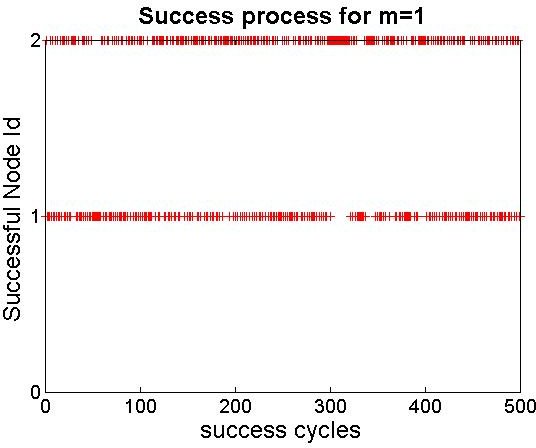}
\hspace{0.1mm}
\includegraphics[scale=0.28]{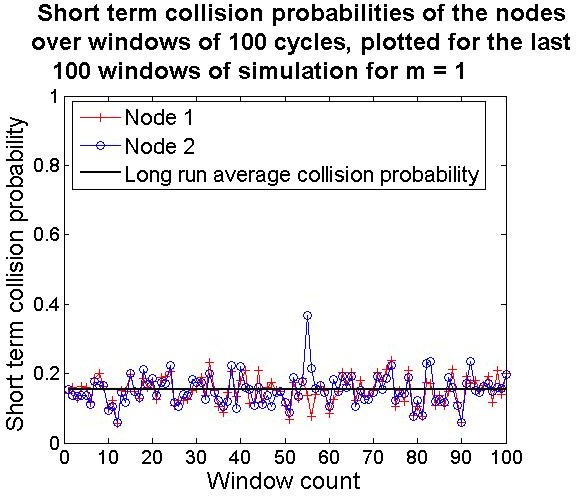}
\caption{Simulation results depicting short term unfairness at higher propagation delays for a system with 2 transmitting nodes. (Panels are row-wise, from left to right) Panels 1 and 2: Propagation delay between node pairs is $m=m_r=7$ slots. Panel 1: Evolution of the success process of the two nodes over 500 successful transmissions of the system. Panel 2: Short term collision probabilities of the two transmitters; also plotted is the long run average collision probability, averaged over nodes and simulation duration. Panels 3 and 4: Same plots as in Panels 1 and 2, but for propagation delay $m=m_r=1$ slot.}
\label{fig:short-term-unfairness-m7}
\vspace{-5mm}
\end{center}
\normalsize
\end{figure}
Panels~1 and 2 in Figure~\ref{fig:short-term-unfairness-m7} depict snapshots of a simulation run with 2 saturated transmitter-receiver pairs operating with the standard protocol parameters of IEEE~802.11b, with a propagation delay of $m=m_r=7$ slots (recall the notation from Section~\ref{sec:dcf-description}). The snapshots were obtained in the same manner as in Section~\ref{sec:stu-examples}. In Panel~1 of Figure~\ref{fig:short-term-unfairness-m7}, we depict the last 500 successful transmissions in the system, and the Node ID of the successful node in each of these transmissions. It is clearly seen from the plot that the success processes for the two nodes are bursty in nature.

To ascertain that this is not a sporadic phenomenon, but typical behavior of the system, we show in Panel~2 of Figure~\ref{fig:short-term-unfairness-m7} the short term collision probabilities of the two nodes; each point in the plot is the short term collision probability of a node computed over a window of 100 consecutive system transmissions, and the process was repeated for the last 100 windows in the simulation, thus giving 100 values for each node. Also plotted is the long run average collision probability, averaged over all the nodes, and the simulation duration. It can be observed from the plots that there is high variance in the short term collision probabilities of the two nodes w.r.t the long run average collision probability. In particular, it is often the case that in a window where Node~1 has a low short term collision probability, Node~2 has a very high short term collision probability, and vice-versa, thus indicating that one of the nodes monopolizes the channel in each window, shutting out the other node, thus leading to a high collision probability for the other node during that period. 

In order to demonstrate that this property is observed only at higher propagation delays, we show in Panels~3 and 4 in Figure~\ref{fig:short-term-unfairness-m7}, snapshots of a simulation run for the same system as before, but with a propagation delay of $m=m_r=1$ slot. It is observed from Panel~3 of Figure~\ref{fig:short-term-unfairness-m7} that the success processes of the two nodes are no more bursty in nature; in particular, no node is starved for a prolonged duration. From Panel~4 of Figure~\ref{fig:short-term-unfairness-m7}, we see that the variance in the short term collision probabilities of the two nodes w.r.t. the long run average is much less compared to that observed for $m=m_r=7$.

\noindent
\textbf{Discussion:}

The phenomenon of short-term unfairness at higher propagation delays stems from the fact that collision probability becomes very large at higher propagation delays, and so backoff distributions become stochastically very large as well. Consider a topology with $n=2$. Let Node~1 sample a backoff of $B_1$ slots, and Node~2 sample a backoff of $B_2$ slots. Observe that even at moderately large $m$, the collision probabilities of the nodes are high (almost 30\% beyond $m=3$; see $\gamma$ plot in Figure~\ref{fig:approx-model-gamma-validation}). This suggests that the contention windows (i.e., the range from which the nodes sample their backoffs) of the nodes will be stochastically larger for those $m$, which in turn implies that $|B_2-B_1|$ will be stochastically larger. Since the standard contention windows under IEEE~802.11b are much larger compared to the values of $m$ we are interested in, this increase in the contention window will dominate over any corresponding increase in $m$. Thus, for $m\geq 3$, the likelihood of $|B_2-B_1|$ growing to a value much larger than $m$ is high. Suppose, for simplicity, at the end of a channel activity, the backoff counters of both the nodes are aligned, and the nodes sample backoffs of $B_1$ and $B_2$ slots respectively. Suppose the next activity is a successful transmission by Node~1. This implies that $B_1 + m < B_2$. Recall from Section~\ref{subsec:misalignment} that at the end of the successful transmission, both the nodes will be aligned, and Node~2's residual backoff will be $B_2 - (B_1 + m)$, which, by the observation above in this paragraph, is likely to be still large. Since Node~1 will sample its next backoff from the initial (smallest) window, this also suggests that Node~1's next backoff is likely to be still much smaller than that of Node~2 (since Node~2's residual backoff is stochastically large, as just argued.), and Node~1 is therefore likely to attempt much earlier than Node~2, and succeed. This will continue to happen until Node~2's residual backoff becomes comparable to Node~1's initial contention window. Then, further, if there is a collision, Node 2's backoff can again become very large. This is reflected in the fact that for moderately large $m$, after a successful transmission, the attempt rate of Node~1 (the successful node) is higher than that of Node~2 (see $\bd$ and $\bs$ plots in Figure~\ref{fig:approx-model-gamma-validation} in Section~\ref{sec:numerical}). At higher $m$, this difference in attempt rates is so large that it causes the successful node to succeed in a burst, thus introducing significant correlation in the success process.

\subsection{Performance of an extension of Bianchi analysis for large propagation delays}
\label{subsubsec:simo-reigadas-performance}
\cite{simo-reigadas10wild} aimed to develop an approximate analytical model for single-hop, long distance WiFi systems by extending Bianchi's model. For the case of a homogeneous system, their model reduces to the following: each node, conditioned on being in backoff, attempts independently with a probability $\beta$ in each slot, irrespective of the system state. When a node transmits, the conditional probability that its transmission encounters a collision, is $\gamma$, independent of the system state. They obtain $\beta$ in terms of $\gamma$ using the well-known polynomial ratio formula (Eqn.~\ref{eqn:G_gamma} in Section~\ref{sec:back-off_process}). To obtain the collision probability, $\gamma$, they observe (inaccurately) that the vulnerable window of a tagged node has size $2m$, since any node attempting within $m$ slots before or after the tagged node's attempt will cause a collision. They then compute the probabilities of any node attempting in that vulnerable window by assuming (inaccurately) that the node was in backoff at the start of the vulnerable window, and using the Markov chain model proposed in \cite{bianchi05modified} that describes the evolution of the node \emph{in backoff time}. Thus, they arrive at a fixed point equation in $\gamma$.

Their model \emph{does not consider the fact that after a collision, the starts of the backoff counters of the nodes could be misaligned} (see Section~\ref{subsec:misalignment}), and hence when a tagged node attempts again, its vulnerable window need not be $2m$, since the other nodes may not even have started their backoff countdowns. Moreover, by assuming a constant attempt probability $\beta$ irrespective of the system state, they \emph{ignore the short term unfairness property}, which has the effect of skewing the attempt probability in favor of a successful node (as explained earlier). Figure~\ref{fig:simo-reigadas-model-comparison} compares the collision probabilities obtained from the Simo-Reigadas et al.\ model against those obtained from simulations for $n=2$, default backoff parameters of IEEE~802.11b, and a range of propagation delays. As can be seen, the values predicted by their model do not match well with the simulation results. 

\begin{figure}[t]
\begin{center}
\includegraphics[scale=0.35]{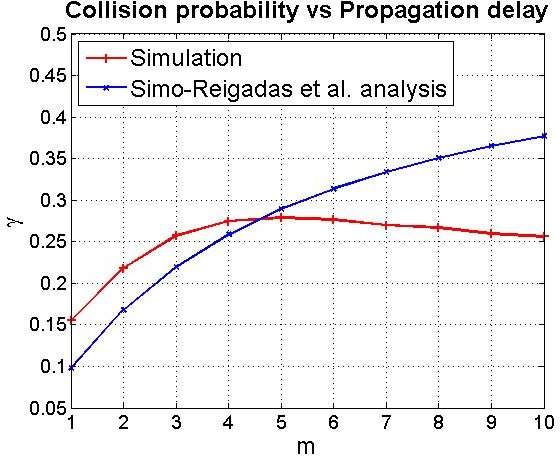}
\caption{Limitation of existing analysis technique in predicting the collision probabilities under large propagation delays. The model in \cite{simo-reigadas10wild} vs. simulations.}
\label{fig:simo-reigadas-model-comparison}
\vspace{-5mm}
\end{center}
\normalsize
\end{figure}

\noindent
\textbf{Discussion and the way forward:}

\noindent
Our aim is to develop an accurate analytical technique to predict the performance of IEEE~802.11 systems with large propagation delays. To that end, we adopt an approach similar to that in Part~I. We start with a detailed Markov renewal process model for the system evolution. This model is, in fact, an extension of the detailed Markov renewal model developed in Section~\ref{sec:exact_model} to the case of large propagation delays. We use this model as a prototype for the system. We then introduce a parsimonious simplification of this Markov renewal model, which, akin to the simplified model in Part~I, uses state dependent attempt rates to capture the bursty nature of the success processes due to short term unfairness (see Figure~\ref{fig:short-term-unfairness-m7}).

\section{A Markov Renewal Model of the System}
\label{sec:exact_model_del}
In this section, we present a Markov renewal process model for the system evolution under possibly large propagation delays. As will be clear from the description below, this model is essentially an extension of the detailed MRP model for systems with negligible propagation delays developed in Section~\ref{sec:exact_model}. We shall demonstrate via comparison with Qualnet simulations \cite{qualnet} (see Figure~\ref{fig:exact-model-gamma-validation}) that this model is indeed a faithful prototype for the system. 

\begin{figure}[ht]
\begin{center}
\includegraphics[scale=0.4]{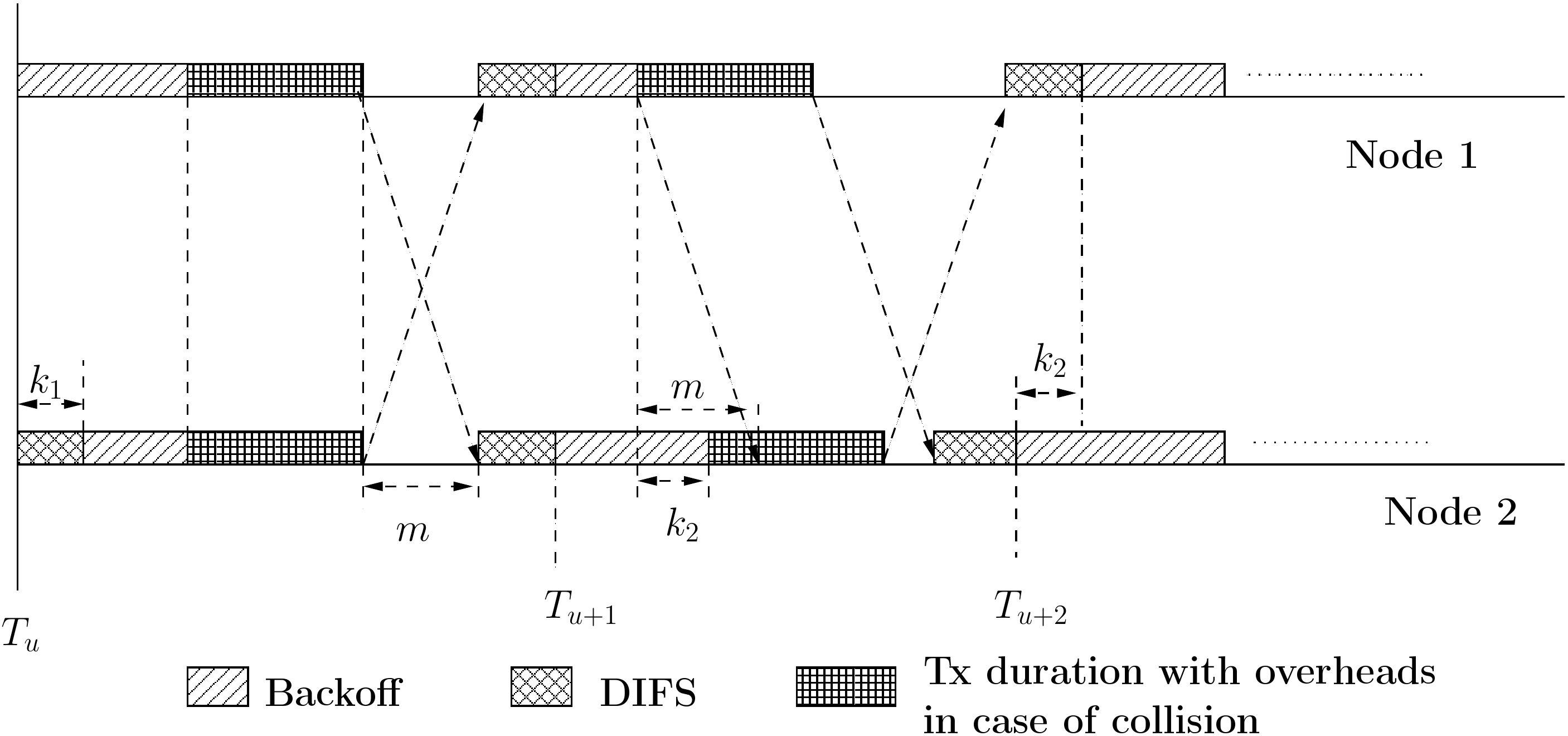}
\caption{\textbf{Transmission Cycles} for $n=2$. The evolution of the timelines can be explained as follows. Node~1 happens to be the first to start its backoff after an activity in the medium. Node~2 starts its backoff after a misalignment of $k_1$ slots. Both the nodes happen to finish their backoffs together, and start a transmission at the same time, leading to a collision. In this case, the ends of their transmissions are aligned, and hence both the nodes sense the channel idle (after a propagation delay of $m$ slots), and start their DIFS at the same time, following which they start fresh backoffs, with the starts of the backoff counters aligned. This time, Node~1 finishes its backoff first, and starts a transmission. Node~2 finishes its backoff $k_2$ slots after Node~1, where $k_2 < m$, thus leading to a collision, and subsequent misalignment of the starts of their next backoffs by $k_2$ slots, in the same manner as explained in the left panel of Figure~\ref{fig:misalignment-explain}, with Node~2 leading Node~1 by $k_2$ slots. Denote by $T_u$, the first instant after the $u^{th}$ activity in the medium when some node starts counting down its backoff. The intervals $[T_u, T_{u+1}]$ and $[T_{u+1},T_{u+2}]$ are, respectively, the $(u+1)^{th}$ and $(u+2)^{th}$ \emph{transmission cycles}. Note that at the start of a transmission cycle, some (but not all) of the nodes may still be counting down their \emph{misalignment slots} before entering backoff. For example, at $T_{u+2}$, Node~1 is counting down a misalignment of $k_2$ slots.}
\label{fig:tx-cycle-explain-del}
\end{center}
\normalsize
\end{figure}

An \emph{``activity''} in the medium is defined as the duration from the instant when a transmission starts in the medium, to the instant when some node is ready to start its next DIFS. For example, in the Left panel of Figure~\ref{fig:misalignment-explain}, there is an activity in the medium during the interval $[0,T+m]$, and in the Right panel of Figure~\ref{fig:misalignment-explain}, there is an activity in the medium during the interval $[0,T+m_r+ACK+m_r]$.
  
Let $T_u$ be the first instant after the $u^{th}$ activity in the medium when some node starts counting down its backoff. See, for example, Figure~\ref{fig:tx-cycle-explain-del}, which depicts a sample path of the system evolution for $n=2$. We call the interval $[T_u, T_{u+1}]$ the $(u+1)^{th}$ \emph{transmission cycle}. In each transmission cycle, there is excactly one activity in the medium. 

Let $B_{u,i}, S_{u,i}, Z_{u,i}$, denote respectively the residual backoff count, backoff stage, and misalignment (w.r.t $T_u$) of the start of backoff counter of Node~$i$, $i=1,2,\ldots,n$ at $T_u$. Recalling the notation for the protocol parameters of IEEE~802.11 DCF, $S_{u,i}\in\{0,1,\ldots,K\}$, $B_{u,i}\in\{1,\ldots,W_{S_{u,i}}\}$, and $Z_{u,i}\in\{0,1,\ldots,m\}$. Then, the process $(\{B_{u,i}, S_{u,i}, Z_{u,i}\}_{i=1}^n, T_u)$ is a Markov Renewal Process \cite{kulkarni95modeling-stochastic-systems}, with $\{B_{u,i}, S_{u,i}, Z_{u,i}\}_{i=1}^n$ being the embedded Markov chain, whose transition structure is explained next. 

Note that $(T_u+B_{u,i} + Z_{u,i})$ is the instant when Node~$i$ is scheduled to finish its backoff, and attempt a transmission in the $(u+1)^{th}$ transmission cycle. Let $\underline{B}_u = \min_{1\leq i\leq n} (B_{u,i} + Z_{u,i})$, and $I_u = \arg \min_{1\leq i\leq n} (B_{u,i} + Z_{u,i})$. 

\noindent
\textbf{Observations:}

\begin{enumerate}
 \item $(T_u+B_u)$ and $I_u$ are, respectively, the attempt instant, and Node id of the first node to attempt transmission in the $(u+1)^{th}$ transmission cycle. 
 \item A successful transmission happens \emph{iff} for all $i\neq I_u$, $B_{u,i} + Z_{u,i} > \underline{B}_u + m$, and a collision happens otherwise. We need to consider only the integer part of the propagation delay between the transmitters in slots, i.e., $m$, since the probabilities of the events corresponding to success and collision are unaffected by the fractional part of the propagation delay; to see this, note that $B_{u,i}$ and $Z_{u,i}$ always take values in integer multiples of slots.
\end{enumerate}

With the above information, the transition structure of the embedded Markov chain can be summarized as follows:
\begin{enumerate}
\item Initialize the set of nodes attempting in the $(u+1)^{th}$ transmission cycle as $S_{a,u} = \phi$. For each node~$i$, $1\leq i\leq n$, if $B_{u,i} + Z_{u,i} > \underline{B}_u + m$, i.e., the node hears the ongoing transmission before finishing its backoff, then Node~$i$ is frozen in the $(u+1)^{th}$ transmission cycle, and its backoff states are updated as $B_{u+1,i} = B_{u,i} + Z_{u,i} - (\underline{B}_u + m)$, and $S_{u+1,i}=S_{u,i}$.

If, on the other hand, $B_{u,i} + Z_{u,i} \leq \underline{B}_u + m$, then Node~$i$ attempts in the $(u+1)^{th}$ transmission cycle, and the set of attempting nodes is updated as $S_{a,u} = S_{a,u}\cup \{i\}$.

\item If $|S_{a,u}|=1$, i.e., exactly one node, namely, Node~$I_u$ attempted in the $(u+1)^{th}$ transmission cycle, then the transmission is successful, and Node~$I_u's$ backoff stage becomes $S_{u+1,I_u}=0$; $B_{u+1,I_u}$ is sampled from a uniform distribution from $\{1, CW_{\min}\}$. In this case, $Z_{u+1,i} = 0$ for all $i=1,\ldots,n$ (recall Remark~1 in Section~\ref{subsec:misalignment}). The duration of the transmission cycle in this case is $B_u + \text{transmission duration with overheads} + ACK + 2m_r$. See the right panel of Figure~\ref{fig:misalignment-explain}.

\item If $|S_{a,u}|> 1$, then more than one node attempted in the $(u+1)^{th}$ transmission cycle, resulting in a collision for all the nodes in $S_{a,u}$. For each node~$j\in S_{a,u}$, its backoff stage will be updated as $S_{u+1,j} = (S_{u,j}+1) mod (K+1)$, where $K$ is the maximum allowed number of retransmissions. $B_{u+1,j}$ is sampled uniformly from the contention window corresponding to $S_{u+1,j}$. The duration of the transmission cycle in this case is $B_u + \text{transmission duration with overheads} + \frac{\Delta}{\sigma}$. See, for example, the transmission cycle $[T_{u+1},T_{u+2}]$ in Figure~\ref{fig:tx-cycle-explain-del}.  

To compute $Z_{u+1,i}, 1\leq i\leq n$, suppose $I_{u,1}$ and $I_{u,2}$ be the indices of the two nodes that attempted last, i.e., $I_{u,2} = \arg\max_{j\in S_{a,u}}(B_{u,j}+Z_{u,j})$, and $I_{u,1}=\arg\max_{j\in S_{a,u}\backslash I_{u,2}}(B_{u,j}+Z_{u,j})$. Then, by Remark~2 in Section~\ref{subsec:misalignment}, it follows that $Z_{u+1,I_{u,2}}=0$, and for all $i\neq I_{u,2}$, $Z_{u+1,i} = B_{u,I_{u,2}}+Z_{u,I_{u,2}} - (B_{u,I_{u,1}}+Z_{u,I_{u,1}})$. Since for all $i\neq I{u,2}$, $Z_{u+1,i}$'s are equal, we denote this common value as $Z_{u+1,+}$.

Note that $Z_{u+1,+}=B_{u,I_{u,2}}+Z_{u,I_{u,2}} - (B_{u,I_{u,1}}+Z_{u,I_{u,1}}) \leq m$, since otherwise Node~$I_{u,2}$ would have heard the transmission from Node~$I_{u,1}$, and refrained from attempting. Also, by our definition of $I_{u,2}$, $B_{u,I_{u,2}}+Z_{u,I_{u,2}} - (B_{u,I_{u,1}}+Z_{u,I_{u,1}}) \geq 0$. It follows that for $m=0$, $Z_{u+1,+} = 0$.
\end{enumerate}

We have simulated this detailed model for the case of $n=2$, default backoff parameters of IEEE~802.11b, and a wide range of propagation delays (with $m=m_r$) to obtain the long run average collision probability, $\gamma$, and compared these analysis results against simulation results obtained from Qualnet\footnote{after correcting an error in the default Qualnet implementation wherein an extra delay of $m_r$ gets added to the NAV of the frozen node in addition to the correct value of $2m_r$.}. 

To see how the long run average collision probability can be obtained from the model-based simulator, note that the model-based simulator runs in steps of transmission cycles. To obtain the average collision probability of Node~$i$, we count the number of transmission cycles in which Node~$i$ made an attempt, denoted by $A_i$, and the number of transmission cycles in which Node~$i$'s attempt encountered a collision, denoted by $C_i$. Then, the average collision probability of Node~$i$ is estimated as $\gamma_i=\frac{C_i}{A_i}$. Note that if the simulation is run long enough, then since the nodes are symmetric, $\gamma_i\approx \gamma_j$, for all $1\leq i,j\leq n$. This was observed in all our simulations. Finally, we estimate the long run average collision probability, $\gamma$, as $\gamma=\frac{1}{n}\sum_{i=1}^n \gamma_i$. 

The results are shown in Figure~\ref{fig:exact-model-gamma-validation}; it can be seen that the proposed model captures the system behavior very accurately. Also, there is a distinct advantage of using a Monte Carlo simulation of this detailed model over using Qualnet (or any other event-driven) simulation for predicting the system performance. Qualnet simulation runs over backoff slots, and works by simulating all the details of the protocol at every node; on the other hand, the model-based simulator runs over transmission cycles, and eliminates all unnecessary details of the protocol. Hence, the model-based simulator can run much faster while achieving comparable accuracy. 

\begin{figure}[t]
\begin{center}
\includegraphics[scale=0.5]{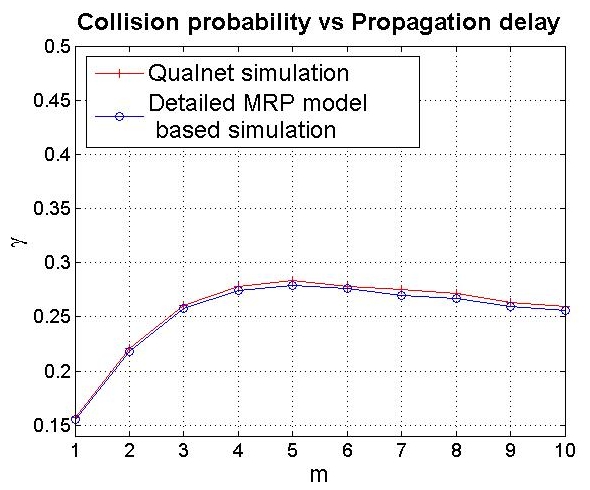}

\caption{Collision probability ($\gamma$) vs. propagation delay ($m$). Comparison of collision probabilities obtained via a Monte-Carlo simulation of the detailed MRP model against those obtained from Qualnet simulations \cite{qualnet}.}
\label{fig:exact-model-gamma-validation}
\end{center}
\normalsize
\end{figure}
However, the proposed model involves an embedded $3n$-dimensional Markov chain, whose state space has size $(nm+1)(W_0+W_1+\cdots+W_K)^n$, where $K$ is the retransmission limit for the protocol, and $W_j$ is the contention window size for backoff stage $j$. For the default protocol parameters of IEEE~802.11b, the size of the state space is prohibitively large even for $m=1$, and $n=2$, making an exact analysis of the embedded Markov chain computationally intractable. We, therefore, focus on developing an approximate, parsimonious analysis, as was done in Section~\ref{sec:mrp-state-dependent}. 

\section{A Parsimonious Simplification of the Markov Renewal Model in Section~\ref{sec:exact_model_del}}
\label{sec:mrp-state-dependent-del} 

\subsection{An approximate Markov renewal model for system evolution}
\label{subsec:system-evolution-mrp-del}

As in Section~\ref{sec:mrp-state-dependent}, we retain the embedded Markov process structure at the starts of transmission cycles, $T_u$, but simplify the evolution of the process between these embedding points by introducing a Bernoulli attempt process approximation for the nodes with \emph{state dependent attempt rates}, namely, $\bs,\bc$, and $\bd$, where $\bs,\bc,$ and $\bd$ have the same interpretation as before. The motivation for the state dependent attempt rates comes from the observation of short term unfairness in Figure~\ref{fig:short-term-unfairness-m7}, where the bursty success processes (Panel~1) indicate that the attempt rates are skewed in favor of the last successful node. 

Furthermore, following the same arguments as in Section~\ref{sec:mrp-state-dependent}, we associate with each epoch $T_u$, a state, $N_u$, \emph{the number of nodes that attempted in the previous cycle}.

\noindent
\textbf{Accounting for \emph{possible misalignment} in case of large propagation delay:}
We saw in Section~\ref{subsubsec:simo-reigadas-performance} that if we do not account for the \emph{possible misalignment} of backoff counters of the nodes after a collision (Section~\ref{subsec:misalignment}), the resulting analysis is not accurate. To account for this, we associate with each $T_u$, another state, namely, the \emph{misalignment}, $Z_u$, of the backoff counters of the nodes at $T_u$. Note that $Z_u=0$ if there was a success in the last transmission cycle, and $Z_u=Z_{u,+}\in\{0,1,\ldots,m\}$ otherwise. For example, in Figure~\ref{fig:tx-cycle-explain-del}, the \emph{misalignments} at $T_u$, $T_{u+1}$ and $T_{u+2}$ are respectively $k_1$, $0$, and $k_2$ slots. 

Further note that to use the state dependent attempt rates, we need to know whether a transmission cycle ended in a success, or a collision. Observe that while $Z_u>0$ clearly indicates a collision in the previous transmission cycle, $Z_u=0$ could indicate either a collision or a success in the previous transmission cycle. To distinguish between these two cases, we introduce two new states, namely $0_s$, and $0_c$, indicating that there is no misalignment of the backoff counters at $T_u$, and that the previous transmission cycle ended in a success, or a collision respectively. Thus, in our new model, $Z_u\in\{0_s,0_c,1,\ldots, m\}$. Finally, note that $N_u=1$ if $Z_u=0_s$, and $N_u\geq 2$ otherwise. 

Our approximations can be summarized as follows:

\noindent
\textbf{(A1)} If $Z_u = 0_s$, all the nodes start their backoffs from $T_u$. The node that was successful in the previous transmission cycle attempts independently with probability $\beta_s$ in each slot, conditioned on being in backoff. The other nodes attempt independently with probability $\beta_d$ in each slot, conditioned on being in backoff.\hfill \Square

\noindent 
\textbf{(A2)} If $Z_u = 0_c$, all nodes start their backoffs from $T_u$. $N_u$ of the nodes attempt independently with probability $\beta_c$ in each slot, while the remaining $n-N_u$ nodes attempt independently with probability $\beta_d$ in each slot, all conditioned on being in backoff. If $Z_u = k>0$, $N_u$ of the nodes attempt independently with probability $\beta_c$ in each slot, conditioned on being in backoff, one starting from $T_u$, and the others, starting from $T_u+k$ (Remark~2, Section~\ref{subsec:misalignment}); the remaining $n-N_u$ nodes attempt independently with probability $\beta_d$ in each slot, conditioned on being in backoff, starting from $T_u+k$.\hfill \Square 

\noindent
\textbf{A simple Markov renewal process model for the system:}
With these approximations, observe that the process $\{(Z_u,N_u), T_u\}$, is a Markov renewal process (MRP), the state space of the embedded Markov chain being $\{0_s,0_c,1,\ldots,m\}\times\{1,\ldots,n\}$. Also, observe that for $n=2$ and arbitrary $m$, it suffices to consider only the state $Z_u$, thus reducing the state space. We develop the details for this case. The underlying principles apply to the more general setting as well, but the equations become more involved. 

\subsection{Analysis of the MRP, given $\beta_c, \beta_d$, and $\beta_s$}
\label{subsec:mrp-analysis-given-beta-del}

As just mentioned, for $n=2$ and arbitrary $m$, $\{Z_u, T_u\}$ is a Markov renewal process (MRP), the state space of the embedded Markov chain being $\{0_s,0_c,1,\ldots,m\}$. This Markov renewal process model has $m$ as a parameter, and requires the quantities $\beta_c, \beta_d$, and $\beta_s$, which are not known a priori. We shall explain how to compute $\beta_c, \beta_d$, and $\beta_s$ in Section~\ref{subsec:tagged-node-evolution-del}. Given $\beta_c, \beta_d$, and $\beta_s$, let $P$ be the transition probability matrix of the embedded Markov chain. We now proceed to write down the transition probabilities. We use the shorthand $p(i,j)$ to denote the probability $Pr[Z_{u+1}=j|Z_u = i]$. 

\noindent
\emph{Computation of transition probabilities from $0_s$}:

If $Z_u=0_s$, three possible events can lead to the state $Z_{u+1}=0_s$.
\begin{enumerate}
 \item The node that was successful in the previous cycle attempts in the first slot, and the other node does not attempt in the first slot, and the next $m$ slots, thus ensuring that the former is successful again in the current cycle. This happens with probability $\beta_s(1-\beta_d)^{(m+1)}$.
\item The node that was frozen in the previous cycle attempts in the first slot, and the other node does not attempt in the first slot, and the next $m$ slots, thus ensuring that the former is successful in the current cycle. This happens with probability $\beta_d(1-\beta_s)^{(m+1)}$. 
\item None of the nodes attempt in the first slot; this happens with probability $(1-\beta_d)(1-\beta_s)$. In this case, due to the assumption of Bernoulli attempt processes, the system encounters a renewal with state $0_s$, and the conditional probability (given that none of the nodes attempted in the first slot) of the next state being $0_s$ remains $p(0_s,0_s)$. 
\end{enumerate}
Putting all of these together, we have
\begin{equation}
 p(0_s,0_s) = \frac{\beta_s(1-\beta_d)^{(m+1)}+\beta_d(1-\beta_s)^{(m+1)}}{1-(1-\beta_d)(1-\beta_s)}
\end{equation}

Using similar arguments, we have
\begin{eqnarray}
 p(0_s,0_c) = \frac{\beta_s\beta_d}{1-(1-\beta_d)(1-\beta_s)}\\
 p(0_s,k) = \frac{\beta_s(1-\beta_d)^k \beta_d + \beta_d(1-\beta_s)^k \beta_s}{1-(1-\beta_d)(1-\beta_s)}\:\forall k=1,\ldots,m
\end{eqnarray}

\noindent
\emph{Computation of transition probabilities from $0_c$}:

Note that when $Z_u=0_c$, Approximation~(A2) is in force. Then, we can use similar renewal argument as before to conclude that
\begin{eqnarray}
 p(0_c,0_s) = \frac{2\beta_c(1-\beta_c)^{(m+1)}}{1-(1-\beta_c)^2}\\
 p(0_c,0_c) = \frac{\beta_c^2}{1-(1-\beta_c)^2}\\
 p(0_c,k) = \frac{2\beta_c(1-\beta_c)^k \beta_c}{1-(1-\beta_c)^2}
\end{eqnarray}

\noindent
\emph{Computation of transition probabilities from state $k\in\{1,\ldots,m\}$}:

Note that when $Z_u = k > 0$, Approximation~(A2) is in force, and exactly one node (let us denote it as Node~1) begins its backoff process from slot 1, while the other node (denote it by Node~2) begins its backoff process from slot $k+1$. Conditioned on being in backoff, each node attempts independently with probability $\beta_c$ in each slot. There are two sets of events that can lead to the state $Z_{u+1}=0_s$:
\begin{enumerate}
 \item Node~1 does not make an attempt in the first $k$ slots; this happens with probability $(1-\beta_c)^k$. In this case, due to the memoryless property of the Bernoulli attempt processes of the nodes, the system undergoes a renewal at the end of slot $k$ with state $0_c$, and the conditional probability (given that Node~1 did not attempt in the first $k$ slots) that $Z_{u+1}=0_s$ is $p(0_c,0_s)$.
\item Node~1 attempts at slot $j$, $1\leq j\leq k$; this happens with probability $(1-\beta_c)^{(j-1)}\beta_c$. Then Node~1 will be successful (thus leading to $Z_{u+1}=0_s$) if and only if Node~2 does not attempt anywhere between slots $(k+1)$ and $(j+m)$, both inclusive; the probability of this event is $1-p^{(k)}_j$, where we define $p^{(k)}_j\define 1-(1-\beta_c)^{(j+m-k)}$, as the probability that Node~2 attempts somewhere between slots $(k+1)$ and $(j+m)$.  
\end{enumerate}
Putting these together, we have
\begin{equation}
 p(k,0_s) = (1-\beta_c)^k p(0_c,0_s) + \sum_{j=1}^k (1-\beta_c)^{(j-1)}\beta_c (1-p^{(k)}_j)
\end{equation}

Observing that if Node~1 attempts within the first $k$ slots, then $Z_{u+1}$, i.e., the next state, cannot be $0_c$, and using a renewal argument as above, we have
\begin{equation}
 p(k,0_c) = (1-\beta_c)^k p(0_c,0_c)
\end{equation}

Finally, there are two sets of events that can lead to the state $Z_{u+1}=k^{\prime}\in\{1,\ldots,m\}$:
\begin{enumerate}
 \item Node~1 does not make an attempt in the first $k$ slots; this happens with probability $(1-\beta_c)^k$. In this case, due to the memoryless property of the Bernoulli attempt processes of the nodes, the system undergoes a renewal at the end of slot $k$ with state $0_c$, and the conditional probability (given that Node~1 did not attempt in the first $k$ slots) that $Z_{u+1}=k^{\prime}$ is $p(0_c,k^{\prime})$.
\item Node~1 attempts at slot $j$, $1\leq j\leq k$; this happens with probability $(1-\beta_c)^{(j-1)}\beta_c$. Then the next state can be $k^{\prime}$ if and only if Node~2 attempts at slot $j+k^{\prime}$. Recalling that Node~2 begins its backoff process from slot $(k+1)$, this happens with probability $(1-\beta_c)^{(j+k^{\prime}-(k+1))}\beta_c$, provided $j\geq k-k^{\prime}+1$. 
\end{enumerate}
Combining these, we have
\begin{align}
 p(k,k^{\prime}) &= (1-\beta_c)^k p(0_c,k^{\prime})+ \sum_{j=\max\{1,k-k^{\prime}+1\}}^k (1-\beta_c)^{(j-1)}\beta_c (1-\beta_c)^{(j+k^{\prime}-(k+1))}\beta_c
\end{align}

From the above transition probability structure, it is easy to observe that for positive attempt rates, the embedded DTMC is finite, irreducible, and hence, \emph{positive recurrent}. Let $\pi$ denote the stationary distribution of this DTMC, which can be obtained as the unique solution to the system of equations $\pi=\pi P$, subject to $\displaystyle{\sum_{k\in\{0_s,0_c,1,\ldots,m\}}}\pi(k)=1$. 

\subsubsection{Obtaining the collision probability, $\gamma$, for $n=2$, and arbitrary $m$}
\label{subsubsec:gamma}
By symmetry, the long run average collision probability for both the nodes is the same, which we denote by $\gamma$. It is defined as 
\begin{equation*}
 \gamma = \lim_{t\to\infty}\frac{C_i(t)}{A_i(t)},\:i=1,2
\end{equation*}
where, $C_i(t)$ and $A_i(t)$ denote respectively, the number of collisions and the number of attempts by Node~$i$ until time $t$. Denoting $C(t)\define\sum_{i=1}^2 C_i(t)$, the total number of collisions in the system until time $t$, and $A(t)\define\sum_{i=1}^2 A_i(t)$, the total number of attempts in the system until time $t$, it is also easy to observe (by noting that the long run time-average collision rates, and the long run time-average attempt rates of both the nodes are equal by symmetry) that 
\begin{equation*}
 \gamma = \lim_{t\to\infty}\frac{C(t)}{A(t)}
\end{equation*}

Denote by $\Ccal$ and $\Acal$, respectively, the random variables representing the number of collisions, and the number of attempts in the system in a transmission cycle. Then, using Markov regenerative theory, we have
\begin{equation}
\gamma = \frac{\sum_{k\in\{0_s,0_c,\ldots,m\}}\pi(k)E\Ccal(k)}{\sum_{k\in\{0_s,0_c,\ldots,m\}}\pi(k)E\Acal(k)}\: a.s\label{eqn:gamma-expression} 
\end{equation}
where, $E\Ccal(k)$ and $E\Acal(k)$ denote respectively, the expected number of collisions, and attempts in the system in a transmission cycle starting with state $k$, and can be computed by using renewal arguments similar to those used for obtaining the transition probabilities earlier, and observing that every collision event results in 2 collisions (and involves 2 attempts, one from each node), and every success event involves 1 attempt (from the successful node). We write down the expressions for $E\Ccal(\cdot)$ and $E\Acal(\cdot)$ below:

\begin{align}
 E\Ccal(0_s) &= \frac{\bs(1-\bd)q_d\cdot 2 + \bd(1-\bs)q_s\cdot 2 + 2\bs\bd}{1-(1-\bs)(1-\bd)}\\
E\Acal(0_s) &= \frac{\bs(1-\bd)(1+q_d) + \bd(1-\bs)(1+q_s) + 2\bs\bd}{1-(1-\bs)(1-\bd)}\\
E\Ccal(0_c) &= \frac{2\bc(1-\bc)q_c\cdot 2 + 2\bc^2}{1-(1-\bc)^2}\\
E\Acal(0_c) &= \frac{2\bc(1-\bc)(1+q_c) + 2\bc^2}{1-(1-\bc)^2}\\
E\Ccal(k) &= (1-\bc)^k E\Ccal(0_c) + \sum_{j=1}^k (1-\bc)^{(j-1)}\bc p_j^{(k)}\cdot 2\:\forall k=1,\ldots,m\\
E\Acal(k) &= (1-\bc)^k E\Acal(0_c)\nonumber\\
& + \sum_{j=1}^k (1-\bc)^{(j-1)}\bc (1+p_j^{(k)})\:\forall k=1,\ldots,m
\end{align}
where, we define $q_d\define 1-(1-\bd)^m$, $q_s\define 1-(1-\bs)^m$, and $q_c\define 1-(1-\bc)^m$. This completes the computation of the average collision probability, $\gamma$, given the conditional attempt rates $\bd$, $\bs$, $\bc$. 

\subsubsection{Obtaining the normalized system throughput, $\Theta$, for $n=2$, and arbitrary $m$}
\label{subsubsec:throughput}
The normalized system throughput is defined as
\begin{equation*}
 \Theta = \lim_{t\to\infty}\frac{T(t)}{t}
\end{equation*}
where $T(t)$ is the total successful data transmission duration without overheads until time $t$. 

Denote by $\Tcal$, the random variable representing the duration of successful data transmission excluding overheads in a transmission cycle. Then, by Markov regenerative theory, we have
\begin{equation}
 \Theta = \frac{\sum_{k\in\{0_s,0_c,\ldots,m\}}\pi(k)E\Tcal(k)}{\sum_{k\in\{0_s,0_c,\ldots,m\}}\pi(k)EX(k)}\: a.s\label{eqn:theta-expression}
\end{equation}
where, $E\Tcal(k)$ and $EX(k)$ are, respectively, the mean duration of successful data transmission excluding overheads, and the mean duration of the transmission cycle when the transmission cycle starts in state $k$. Letting $T_{d}$, $T_o$, $\Delta$, and $\sigma$ denote respectively the data packet duration, Rx-to-tx turnaround time, propagation delay, and slot duration, we can write down the expressions for $E\Tcal(\cdot)$ and $EX(\cdot)$ using renewal arguments similar to those given earlier as follows.

\begin{align}
 EX(0_s) &= \frac{1}{1-(1-\bs)(1-\bd)}[1+(\bs\bd + \bs(1-\bd)q_d\nonumber\\
& + \bd(1-\bs)q_s)T_c + (\bs(1-\bd)(1-q_d)\nonumber\\
&+\bd(1-\bs)(1-q_s))T_s]\\
E\Tcal(0_s) &= \frac{(\bs(1-\bd)(1-q_d)+\bd(1-\bs)(1-q_s))T_d}{1-(1-\bs)(1-\bd)}\\
EX(0_c) &= \frac{1}{1-(1-\bc)^2}[1+(\bc^2+2\bc(1-\bc)q_c)T_c\nonumber\\
&+2\bc(1-\bc)(1-q_c)T_s]\\
E\Tcal(0_c) &= \frac{2\bc(1-\bc)(1-q_c)T_d}{1-(1-\bc)^2}\\
EX(k) &= (1-\bc)^k(k+EX(0_c)) + \sum_{j=1}^k(1-\bc)^{j-1}\bc[j\nonumber\\
&+(1-p^{(k)}_j)T_s + p^{(k)}_j T_c]\:\forall k\in\{1,\ldots,m\}\\
E\Tcal(k) &= (1-\bc)^k E\Tcal(0_c) + \sum_{j=1}^k(1-\bc)^{j-1}\bc(1-p^{(k)}_j)T_d
\end{align}
where, $T_s$ is the time duration in a successful transmission cycle from the start of the data transmission in the medium until the time the medium is idle again, and some node starts counting its backoff (i.e., until the start of the next transmission cycle), and is given by
\begin{align*}
 T_s &= T_d + ACK + 2\times PHY\:\: HDR + 2T_o + SIFS + DIFS + 2\Delta_r
\end{align*}
and $T_c$ is the time duration in a collision transmission cycle from the start of the first data transmission in the medium until the time some node starts counting its backoff (i.e., until the start of the next transmission cycle), and is given by
\begin{align*}
 T_c &= T_d + PHY\:\: HDR + T_o + SIFS + DIFS + \Delta
\end{align*}

This completes the analysis of the system evolution, given $\bs$, $\bd$, $\bc$. 
 
It remains to obtain the state dependent attempt rates $\bs$, $\bd$, $\bc$. To do this, we focus on the evolution at a tagged node as described next.

\subsection{Analysis for determining $\beta_c, \beta_d$, and $\beta_s$}
\label{subsec:tagged-node-evolution-del}
Here we shall set up a system of fixed point equations in $\bc$, $\bd$, and $\bs$ by modeling the evolution at a tagged node; the method is similar to what was done in Section~\ref{subsec:tagged-node-evolution}. This can, in turn, be solved iteratively to yield the rates. 
\begin{figure}[ht]
\begin{center}
\includegraphics[scale=0.4]{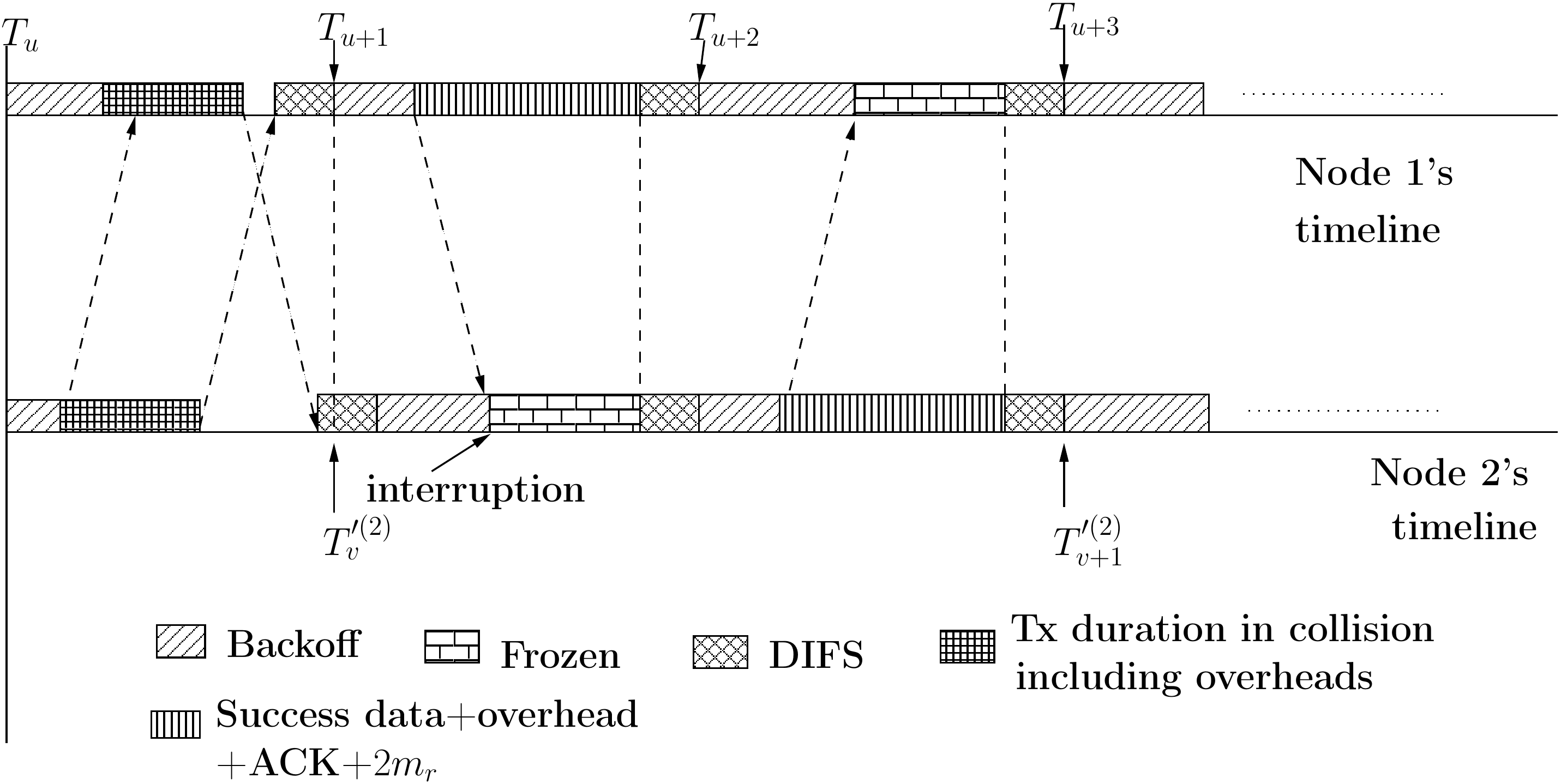}
\caption{\textbf{Backoff Cycles} for a tagged node, Node~2 in this case. The two timelines demonstrate the system evolution in unconditional time over three consecutive transmission cycles, with $T_u$,\ldots, $T_{u+3}$ being the start and end points of the transmission cycles. The explanation of the evolution of these timelines is similar to those in Figures~\ref{fig:misalignment-explain} and \ref{fig:tx-cycle-explain-del}. Denote by $T^{\prime (i)}_v$, \emph{the start of the transmission cycle following the $v^{th}$ transmission by the tagged node, $i$}, Node~2 in this example. The interval $[T^{\prime (2)}_v,T^{\prime (2)}_{v+1}]$ is called a \emph{backoff cycle} of Node~2, since in this interval, Node~2 completes one full backoff. Note that the tagged node can have \emph{exactly one attempt (backoff completion)}, and several intermediate backoff \emph{interruptions} in a backoff cycle. During each system transmission cycle $[T_u, T_{u+1}]$, any node can have at most one backoff segment. Thus the backoff chosen at the start of a tagged node's backoff cycle is partitioned into several backoff segments over a random number of system transmission cycles during the tagged node's backoff cycle. Thus, a backoff cycle can encompass several transmission cycles during which the tagged node was interrupted (i.e., did not attempt).}
\label{fig:bo-cycle-explain-del}
\end{center}
\normalsize
\end{figure}
We consider the evolution of the process at the tagged node, say Node~$i$, and identify embedding instants $T^{\prime (i)}_v$ in this process as explained in Figure~\ref{fig:bo-cycle-explain-del}, where the transmission cycle break-points $T_u,\ldots$ are shown, along with the epochs $T^{\prime (2)}_v\ldots$ for Node~2 (the tagged node). After each such epoch, the tagged node samples a new backoff, using its current backoff stage $S_v$. We associate with each $T^{\prime (i)}_v$, three states: (i) $S_v\in\{0,1,\ldots,K\}$, Node~$i's$ new backoff stage, (ii) $X_v\in\{0_s,0_c,\pm 1,\ldots, \pm m\}$, Node~$i's$ \emph{relative} misalignment w.r.t the other nodes at $T^{\prime (i)}_v$, where $X_v=+k$ means Node~$i$ will start backoff at $T^{\prime (i)}_v+k$, and $X_v=-k$ means Node~$i$ starts backoff at $T^{\prime (i)}_v$, while all the others start at $T^{\prime (i)}_v+k$.  Observe that $S_v > 0\Rightarrow X_v\neq 0_s$, since a successful transmission by Node~$i$ would have reset $S_v$ to zero. (iii) $N_v\in\{1,\ldots,n\}$, number of nodes (including the tagged Node~$i$) that attempted in the just concluded \emph{transmission cycle}. For $n=2$ and arbitrary $m$, $N_v$ is completely determined by $X_v$ (e.g., $X_v = 0_s \Rightarrow N_v = 1$), thus reducing the state space. On the other hand, for $m=0$ and arbitrary $n$, $X_v$ is completely determined by $N_v$ (e.g., $N_v>1\Rightarrow X_v=0_c$), thus again reducing the state space; this is, in fact, what was done in Section~\ref{subsec:tagged-node-evolution}.   

Notice from Figure~\ref{fig:bo-cycle-explain-del} that as before (Section~\ref{subsec:tagged-node-evolution}), \emph{transmission cycles} are common to the entire system, whereas \emph{backoff cycles} are defined for each node. Each backoff cycle of a node \emph{comprises one or more transmission cycles of the system}. \emph{The backoff cycle of a tagged node can comprise several successful transmissions and/or collisions by the other nodes}, and ends at the end of a transmission cycle in which the tagged node transmits. 

In the same vein as Approximations~(A3) and (A4) in Section~\ref{subsec:tagged-node-evolution}, we make the following approximations.

\noindent
 \textbf{(A3)} Node~$i$ samples its successive back-offs from a uniform distribution, as in the standard. When a new backoff cycle starts for Node~$i$, if $X_v=0_s$, the other nodes, conditioned on being in backoff, attempt independently in each slot with probability $\bd$ \emph{until the end of the first transmission cycle within this backoff cycle}. If $X_v\neq 0_s$, $N_v-1$ of the nodes, conditioned on being in backoff, attempt independently in each slot with probability $\bc$, and the remaining $n-N_v$ nodes, conditioned on being in backoff, attempt independently in each slot with probability $\bd$ \emph{until the end of the first transmission cycle within this backoff cycle}.\hfill \Square

\noindent
\textbf{(A4)} If Node~$i$ is interrupted within a backoff cycle due to attempts by $n_a$ other nodes ($1\leq n_a\leq n-1$), thus freezing its backoff (see Figure~\ref{fig:bo-cycle-explain-del}), then in the next transmission cycle within this backoff cycle, Node~$i$ resumes its residual backoff countdown, all the $n-1-n_a$ nodes (excluding Node~$i$) that did not attempt in the previous transmission cycle attempt independently in each slot with probability $\bd$, conditioned on being in backoff, while the $n_a$ nodes that attempted in the previous transmission cycle attempt with probability $\beta_c$ or $\beta_s$ (depending on whether the previous transmission cycle ended in collision or success, i.e., whether $n_a > 1$ or $n_a=1$) in each slot, conditioned on being in backoff.\hfill \Square

%

Under assumptions~(A3)-(A4), observe that the process $\{(S_v,X_v,N_v), T^{\prime (i)}_v\}$ is a Markov Renewal process (MRP), with the state space of the embedded Markov chain being $\{0,\ldots,K\}\times\{0_s,0_c,\pm 1,\ldots,\pm m\}\times\{1,\ldots,n\}$. We shall develop the details here for $n=2$ and arbitrary $m$.

In this case, the process $\{(S_v,X_v), T^{\prime (i)}_v\}$ is a Markov Renewal process (MRP) with state space of the embedded Markov chain being $\{0,\ldots,K\}\times\{0_s,0_c,\pm 1,\ldots,\pm m\}$. We now proceed to derive the transition structure of the embedded Markov chain. We denote by $W_s$, the contention window size for backoff stage $s$, $s\in\{0,1,\ldots,K\}$. We denote the tagged node as Node~$i$, and the only other node as Node~$j$.

\subsubsection{Transition structure of the embedded Markov chain for $n=2$ and arbitrary $m$}
\label{subsubsec:tagged-node-embedded-dtmc-transition}

Denote by $Q_I[(s_2,x_2)|(s_1,x_1)]$ (respectively, $P_{nI}[(s_2,x_2)|(s_1,x_1)]$) the probability that Node~$i$ is (respectively, is not) interrupted in a backoff cycle starting in state $(s_1,x_1)$, \emph{and} its backoff completion results in state $(s_2,x_2)$.  

Let $Q$ denote the transition probability matrix of the embedded DTMC at the epochs $T^{\prime (i)}_v$. Then, we can write, for any $s\in\{0,\ldots,K\}$, any $x\in\{0_s,0_c,\pm 1,\ldots,\pm m\}$, and any $x^{\prime}\in\{0_c,\pm 1,\ldots,\pm m\}$, 

\begin{align}
 Q((s,x),(0,0_s)) &= P_{nI}[(0,0_s)|(s,x)] + Q_I[(0,0_s)|(s,x)]\label{eqn:Q-0-0s}\\
Q((s,x),((s+1)mod(K+1),x^{\prime})) &= P_{nI}[((s+1)mod(K+1),x^{\prime})|(s,x)] + Q_I[((s+1)mod(K+1),x^{\prime})|(s,x)]\label{eqn:Q-s-plus1-xprime}
\end{align}
All other entries in $Q$ are zero; since we embedded after transmissions of the tagged node, there are only two possibilities: success or collision of the tagged node's transmission. 

We next compute the probabilities $Q_{I}[(\cdot,\cdot)|(\cdot,\cdot)]$, and $P_{nI}[(\cdot,\cdot)|(\cdot,\cdot)]$. 

\subsubsection{Computation of $Q_{I}[(\cdot,\cdot)|(\cdot,\cdot)]$}
\label{subsubsec:compute-prob-int}

Define $h(b,x)$ as the probability that Node~$i's$ subsequent backoff completion leads to a relative misalignment of $x\in\{0_s,0_c,\pm 1,\ldots,\pm m\}$ w.r.t the other node, given that Node~$i$ started with a residual backoff of $b$ after an interruption. These probabilities can be computed recursively as follows:

Note that when Node~$i$ is interrupted in a backoff cycle, (A4) is in force; thus Node~$j$ attempts w.p $\bs$ in each slot, conditioned on being in backoff. Let us compute $h(b,+k)$. When Node~$i$ starts with a residual backoff $b$ after interruption, there are two possibilities:

\noindent

1. Node~$i$ completes its backoff without further interruption, and ends up with relative misalignment of $+k$ w.r.t Node~$j$. From left panel of Figure~\ref{fig:misalignment-explain}, this can happen only if Node~$j$ attempts $k$ slots after Node~$i$, i.e., with probability $(1-\bs)^{b+k-1}\bs$. 

\noindent

2. Node~$i$ is interrupted again; this happens if Node~$j$ attempts at some slot $w$ such that $1\leq w\leq (b-m-1)$, provided $b\geq (m+2)$, so that Node~$i$ hears from Node~$j$ at slot $(w+m)\leq (b-1)$. Thus, the residual backoff of Node~$i$ following the interruption will be $b-(w+m)$.

Combining these two possibilities, we can write, for all $1\leq b\leq W_K-1$, and for all $k\in\{1,\ldots,m\}$
\begin{align}
 h(b,+k) &= (1-\bs)^{b+k-1}\bs + [\sum_{w=1}^{b-m-1}(1-\bs)^{w-1}\bs\nonumber\\
& \times h(b-(w+m),+k)]\ind_{b\geq m+2}\label{eqn:h-b-plus-k}
\end{align}

By similar arguments, we also have, for all $1\leq b\leq W_K-1$, and for all $k\in\{0_c,1,\ldots,m\}$

\begin{align}
 h(b,-k) &= (1-\bs)^{b-k-1}\bs \ind_{b\geq k+1} + [\sum_{w=1}^{b-m-1}(1-\bs)^{w-1}\bs\nonumber\\
&\times h(b-(w+m),-k)]\ind_{b\geq m+2}\label{eqn:h-b-minus-k}\\
h(b,0_s) &= (1-\bs)^{b+m} + [\sum_{w=1}^{b-m-1}(1-\bs)^{w-1}\bs\nonumber\\
& \times h(b-(w+m),0_s)]\ind_{b\geq m+2}\label{eqn:h-b-0s}
\end{align}

Now, we can compute $Q_I[(\cdot,\cdot)|(\cdot,\cdot)]$ in terms of the above probabilities. Suppose, a backoff cycle starts with state $(s,+k)$, for some $s\in\{0,\ldots,K\}$, and $k\in\{0_c,1,\ldots,m\}$. Note that (A3) is in force. Suppose Node~$i$ samples (uniformly from $\{1,\ldots, W_s\}$) a backoff of $l$ slots. Then, Node~$i$ is slated to attempt at slot $l+k$, and it will be interrupted if Node~$j$ attempts somewhere between slots $1$ and $l+k-m-1$, provided $l\geq (m-k+1)$ so that Node~$i$ hears from Node~$j$ by slot $l+k-1$, and freezes its backoff. Suppose Node~$j$ attempts at slot $w$, $1\leq w\leq l-(m-k+1)$. Then the residual backoff of Node~$i$ following the interruption will be $l+k-(w+m)$, and the conditional probability that the subsequent backoff completion leads to state $(s^{\prime},x^{\prime})$ will be $h(l+k-(w+m),x^{\prime})$, where $s^{\prime}=0$ if $x^{\prime} = 0_s$, and $s^{\prime}=(s+1)mod(K+1)$ otherwise. Thus, we have, for all $s\in\{0,\ldots,K\}$, for all $k\in\{0_c,1,\ldots,m\}$, and for all $x^{\prime}\in\{0_s,0_c,\pm 1,\ldots,\pm m\}$,

\begin{align}
 Q_I[(s^{\prime},x^{\prime})|(s,+k)] &= \frac{1}{W_s}\sum_{l=m-k+1}^{W_s}\sum_{w=1}^{l-(m-k+1)}(1-\bc)^{w-1}\bc\nonumber\\
& \times h(l+k-(w+m),x^{\prime})
\end{align}
with $s^{\prime}=0$ if $x^{\prime} = 0_s$, and $s^{\prime}=(s+1)mod(K+1)$ otherwise.

Using similar arguments, we also have, for all $s\in\{0,\ldots,K\}$, for all $k\in\{1,\ldots,m\}$, and for all $x^{\prime}\in\{0_s,0_c,\pm 1,\ldots,\pm m\}$,
\begin{align}
 Q_I[(s^{\prime},x^{\prime})|(s,-k)] &= \frac{1}{W_s}\sum_{l=m+k+1}^{W_s}\sum_{w=1}^{l-(m+k+1)}(1-\bc)^{w-1}\bc\nonumber\\
&\times h(l-(w+k+m),x^{\prime})\\
Q_I[(s^{\prime},x^{\prime})|(0,0_s)] &= \frac{1}{W_0}\sum_{l=m+1}^{W_0}\sum_{w=1}^{l-(m+1)}(1-\bd)^{w-1}\bd\nonumber\\
& \times h(l-(w+m),x^{\prime})
\end{align}
with $s^{\prime}=0$ if $x^{\prime} = 0_s$, and $s^{\prime}=(s+1)mod(K+1)$ otherwise.

\subsubsection{Computation of $P_{nI}[(\cdot,\cdot)|(\cdot,\cdot)]$}
\label{subsubsec:compute-prob-no-int}
Again, let us start with the states with no misalignment.

\noindent
\emph{Computation of $P_{nI}[(\cdot,\cdot)|(0,0_s)]$ and $P_{nI}[(\cdot,\cdot)|(s,0_c)]$}

First observe that starting from state $(0,0_s)$, transition probability to any state with backoff stage other than 0 or 1, is zero. Similarly, starting from state $(s,0_c)$, transition probability to any state with backoff stage other than 0 or $(s+1)mod(K+1)$, is zero.

Suppose the backoff cycle starts with state $(0,0_s)$. Suppose Node~$i$ samples (uniformly from $\{1,\ldots, W_0\}$) a backoff of $l$ slots. 
\begin{enumerate}
 \item Node~$i$ will complete its backoff without interruption, and the resulting attempt will be successful if Node~$j$ does not attempt between slot 1 and slot $l+m$, both inclusive; this happens with probability $(1-\bd)^{(l+m)}$. Thus, we have
\begin{equation}
 P_{nI}[(0,0_s)|(0,0_s)] = \frac{1}{W_0}\sum_{l=1}^{W_0}(1-\bd)^{(l+m)}\label{eqn:p-ni-0-0s-to-0-0s}
\end{equation}
 
\item Node~$i$ will complete its backoff without interruption, and the resulting attempt will encounter a collision leading to state $(1,0_c)$ if Node~$j$ also attempts exactly at the end of slot $l$; this happens with probability $(1-\bd)^{l-1} \bd$. Thus, we have
\begin{equation}
 P_{nI}[(1,0_c)|(0,0_s)] = \frac{1}{W_0}\sum_{l=1}^{W_0}(1-\bd)^{l-1} \bd\label{eqn:p-ni-0-0s-to-1-0c}
\end{equation}

\item Node~$i$ will complete its backoff without interruption, and the resulting attempt will encounter a collision leading to state $(1,+k)$ if Node~$j$ attempts $k$ slots later than Node~$i$ in the current cycle (recall Figure~\ref{fig:misalignment-explain}, and the associated explanation in Section~\ref{subsec:misalignment}), i.e., at slot $l+k$. This happens with probability $(1-\bd)^{(l+k-1)} \bd$. Thus, we have, for all $k\in\{1,\ldots,m\}$,
\begin{equation}
 P_{nI}[(1,+k)|(0,0_s)] = \frac{1}{W_0}\sum_{l=1}^{W_0}(1-\bd)^{(l+k-1)} \bd\label{eqn:p-ni-0-0s-to-1-plus-k}
\end{equation}

\item Finally, using similar arguments, for all $k\in\{1,\ldots,m\}$,
\begin{equation}
 P_{nI}[(1,-k)|(0,0_s)] = \frac{1}{W_0}\sum_{l=k+1}^{W_0}(1-\bd)^{(l-k-1)} \bd\label{eqn:p-ni-0-0s-to-1-minus-k}
\end{equation}
\end{enumerate}

When the backoff cycle starts with state $(s,0_c)$, for any $s\in\{0,\ldots, K\}$, we can use identical arguments as before to write, for any $k\in\{1,\ldots,m\}$,
\begin{align}
P_{nI}[(0,0_s)|(s,0_c)] &= \frac{1}{W_s}\sum_{l=1}^{W_s}(1-\bc)^{(l+m)}\label{eqn:p-ni-s-0c-to-0-0s}\\
P_{nI}[((s+1)mod(K+1),0_c)|(s,0_c)] &= \frac{1}{W_s}\sum_{l=1}^{W_s}(1-\bc)^{l-1} \bc\label{eqn:p-ni-s-0c-to-s-plus1-0c}\\
P_{nI}[((s+1)mod(K+1),+k)|(s,0_c)] &= \frac{1}{W_s}\sum_{l=1}^{W_s}(1-\bc)^{(l+k-1)} \bc\label{eqn:p-ni-s-0c-to-s-plus1-plus-k}\\
P_{nI}[((s+1)mod(K+1),-k)|(s,0_c)] &= \frac{1}{W_s}\sum_{l=k+1}^{W_s}(1-\bc)^{(l-k-1)} \bc\label{eqn:p-ni-s-0c-to-s-plus1-minus-k} 
\end{align}

\noindent
\emph{Computation of $P_{nI}[(\cdot,\cdot)|(s,+k)]$ and $P_{nI}[(\cdot,\cdot)|(s,-k)]$}

When the backoff cycle starts with state $(s,+k)$, the tagged Node~$i$ will start its backoff countdown after $k$ slots, while the other node, i.e., Node~$j$ starts its backoff immediately. Suppose Node~$i$ samples (uniformly from $\{1,\ldots, W_s\}$) a backoff of $l$ slots. Thus, Node~$i$ is supposed to make an attempt at slot $l+k$.
\begin{enumerate}
 \item Node~$i$ will not be interrupted, and its resulting attempt will be successful if Node~$j$ does not attempt until slot $l+k+m$ (starting from slot 1); this hapens with probability $(1-\bc)^{(l+m+k)}$. Thus, we have, for any $s\in\{0,\ldots,K\}$, and any $k\in\{1,\ldots,m\}$,
\begin{equation}
 P_{nI}[(0,0_s)|(s,+k)] = \frac{1}{W_s}\sum_{l=1}^{W_s}(1-\bc)^{(l+m+k)}\label{eqn:p-ni-s-plus-k-to-0-0s}
\end{equation}

\item Node~$i$ will not be interrupted, and its attempt will encounter a collision leading to the state $((s+1)mod(K+1),-k^{\prime})$, for any $k^{\prime}\in\{1,\ldots,m\}$, if Node~$j$ attempts $k^{\prime}$ slots earlier than Node~$i$ in the current cycle, i.e., Node~$j$ attempts at slot $l+k-k^{\prime}$; this happens with probability $(1-\bc)^{(l+k-k^{\prime}-1)}\bc$, provided $l\geq (k^{\prime}-k+1)$. Thus, we have, for any $s\in\{0,\ldots,K\}$, any $k\in\{1,\ldots,m\}$, and any $k^{\prime}\in\{1,\ldots,m\}$,
\begin{align}
 P_{nI}[((s+1)mod(K+1),-k^{\prime})|(s,+k)] &= \frac{1}{W_s}\sum_{l=\max\{1,k^{\prime}-k+1\}}^{W_s}(1-\bc)^{(l+k-k^{\prime}-1)}\bc \label{eqn:p-ni-s-plus-k-to-s-plus1-minus-kprime}
\end{align}

\item Using similar arguments, we also have, for any $s\in\{0,\ldots,K\}$, any $k\in\{1,\ldots,m\}$, and any $k^{\prime}\in\{0_c,1,\ldots,m\}$,
\begin{align}
 P_{nI}[((s+1)mod(K+1),+k^{\prime})|(s,+k)] &= \frac{1}{W_s}\sum_{l=1}^{W_s}(1-\bc)^{(l+k+k^{\prime}-1)}\bc \label{eqn:p-ni-s-plus-k-to-s-plus1-plus-kprime}
\end{align}  
\end{enumerate}

Finally, when the backoff cycle starts with state $(s,-k)$, we can use very similar arguments as before to obtain, for any $s\in\{0,\ldots,K\}$, any $k\in\{1,\ldots,m\}$, and any $k^{\prime}\in\{0_c,1,\ldots,m\}$,
\begin{align}
 P_{nI}[(0,0_s)|(s,-k)] &= \frac{1}{W_s}\sum_{l=1}^{W_s}(1-\bc)^{(l+m-k)}\label{eqn:p-ni-s-minus-k-to-0-0s}\\
P_{nI}[((s+1)mod(K+1),+k^{\prime})|(s,-k)] &= \frac{1}{W_s}\sum_{l=\max\{1,k-k^{\prime}+1\}}^{W_s}(1-\bc)^{(l+k^{\prime}-k-1)}\bc \label{eqn:p-ni-s-minus-k-to-s-plus1-plus-kprime}\\
P_{nI}[((s+1)mod(K+1),-k^{\prime})|(s,-k)] &= \frac{1}{W_s}\sum_{l=k^{\prime}+k+1}^{W_s}(1-\bc)^{(l-k-k^{\prime}-1)}\bc \label{eqn:p-ni-s-minus-k-to-s-plus1-minus-kprime}
\end{align}

This completes the derivation of the transition structure of the embedded DTMC at the epochs $T^{\prime (i)}_v$. It is easy to observe that the embedded DTMC is finite, irreducible (from any state, the state $(0,0_s)$ can be reached in one step, and from $(0,0_s)$, any state can be reached, provided the attempt rates are such that the transition probabilities given by Eqns.~\ref{eqn:Q-0-0s} and \ref{eqn:Q-s-plus1-xprime} are positive), and hence \emph{positive recurrent}. We denote by $\psi$, the stationary distribution of this Markov chain, which can be obtained as the unique solution to the system of equations $\psi = \psi Q$, subject to $\psi$ being a probability distribution.  

Our objective from this exercise was to obtain the mean attempt rates $\bd$, $\bs$, and $\bc$, which we proceed to do next.

Recall that $\bs$ and $\bc$ are the mean attempt rates of a node in a transmission cycle after it resumes backoff following a succeessful transmission, and a collision, respectively, while $\bd$ is the mean attempt rate of a node in a transmission cycle after it resumes backoff following an interruption. Thus, observe that in a backoff cycle of a tagged node, the contributions to $\bs$ and $\bc$ come from only the first transmission cycle within the backoff cycle (i.e., until the point of first interruption of the tagged node within the backoff cycle), whereas the remainder (if any) of the backoff cycle (i.e., from the point of first interruption until backoff completion) contributes towards $\bd$. 

\subsubsection{Computation of $\bd$ for $n=2$, arbitrary $m$}
\label{subsubsec:bd} 

Proceeding along the same lines as in Section~\ref{subsubsec:bd-general-n}, we have
\begin{equation}
 \bd = \frac{\sum_{(s,x)}\psi(s,x)P_I(s,x)}{\sum_{(s,x)}\psi(s,x)E\Bcal_r(s,x)}\: a.s\label{eqn:bd-expression}
\end{equation}
where, $P_{I}(s,x)$ is the probability that Node~$i$ is interrupted when the backoff cycle starts in state $(s,x)$, and $E\Bcal_r(s,x)$ is the mean residual backoff counted by Node~$i$ from its first interruption until its backoff completion in a backoff cycle that started with state $(s,x)$; they can be computed as follows.

\noindent

\emph{Computation of $P_{I}(\cdot,\cdot)$:}
Let us first consider the states with no misalignment.

\noindent
\emph{Computation of $P_{I}(0,0_s)$ and $P_I(s,0_c)$:}

Consider first, the state $(0,0_s)$. Suppose Node~$i$ samples (uniformly from $\{1,2,\ldots, W_0\}$) a backoff of $l$ slots. To be interrupted, it must hear a transmission from Node~$j$ within slot $(l-1)$. Thus, Node~$j$ must make an attempt between slots 1 to $(l-1-m)$, both inclusive, which happens with probability $1-(1-\bd)^{(l-m-1)}$, provided $l>m+1$. Thus, we have
\begin{equation}
 P_{I}(0,0_s) = \frac{1}{W_0}\sum_{l=m+2}^{W_0}[1-(1-\bd)^{(l-m-1)}]\label{eqn:p-i-0-0s}
\end{equation}

By exactly same arguments, we also have
\begin{equation}
 P_I(s,0_c) = \frac{1}{W_s}\sum_{l=m+2}^{W_s}[1-(1-\bc)^{(l-m-1)}]\:\forall s\in\{0,1,\ldots,K\}\label{eqn:p-i-s-0c}
\end{equation}

\noindent
\emph{Computation of $P_{I}(s,+k)$ and $P_I(s,-k)$:}

When the state at the start of the cycle is $(s,+k)$, Node~$i$ will start its backoff $k$ slots later, while Node~$j$ starts its backoff immediately. Suppose Node~$i$ samples (uniformly from $\{1,\ldots, W_s\}$) a backoff of $l$ slots. Then Node~$i$ is supposed to make an attempt at slot $l+k$. To be interrupted, therefore, it must hear from Node~$j$ by slot $l+k-1$, which in turn requires Node~$j$ to make an attempt by slot $l+k-1-m$; this happens with probability $1-(1-\bc)^{(l-(m-k+1))}$, provided $l>(m-k+1)$. Thus, we have, for all $s\in\{0,1,\ldots,K\}$, and for all $k\in\{1,\ldots,m\}$,
\begin{equation}
 P_{I}(s,+k) = \frac{1}{W_s}\sum_{l=m-k+2}^{W_s}[1-(1-\bc)^{(l-(m-k+1))}]\label{eqn:p-i-s-plus-k}
\end{equation}

Using similar arguments, for all $s\in\{0,1,\ldots,K\}$, and for all $k\in\{1,\ldots,m\}$,
\begin{equation}
 P_{I}(s,-k) = \frac{1}{W_s}\sum_{l=m+k+2}^{W_s}[1-(1-\bc)^{(l-(m+k+1))}]\label{eqn:p-i-s-minus-k}
\end{equation}

\noindent
\emph{Computation of $E\Bcal_r(s,x)$:}

Consider a backoff cycle starting with state $(s,+k)$. Suppose Node~$i$ samples (uniformly from $\{1,\ldots, W_s\}$) a backoff of $l$ slots. As was explained earlier, to interrupt Node~$i$, Node~$j$ must make an attempt by slot $l+k-1-m$, provided $l\geq (m-k+2)$. Suppose Node~$j$ makes an attempt at slot $w$, $1\leq w\leq l+k-1-m$; this happens with probability $(1-\bc)^{w-1}\bc$. Thus, Node~$i$ hears from Node~$j$ at slot $(w+m)$, and freezes its backoff. Thus, the residual backoff of Node~$i$ is $l+k-(w+m)$. Thus, we have, for any $k\in\{0_c,1,\ldots,m\}$, and any $s\in\{0,\ldots,K\}$,
\begin{align}
 E\Bcal_r(s,+k) &= \frac{1}{W_s}\sum_{l=m-k+2}^{W_s}\sum_{w=1}^{l-(m-k+1)}(1-\bc)^{w-1}\bc (l+k-(w+m))\label{eqn:EB-r-s-plus-k}
\end{align}

By similar arguments, we also have, for any $k\in\{1,\ldots,m\}$, and any $s\in\{0,\ldots,K\}$,
\begin{align}
 E\Bcal_r(s,-k) &= \frac{1}{W_s}\sum_{l=m+k+2}^{W_s}\sum_{w=1}^{l-(m+k+1)}(1-\bc)^{w-1}\bc (l-(w+k+m))\label{eqn:EB-r-s-minus-k}\\
E\Bcal_r(0,0_s) &= \frac{1}{W_0}\sum_{l=m+2}^{W_0}\sum_{w=1}^{l-(m+1)}(1-\bd)^{w-1}\bd (l-(w+m))\label{eqn:EB-r-0-0s}
\end{align}

\subsubsection{Computation of $\bs$}
\label{subsubsec:bs}
Looking at the backoff evolution of the tagged Node~$i$, we can define $\bs$ more formally as
\begin{equation*}
 \bs = \lim_{t\to\infty}\frac{\sum_{k=1}^{N_s(t)}\ind_{\{\text{Node~$i$ was not interrupted in backoff cycle $k$}\}}}{\sum_{k=1}^{N_s(t)}B_{s,k}}
\end{equation*}
 where, $N_s(t)$ is the number of backoff cycles until time $t$ that start with the state $(0,0_s)$ (implying that Node~$i$ was successful in the previous transmission cycle), and $B_{s,k}$ is the backoff counted by Node~$i$ \emph{in the transmission cycle that started along with backoff cycle $k$}; in other words, $B_{s,k}$ is the backoff counted by Node~$i$ until it gets interrupted, or completes its backoff, whichever is earlier. Thus, the denominator is the total backoff counted by Node~$i$ until time $t$, in those transmission cycles that followed a successful transmission by Node~$i$. Similarly, the numerator is the total number of attempts by Node~$i$ until time $t$ in those transmission cycles that followed a successful transmission by Node~$i$. 

Denote by $\Bcal_s$, the random variable representing the backoff counted by Node~$i$ in the first transmission cycle within a backoff cycle starting in state $(0,0_S)$. Then, by Markov regenerative theory, it follows that
\begin{equation}
 \bs = \frac{1-P_I(0,0_s)}{E\Bcal_s(0,0_s)} \: a.s.
\end{equation}
where, $E\Bcal_s(0,0_s)$ is the mean time spent in backoff by Node~$i$ until it gets interrupted, or completes its backoff in the backoff cycle starting in state $(0,0_s)$, and can be computed as follows.

Suppose Node~$i$ samples (uniformly from $\{1,\ldots, W_0\}$) a backoff of $l$ slots. As explained earlier, to interrupt Node~$i$, the other node must attempt within slot $(l-1-m)$, which is possible only if $l\geq (m+2)$. Now there are three possibilities:
\begin{enumerate}
 \item $l<(m+2)$. Node~$i$ cannot be interrupted; its backoff count is $l$.
 \item $l\geq (m+2)$, but Node~$j$ does not attempt up to $(l-1-m)$. Then again, Node~$i$ does not get interrupted, and its backoff count is $l$.
 \item $l\geq (m+2)$, and Node~$j$ attempts at slot $w$, $1\leq w\leq l-1-m$. Then, Node~$i$ is interrupted, and its backoff counted until interruption is $w+m$.
\end{enumerate}

Combining all of these together,
\begin{align}
 E\Bcal_s(0,0_s) &= \frac{1}{W_0}\frac{(m+1)(m+2)}{2} + \frac{1}{W_0}\sum_{l=m+2}^{W_0}[(1-\bd)^{(l-m-1)}l\nonumber\\
&+\sum_{w=1}^{l-(m+1)}(1-\bd)^{w-1}\bd(w+m)]\label{eqn:EB-s}
\end{align}

\subsubsection{Computation of $\bc$}
\label{subsubsec:bc}

Proceeding along the same lines as in Section~\ref{subsubsec:bc-general-n}, we have
\begin{equation}
 \bc = \frac{\sum_{(s,x)\neq (0,0_s)}\psi(s,x)(1-P_I(s,x))}{\sum_{(s,x)\neq (0,0_s)}\psi(s,x)E\Bcal_c(s,x)}\: a.s
\end{equation}
where, $E\Bcal_c(s,x)$ is the mean time spent in backoff by Node~$i$ until it gets interrupted, or completes its backoff in the backoff cycle starting in state $(s,x)$, and can be computed as follows.

Consider a backoff cycle starting with state $(s,+k)$. Suppose Node~$i$ samples (uniformly from $\{1,\ldots, W_s\}$) a backoff of $l$ slots. As explained earlier in Section~\ref{subsubsec:bd}, to interrupt Node~$i$, Node~$j$ must make an attempt by slot $l+k-1-m$, provided $l\geq (m-k+2)$. Now, there are three possibilities:
\begin{enumerate}
 \item $l<(m-k+2)$. Node~$i$ cannot be interrupted, and its backoff count is $l$.
 \item $l\geq (m-k+2)$, but Node~$j$ does not attempt up to $l-(m-k+1)$. Again, Node~$i$ does not get interrupted, and its backoff count is $l$.
 \item $l\geq (m-k+2)$, and Node~$j$ attempts at slot $w$, $1\leq w\leq l-(m-k+1)$. Then, Node~$i$ is interrupted, and its backoff count until interruption is $(w+m-k)$ (recall that when the backoff cycle starts with state $(\cdot,+k)$, Node~$i$ starts its backoff process after slot $k$).
\end{enumerate}

Combining these together, we have, for any $k\in\{0_c,1,\ldots,m\}$, and any $s\in\{0,\ldots,K\}$,
\begin{align}
 E\Bcal_c(s,+k) &= \frac{1}{W_s}\sum_{l=1}^{m-k+1}l + \frac{1}{W_s}\sum_{l=m-k+2}^{W_s}[(1-\bc)^{l-(m-k+1)}l\nonumber\\
&+ \sum_{w=1}^{l-(m-k+1)}(1-\bc)^{w-1}\bc(w+m-k)]\label{eqn:EB-c-s-plus-k}
\end{align}

By similar arguments, we also have, for any $k\in\{1,\ldots,m\}$, and any $s\in\{0,\ldots,K\}$,
\begin{align}
 E\Bcal_c(s,-k) &= \frac{1}{W_s}\sum_{l=1}^{m+k+1}l + \frac{1}{W_s}\sum_{l=m+k+2}^{W_s}[(1-\bc)^{l-(m+k+1)}l\nonumber\\
&+ \sum_{w=1}^{l-(m+k+1)}(1-\bc)^{w-1}\bc(w+m+k)]\label{eqn:EB-c-s-minus-k}
\end{align}

Equations~\ref{eqn:h-b-plus-k}-\ref{eqn:EB-c-s-minus-k} together form a system of vector fixed point equations in $(\bd,\bc)$ (observe from Eqns.~\ref{eqn:p-i-0-0s} and \ref{eqn:EB-s} that $\bs$ is a deterministic function of $\bd$ alone), which can be solved using an iterative procedure until convergence to obtain the attempt rates $\bd$, $\bs$, and $\bc$.

\subsubsection{Computation of the average attempt rate, $\beta$, over all backoff time}
\label{subsubsec:beta}
The backoff cycle analysis can be used to obtain the long run average attempt rate, $\beta$, averaged over all backoff time (irrespective of system state). 

To obtain $\beta$, note that each backoff cycle contains exactly one attempt by the tagged node, and the backoff counted by the tagged node in the entire backoff cycle contributes towards $\beta$. In a backoff cycle starting in state $(s,x)$, the mean backoff counted by the tagged node is clearly $(W_s+1)/2$. Thus, using Markov regenerative analysis, we have
\begin{equation}
 \beta = \frac{1}{\sum_{(s,x)}\psi(s,x)\frac{W_s+1}{2}}\label{eqn:beta-wifi}
\end{equation}

\subsection{Discussion on the existence and uniqueness of the fixed point}
\label{subsec:existence-del}
\begin{theorem}
There exists a fixed point for the system of equations~\ref{eqn:h-b-plus-k}-\ref{eqn:EB-c-s-minus-k} in the set $\mathbf{C}=[1/W_K,1]\times[1/W_K,1]$. 
\end{theorem}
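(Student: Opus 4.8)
The plan is to mirror the proof of Theorem~\ref{thm:existence-general-n}: exhibit the whole system as a continuous self-map of the compact, convex box $\mathbf{C}=[1/W_K,1]\times[1/W_K,1]$ in the variable $(\bd,\bc)$, and invoke Brouwer's fixed point theorem. Recall from Section~\ref{subsubsec:bs} that $\bs$ is a deterministic function of $\bd$ alone (via Eqns.~\ref{eqn:p-i-0-0s} and \ref{eqn:EB-s}), so the iteration genuinely lives in the two coordinates $(\bd,\bc)$. There are three things to establish: well-definedness and continuity of the map on $\mathbf{C}$; the self-map property $(\bd,\bc)\in\mathbf{C}\Rightarrow(\bd',\bc')\in\mathbf{C}$; and then the topological conclusion.

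For well-definedness and continuity, I would first note that the misalignment probabilities $h(b,x)$ of Eqns.~\ref{eqn:h-b-plus-k}--\ref{eqn:h-b-0s} are given by a finite downward recursion in $b$ (each $h(b,\cdot)$ refers only to $h(b',\cdot)$ with $b'=b-(w+m)<b$), so they are well-defined polynomials in $(\bd,\bc)$, hence continuous on $\mathbf{C}$. The entries $Q_I[\cdot|\cdot]$ and $P_{nI}[\cdot|\cdot]$ of the transition matrix $Q$ are finite sums of products of such quantities, hence continuous; and for strictly positive rates, which holds throughout $\mathbf{C}$ since $1/W_K>0$, the embedded chain is finite and irreducible --- from any state one reaches $(0,0_s)$ in one step, and from $(0,0_s)$ every state is reachable --- so it is positive recurrent with a unique stationary law $\psi$ that depends continuously on $Q$. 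Composing, the updates $(\bd,\bc)\mapsto(\bd',\bc')$ determined by Eqn.~\ref{eqn:bd-expression} and the $\bc$-equation are continuous on $\mathbf{C}$.

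For the self-map property I would reuse the simulation/renewal reading from the proof of Theorem~\ref{thm:existence-general-n}. Fix $(\bd,\bc)\in\mathbf{C}$ with the induced $\bs$ and consider a sample path of the tagged-node process obeying (A3)--(A4) under these rates. Since $0<P_I(\cdot,\cdot)<1$ on the states whose contention window exceeds $m+1$ (immediate from Eqns.~\ref{eqn:p-i-0-0s}--\ref{eqn:p-i-s-minus-k} for positive rates), the tagged node is interrupted, and completes backoffs, infinitely often, so every denominator grows without bound and the Markov-regenerative ratios are almost surely equal to the next iterates. The updated $\bd'$ is almost surely $\lim_t N_I(t)/\sum_{k=1}^{N_I(t)}B_{r,k}$, where numerator and denominator range over the \emph{same} interrupted cycles and each residual backoff satisfies $1\le B_{r,k}\le W_K$; hence $1/W_K\le\bd'\le 1$. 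For $\bc'$ (and $\bs'$) the same renewal-reward reading gives ``attempts per counted backoff slot'': a non-interrupted post-collision first-cycle produces one attempt after counting at most $W_K$ slots, giving $\bc'\ge 1/W_K$, while each cycle counts at least one slot per at most one attempt, giving $\bc'\le 1$. Thus $(\bd',\bc')\in\mathbf{C}$, and Brouwer's theorem yields a fixed point in $\mathbf{C}$.

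The main obstacle is the self-map property, and within it the lower bounds on $\bc'$ and $\bs'$. Unlike the bound on $\bd'$ --- which is genuinely pathwise because its numerator and denominator are indexed by the identical set of interrupted cycles --- the numerator of $\bc'$ counts only \emph{non}-interrupted cycles, while its denominator also accumulates the (possibly large) backoff counted during cycles that end in interruption; so $\bc'\ge 1/W_K$ is not a term-by-term inequality but must be extracted from the long-run renewal-reward balance, using that a completed backoff contributes at most $W_K$ counted slots per attempt. The bookkeeping is heavier than in Part~I because the embedded state now carries the misalignment coordinate $x\in\{0_s,0_c,\pm1,\ldots,\pm m\}$ together with the recursively-defined $h(b,x)$; verifying positivity and irreducibility, and the uniform boundedness of counted backoffs across all these cases --- rather than any new idea --- is where the effort lies.
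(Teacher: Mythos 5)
Your proposal is correct and takes essentially the same route as the paper, whose proof of this theorem is simply to repeat the argument of Theorem~\ref{thm:existence-general-n}: continuity of the composite update map on the compact convex box $\mathbf{C}$, the self-map property obtained by reading the update equations as almost-sure Markov-regenerative ratios of a simulated tagged-node process with $1\le B_{r,k}\le W_K$, and then Brouwer's fixed point theorem. On the one point where you are (rightly) more careful than the paper's ``similarly'' --- the bound $\bc'\ge 1/W_K$ is indeed not term-by-term because interrupted cycles inflate the denominator without contributing attempts --- the claim does go through: averaging over the uniformly sampled backoff $l$ yields $W_s\,(1-P_I(s,x))\ge E\Bcal_c(s,x)$ for each state, and the mediant inequality then gives the bound on the aggregated ratio.
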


\begin{proof}
The proof follows along exactly the same lines as that of Theorem~\ref{thm:existence-general-n}.
\end{proof}

We do not have proof of uniqueness of the fixed point. However, in our numerical experiments, the iterations always converged to the same solutions even when starting with different initial values. 

\section{Model Validation Through Simulations}
\label{sec:numerical}

To validate our analytical model, we performed extensive simulations on a topology with 2 transmitter-receiver pairs with saturated transmit queues; we assumed equal propagation delay $\Delta$ among all nodes, and varied $\Delta$ across simulations. We used the default backoff parameters of IEEE~802.11b. 

We used the method of simulating the detailed stochastic system model, described in Section~\ref{sec:exact_model_del}, since it is much faster compared to detailed ``off-the-shelf'' event-driven simulation tools such as Qualnet, and gives excellent accuracy (as was demonstrated in Figure~\ref{fig:exact-model-gamma-validation} in Section~\ref{sec:exact_model_del}), while providing more flexibility in examining the finer details of the system evolution (e.g., it is considerably harder to obtain the conditional attempt rates such as $\bd$ from a Qualnet simulation).

\begin{figure}[htpb]
\begin{center}
\includegraphics[scale=0.4]{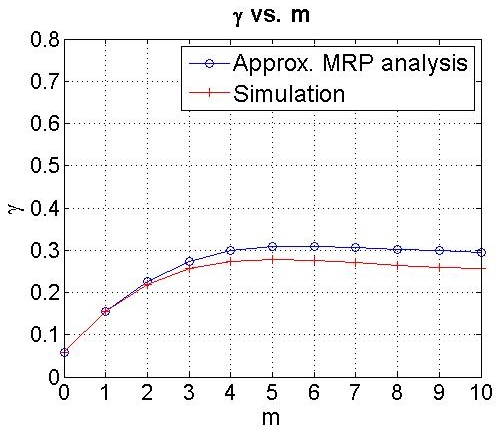}
\caption{Comparison of collision probabilities obtained from the approximate MRP analysis in Section~\ref{sec:mrp-state-dependent-del} against simulations.}
\label{fig:approx-model-gamma-validation}
\vspace{-5mm}
\end{center}
\normalsize
\end{figure}
\begin{figure}[htpb]
\begin{center}
\includegraphics[scale=0.3]{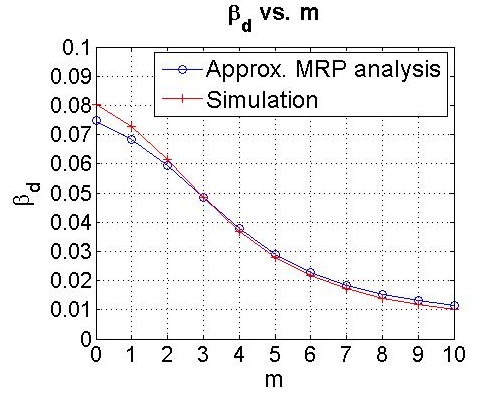}
\hspace{1mm}
\includegraphics[scale=0.3]{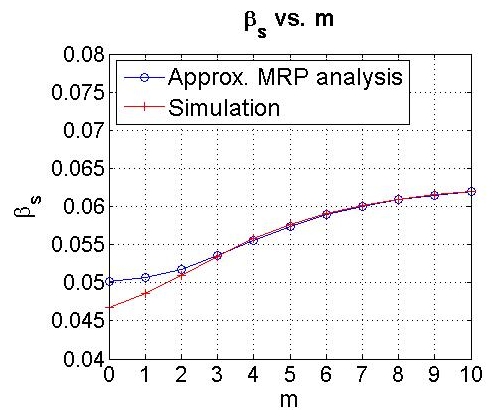}
\hspace{1mm}
\includegraphics[scale=0.3]{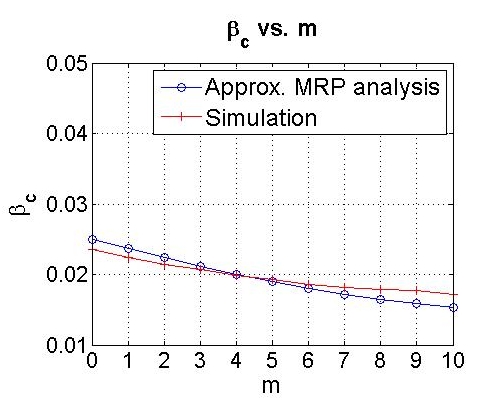}
\vspace{2mm}
\includegraphics[scale=0.32]{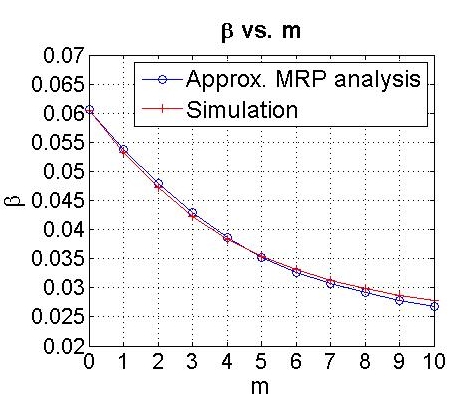}
\hspace{1mm}
\includegraphics[scale=0.3]{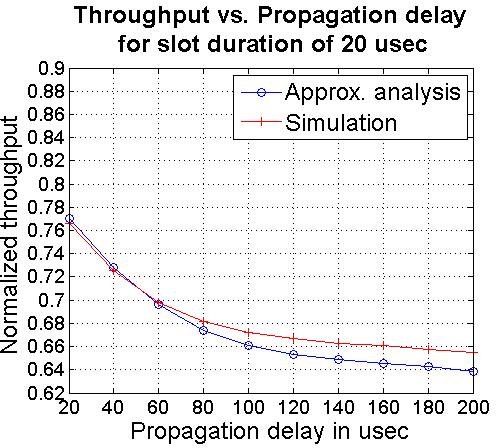}
\caption{Comparison of attempt rates and throughputs obtained from the approximate MRP analysis in Section~\ref{sec:mrp-state-dependent-del} against simulations for $n=2$, arbitrary $m$.}
\label{fig:approx-model-beta-theta-validation}
\vspace{-5mm}
\end{center}
\normalsize
\end{figure}

We first compared the collision probabilities obtained from our approximate analytical model against those obtained from simulations for a range of values of $m$ (the integer ratio of propagation delay and slot duration). Figure~\ref{fig:approx-model-gamma-validation} summarizes the results. The relative errors in the analytical values compared to simulations are no more than 8\%. Also observe that the trend of the collision probability as a function of $m$ is captured well by the approximate analysis.

We also compared the attempt rates, and throughputs obtained from our approximate analytical model against those obtained from simulations. Note that while collision probabilities depend only on $m=\lfloor\frac{\Delta}{\sigma}\rfloor$, the propagation delay in integer multiples of slots (see Section~\ref{sec:exact_model_del}), throughput depends on the actual ratio $\frac{\Delta}{\sigma}$, since it involves computing the actual lengths of the transmission cycle, and the data duration. We compared the throughput obtained from the approximate analysis against simulation results for a range of values of $\Delta$, under default backoff parameters of IEEE~802.11b with $T_d=1028$ bytes (4112 $\mu$secs at 2 Mbps rate), and $T_o=10$ $\mu$secs; the results are summarized in Figure~\ref{fig:approx-model-beta-theta-validation}.


From these plots, we can make the following observations:

\noindent
\textbf{Observations:}

\noindent

 1. The errors in the approximate analysis compared to simulations are at most 2-4\%, and 2-3\% respectively in predicting the attempt rates, and throughput, thus validating the accuracy of the analysis.

\noindent

 2. As $m$ increases, $\bs$ monotonically increases, $\bd$, and $\bc$ monotonically decrease. An intuition behind this follows from the intuitive explanation of the short term unfairness property provided in the discussion at the end of Section~\ref{subsec:long-distance-wifi}. At higher propagation delays, due to the high collision probability, the backoff difference of the nodes is stochastically larger, and hence, after a successful transmission in the system, the residual backoff of the interrupted node is also stochastically large. This causes $\bd$ to decrease with increasing $m$. The same argument will also see the successful node (which samples its next backoff from the smallest contention window) attempt again without interruption with a higher likelihood, thus causing $\bs$ to increase with increasing $m$. Since at higher propagation delays, due to the high collision probability, the nodes after a collision sample backoffs from  stochastically larger contention windows (compared to those at lower $m$), $\bc$ decreases with increasing $m$. Also, the overall attempt rate, $\beta$, decreases with increasing propagation delay. This is also intuitive, since due to the higher collision probability, the nodes are likely to spend more time in larger backoff stages, thus increasing the denominator in Eqn.~\ref{eqn:beta-wifi}. 

\noindent
 
3. At higher $m$, $\bs\gg \bd$, \emph{which is a reflection of the short term unfairness property} demonstrated in Section~\ref{subsec:long-distance-wifi}.

\noindent

4. As $m$ increases, \emph{the collision probability $\gamma$ increases at first, but then gradually flattens out}. This can be intuitively explained as follows. For simplicity, consider the case when the backoffs of the two nodes are aligned at the start of a transmission cycle; the conclusions from the other cases are similar. Suppose, $B_1$, and $B_2$ are the backoffs sampled by the two nodes and assume, without loss of generality, $B_1 < B_2$. Suppose further, for simplicity, that $B_1$ and $B_2$ were sampled from the same contention window, say $W_s$. A collision happens when $B_2 \leq B_1 + m$, i.e., $B_2-B_1 \leq m$. Now, (i) Clearly, for a fixed $W_s$, the probability of this event is increasing in $m$, thus causing an increase in collision probability. (ii) However, as collision probability increases with $m$, the nodes tend to sample backoff from a higher contention window, i.e., $W_s$ becomes stochastically larger. Further, it can be shown by an elementary analysis that as $W_s$ increases, the random variable $B_2-B_1$ becomes stochastically larger, and hence the probability of the concerned event decreases. These two opposing effects cause the collision probability to saturate at higher values of $m$.

\noindent
5. On a Linux based machine with 8 GB RAM, the running time of the approximate analysis is at most a few seconds, while that of the stochastic simulation is of the order of several minutes; it takes hours to run the Qualnet simulation, especially when the short term unfairness is severe.  

\newpage
\begin{center}
 \Large{Part~III: Implications for the Protocol}
\end{center}
Apart from providing an accurate prediction of the system performance in the presence of short term unfairness using a parsimonious state representation, the approximate analysis proposed above has several applications, and implications for the protocol, some of which we proceed to illustrate next.

\section{Optimizing Slot Duration for Throughput Maximization}
\label{sec:opt-slot-duration}
Since the approximate analysis is very accurate, we can use this instead of computationally expensive simulations to choose system parameters for performance optimization. In this section, we use the analysis to choose the optimal slot duration for a given propagation delay to maximize system throughput, $\Theta$. 

Observe that $\sigma$ very small $\Rightarrow m=\lfloor\frac{\Delta}{\sigma}\rfloor$ is large $\Rightarrow\gamma$ is high $\Rightarrow$ nodes attempt less frequently; the number of nodes is fixed, the attempt rate per node reduces, while $\gamma$ increases, thus reducing $\Theta$. On the other hand, $\sigma$ very large $\Rightarrow$ backoff durations are large $\Rightarrow$ too much idle time compared to data duration $\Rightarrow$ reduced $\Theta$. Hence, there is an optimal value of $\sigma$ for a given $\Delta$ to maximize $\Theta$.

Further observe that for a fixed $\Delta$, there could be several values of $\sigma$ that can give rise to the same value of $m$; e.g., for $\Delta = 60$ $\mu$secs, any slot duration between 21 and 30 $\mu$secs result in $m=2$, and hence they all lead to the same probability of collision. However, as $\sigma$ increases (e.g., from 21 $\mu$secs to 30 $\mu$secs) keeping $m$ same ($m=$2 in example), $\Theta$ will decrease, since $\gamma$ stays same, and idle time increases.

With the above observation, we adopted the following strategy for obtaining the throughput as a function of $m$ for any fixed $\Delta$. For each $m$, and each fixed $\Delta$ in $\mu$secs, we computed the \emph{least} slot duration in $\mu$secs required to achieve that $m$ for that $\Delta$; this can be easily seen to be $\lfloor\frac{\Delta}{m+1}\rfloor+1$ $\mu$secs. This slot duration was used to compute the throughput for that $(m,\Delta)$ combination. The results are summarized in 4 sets of plots in Figure~\ref{fig:theta-vs-m-approx-model}, where we have plotted the throughput as a function of $m$ for several different values of $\Delta$, keeping other parameters of the protocol fixed at their default values under IEEE~802.11b. From these plots, one can read off, for each $\Delta$, the optimum $m$, and hence the optimum slot duration that maximizes throughput for that $\Delta$. 

\begin{figure*}[t]
\begin{center}
\includegraphics[scale=0.32]{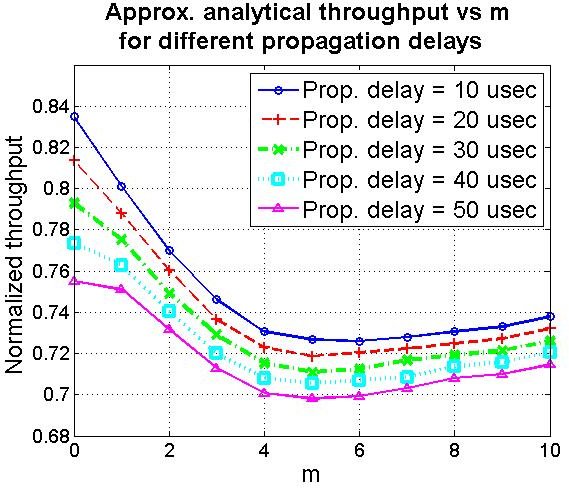}
\hspace{0.1mm}
\includegraphics[scale=0.32]{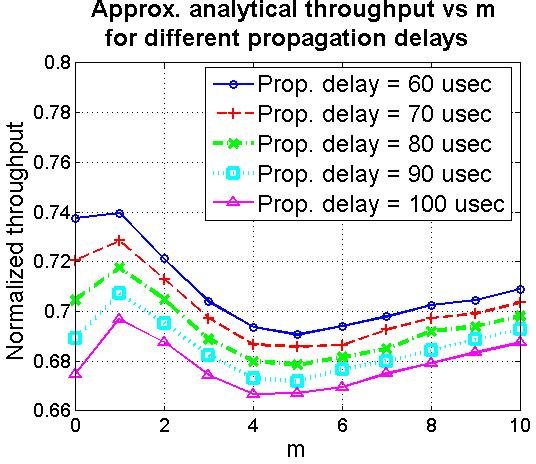}
\hspace{0.1mm}
\includegraphics[scale=0.32]{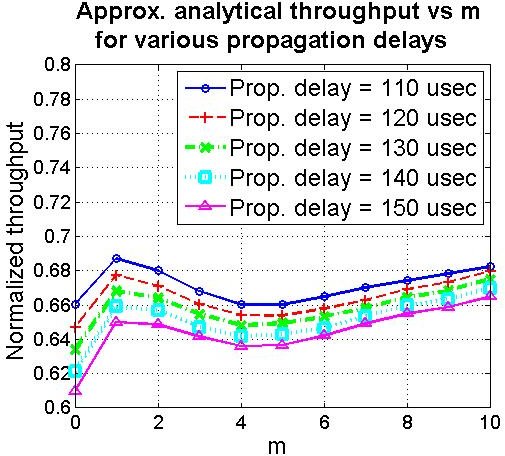}
\hspace{0.1mm}
\includegraphics[scale=0.32]{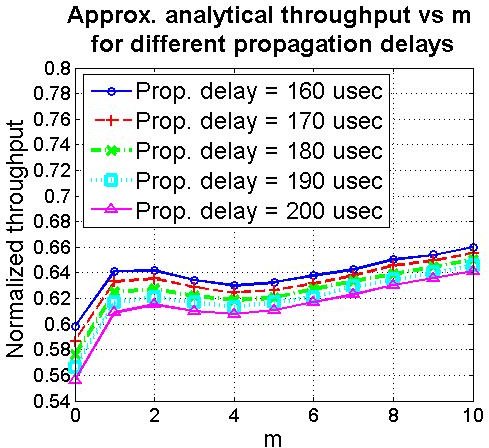}
\caption{Throughput as a function of $m$ for different propagation delays, obtained from the approximate analysis; for each propagation delay, the optimum slot duration can be read off from the plots.}
\label{fig:theta-vs-m-approx-model}
\vspace{-5mm}
\end{center}
\normalsize
\end{figure*} 

From the plots in Figure~\ref{fig:theta-vs-m-approx-model}, we can make the following observations:

\noindent
\textbf{Observations:}

\noindent
 1. For $\Delta\leq 110$ $\mu$secs, $\Theta$ is maximum at $m=0$ or $m = 1$. However, for $\Delta\geq 120$ $\mu$secs, $\Theta$ is maximized at $m = 10$ or beyond. Thus, at lower propagation delays, collision probability dominates throughput, while at higher $\Delta$, slot duration takes over as the dominant factor. This also means that in general, it is not necessarily throughput optimal to make the slot duration comparable to the propagation delay, unlike what has been suggested in some previous literature (see, for example, \cite{radamacher14wild,raman-chebrolu07wifi-rural}).

\noindent
2. Beyond $m = 2$, all the plots exhibit a convex pattern. This can be explained from the $\gamma$ vs. $m$ plot in Panel~1 of Figure~\ref{fig:approx-model-gamma-validation}. As $m$ increases, $\gamma$ increases, causing $\Theta$ to decrease (see our observation at the beginning of this section); but since $\gamma$ gradually flattens out, and $\sigma$ decreases with increasing $m$, the rate of decrease in $\Theta$ also starts decreasing, and eventually $\Theta$ starts increasing with $m$.

\noindent
3. Below $m=2$, we see that for lower propagation delays, $\Theta$ decreases as $m$ increases, but at higher propagation delays, $\Theta$ increases with $m$. One possible explanation for this behavior is as follows. For lower propagation delays, the reduction in slot duration as we go from $m=0$ to $m=1$ is not significant enough to ameliorate the effect of the increase in collision probability. However, at higher propagation delays, the reduction in slot duration as we go from $m=0$ to $m=1$ is considerably large, which causes significant reduction in the system idle time, and more than makes up for the increase in collision probability. Note that this explanation is also consistent with our Observation~1 above.  

\section{Quantifying the Extent of Short Term Unfairness}
\label{sec:stu-measure}
Once we know the attempt rates $\bd,\bc,\bs$ for a system using the procedure described in Section~\ref{subsec:tagged-node-evolution}, we can use the state dependent Bernoulli attempt process model introduced in Section~\ref{subsec:system-evolution-mrp} to quantify the extent of short term unfairness in the system. This is an important measure which can be used for tuning protocol parameters as we shall see later, and is not easy to obtain using state-of-the-art simulation tools such as Qualnet (and cannot at all be obtained using the standard fixed point analysis). We define below, two possible measures of short term unfairness, and show how we can obtain them using our state dependent attempt rate model. 

\subsection{A Throughput Fairness Index for $m=0$, arbitrary $n$}
\label{subsec:fairness-ix}

As we saw in Section~\ref{subsec:system-evolution-mrp}, Figure~\ref{fig:success-processes}, the impact of short term unfairness is to skew the success process in favor of an already successful node, thus introducing high correlation in the success process. With that in mind, we proceed to define a measure of short term unfairness as follows. 

Fix a node, say Node~$1$. Define a \emph{frame} as a block of $L$ consecutive transmission cycles \emph{following a successful transmission by Node~$1$}. Our aim is to compare the average throughputs obtained by all the nodes $1,\ldots,n$, over a frame. When the system has short term unfairness, the average throughput of Node~1 in a frame will be higher than the other nodes, even for moderately large values of the frame length $L$. We can make this intuition more formal as follows.

Define
\begin{description}
 \item $M(t):$ Number of frames completed until time $t$.
 \item $R_{k,i}:$ Number of successes of Node~$i$ in frame $k$, $i=1,\ldots,n$, $k=1,2,\ldots$; observe that under the assumption of Bernoulli attempt processes ((A1) and (A2)), $\{R_{k,i}\}, k=1,2,\ldots$ are i.i.d. for each $i$; however, the vector across $i$ is not independent. Let $ER_i$ denote the mean.
 \item $U_i(t)\define \sum_{k=1}^{M(t)}R_{k,i}:$ Total number of successes of Node~$i$ in the $M(t)$ frames, $i=1,\ldots,n$.
 \item $X_k:$ Duration of the $k^{th}$ frame. Observe that $\{X_k\}, k=1,2,\ldots$ are i.i.d. Let $EX$ denote the mean. 
\end{description}
Note that all the above quantities depend on the frame length $L$. We have omitted $L$ to ease the notational burden. 

Now the average normalized throughput of Node~$i$ over a frame is given by

\begin{align}
 \theta_i(L) &= \lim_{t\to\infty}\frac{U_i(t)\times T_d}{\sum_{k=1}^{M(t)}X_k}\nonumber\\
&= T_d\times \lim_{t\to\infty}\frac{U_i(t)/t}{\frac{1}{t}\sum_{k=1}^{M(t)}X_k}\nonumber\\
&= T_d\times \lim_{t\to\infty}\frac{\frac{1}{t}\sum_{k=1}^{M(t)}R_{k,i}}{\frac{1}{t}\sum_{k=1}^{M(t)}X_k}
\end{align}
for all $i=1,\ldots,n$.

By our definition of a frame, and by (A1) and (A2), it can be seen that the beginnings of frames are renewal instants, and the mean renewal cycle length is finite. Moreover, it can be verified that $ER_i(L)<\infty$, and $EX(L)<\infty$. Thus, by Renewal Reward Theorem, we have
\begin{align}
 \theta_i(L) &= T_d\times \frac{ER_i(L)}{EX(L)}\nonumber\\
&= \frac{T_d}{EX(L)}\times ER_i(L)
\end{align}

Then, the Jain's fairness index \cite{fairness-ix} for $\{\theta_i(L)\}_{i=1}^n$ can be computed as
\begin{align}
 J(\underline{\theta}(L)) &= \frac{(\sum_{i=1}^n\theta_i(L))^2}{n\sum_{i=1}^n\theta_i^2(L)}\nonumber\\
&= \frac{(\sum_{i=1}^nER_i(L))^2}{n\sum_{i=1}^n(ER_i(L))^2}
\end{align}

This can be taken as a measure of short term fairness of the system. For a given $L$, the closer this value is to 1, the fairer is the system. Also as $L\to\infty$, $\theta_i(L)\to\Theta/n$, the long run average throughput, and $J(\underline{\theta}(L))\to 1$. 

It still remains to compute $ER_i(L)$, $i=1,\ldots,n$. We proceed to do this next. 

Consider the tuple $\{N_u, I_u\}$ embedded at the epochs $T_u$ (starts of transmission cycles; recall from Section~\ref{sec:exact_model}). Here, $N_u\in\{1,\ldots,n\}$ denotes the number of nodes that attempted in the last transmission cycle, and $I_u\in\{0,1\}$ indicates whether Node~1 attempted or not in the last transmission cycle ($I_u=1$ if Node~1 attempted). Moreover, $I_u=0\Rightarrow N_u < n$; thus the size of the state space is $(2n-1)$. It is easy to see that under (A1) and (A2), $\{N_u, I_u\}$ is a DTMC. We provide the transition structure of this DTMC below. Denote by $P((n_a,z),(n_a^\prime,z^\prime))$ the transition probability from state $(n_a,z)$ to state $(n_a^\prime,z^\prime)$.

\subsubsection{Computation of transition probabilities $P((n_a,z),(n_a^\prime,z^\prime))$}
\label{subsubsec:fairness-ix-trans-prob-compute}

\noindent
\emph{From states $(n_a,0)$:}

\noindent
When the state is $(n_a,0)$, we know $n_a$ of the nodes transmitted in the last cycle, and Node~1 did not transmit. Thus, in the current cycle, $n_a$ nodes attempt in each slot w.p. $\bx$ ($\bx=\bc$ if $n_a>1$, and $\bx=\bs$ if $n_a=1$), and the remaining $(n-n_a)$ nodes including Node~1 attempt in each slot w.p. $\bd$. Now three types of events can happen.

\begin{enumerate}
 \item None of the nodes attempt in the next backoff slot. This happens with probability $(1-\bx)^{n_a}(1-\bd)^{n-n_a}$. Due to the assumption of Bernoulli attempt processes, this results in a renewal with state $(n_a,0)$, and the transition probabilities from there onwards remain the same.

 \item Exactly $n_a^\prime$ nodes attempt in the next backoff slot, but Node~1 does not attempt. It can be verified that this happens with probability 
\begin{align}
 q((n_a,0),(n_a^\prime,0))&= (1-\bd)\sum_{(i,j)\in G(n_a,n_a^\prime)}\bigg[\bx^i\bd^j\nonumber\\
&\times \binom{n_a}{i}(1-\bx)^{n_a-i}\nonumber\\
&\times \binom{n-1-n_a}{j}(1-\bd)^{n-1-n_a-j}\bigg]
\end{align}
Recall the definition of the sets $G(\cdot,\cdot)$ from Section~\ref{subsec:tagged-node-evolution}, Case~2. In this case, the system goes to the state $(n_a^\prime,0)$.

\item Exactly $n_a^\prime$ nodes including Node~1 attempt in the next backoff slot. It can be verified that this happens with probability 
\begin{align}
 q((n_a,0),(n_a^\prime,1))&= \bd\sum_{(i,j)\in G(n_a,n_a^\prime-1)}\bigg[\bx^i\bd^j\nonumber\\
&\times \binom{n_a}{i}(1-\bx)^{n_a-i}\nonumber\\
&\times \binom{n-1-n_a}{j}(1-\bd)^{n-1-n_a-j}\bigg]
\end{align}
In this case, the system goes to the state $(n_a^\prime,1)$.
\end{enumerate}
Combining all these, we have

\begin{align}
P((n_a,0),(n_a^\prime,0))&= \frac{q((n_a,0),(n_a^\prime,0))}{1-(1-\bx)^{n_a}(1-\bd)^{n-n_a}}\\
P((n_a,0),(n_a^\prime,1)) &= \frac{q((n_a,0),(n_a^\prime,1))}{1-(1-\bx)^{n_a}(1-\bd)^{n-n_a}}
\end{align}
for all $n_a,n_a^\prime\in\{1,\ldots,n\}$. 

\noindent
\emph{From states $(n_a,1)$:}

\noindent
When the state is $(n_a,1)$, we know $n_a$ of the nodes including Node~1 transmitted in the last cycle. Thus, in the current cycle, $n_a$ nodes including Node~1 attempt in each slot w.p. $\bx$ ($\bx=\bc$ if $n_a>1$, and $\bx=\bs$ if $n_a=1$), and the remaining $(n-n_a)$ nodes attempt in each slot w.p. $\bd$. Now three types of events can happen.

\begin{enumerate}
 \item None of the nodes attempt in the next backoff slot. This happens with probability $(1-\bx)^{n_a}(1-\bd)^{n-n_a}$. Due to the assumption of Bernoulli attempt processes, this results in a renewal with state $(n_a,1)$, and the transition probabilities from there onwards remain the same.

 \item Exactly $n_a^\prime$ nodes attempt in the next backoff slot, but Node~1 does not attempt. It can be verified that this happens with probability 
\begin{align}
 q((n_a,1),(n_a^\prime,0))&= (1-\bx)\sum_{(i,j)\in G(n_a-1,n_a^\prime)}\bigg[\bx^i\bd^j\nonumber\\
&\times \binom{n_a-1}{i}(1-\bx)^{n_a-1-i}\nonumber\\
&\times \binom{n-n_a}{j}(1-\bd)^{n-n_a-j}\bigg]
\end{align}
In this case, the system goes to the state $(n_a^\prime,0)$.

\item Exactly $n_a^\prime$ nodes including Node~1 attempt in the next backoff slot. It can be verified that this happens with probability 
\begin{align}
 q((n_a,1),(n_a^\prime,1))&= \bx\sum_{(i,j)\in G(n_a-1,n_a^\prime-1)}\bigg[\bx^i\bd^j\nonumber\\
&\times \binom{n_a-1}{i}(1-\bx)^{n_a-1-i}\nonumber\\
&\times \binom{n-n_a}{j}(1-\bd)^{n-n_a-j}\bigg]
\end{align}
In this case, the system goes to the state $(n_a^\prime,1)$.
\end{enumerate}
Combining all these, we have

\begin{align}
P((n_a,1),(n_a^\prime,0))&= \frac{q((n_a,1),(n_a^\prime,0))}{1-(1-\bx)^{n_a}(1-\bd)^{n-n_a}}\\
P((n_a,1),(n_a^\prime,1)) &= \frac{q((n_a,1),(n_a^\prime,1))}{1-(1-\bx)^{n_a}(1-\bd)^{n-n_a}}
\end{align}
for all $n_a,n_a^\prime\in\{1,\ldots,n\}$. This completes the derivation of the transition probabilities of the DTMC $\{N_u, I_u\}$. We next show how to compute the expectations $ER_i(L)$ using this DTMC.

\subsubsection{Computation of $ER_i(L)$, $i=1,\ldots,n$ when the frame starts after a success by Node~1}
\label{subsubsec:ER-L-compute}

Define
\begin{description}
 \item $ES_i(L;(n_a,z)):$ Expected number of successful transmissions by Node~$i$, $i=1,\ldots,n$, in a block of $L$ transmission cycles given that the block started with the state $(n_a,z)$, where $n_a\in\{1,\ldots,n\}$, $z\in\{0,1\}$.
\end{description}

We can make the following observations.

\begin{enumerate}
 \item $ER_i(L)=ES_i(L;(1,1))$ for all $i=1,\ldots,n$. Note that state $(1,1)$ implies that the block started with a successful attempt by Node~1.
 \item Starting in state $(1,1)$, the evolution of the success processes of all nodes except Node~1 are statistically identical. Thus, $ES_2(L;(1,1))=\cdots = ES_n(L;(1,1))$, i.e., $ER_2(L)=\cdots=ER_n(L)$. This is because starting in state $(1,1)$, in the next transmission cycle, Node~1 attempts at rate $\bs$, while all the other nodes attempt at rate $\bd$.
 \item Consider the state $(1,0)$, i.e., some node other than Node~1 succeeded in the last transmission cycle. The evolution of the success process of Node~1 starting from this state is statistically identical to the success process evolution of any Node~$i\neq 1$ starting from the state $(1,1)$. Thus,
\begin{align}
 ES_1(L;(1,0))&=ES_i(L;(1,1))\nonumber\\
&= ER_i(L)
\end{align}
for all $i=2,\ldots,n$. Hence, it suffices to compute $ES_1(L;(1,0))$ and $ES_1(L;(1,1))$. 
\end{enumerate}

For all $n_a\in\{1,\ldots,n\}$ and $z\in\{0,1\}$, $ES_1(L;(n_a,z))$ can be computed recursively as follows:
\begin{align}
 ES_1(1;(n_a,z)) &= P((n_a,z),(1,1))\nonumber\\
ES_1(L;(n_a,z)) &= \sum_{(n_a^\prime,z^\prime)}P((n_a,z),(n_a^\prime,z^\prime))\bigg[\ind_{n_a^\prime=1,z^\prime=1}\nonumber\\
&+ ES_1(L-1;(n_a^\prime,z^\prime))\bigg] \:\forall L > 1
\end{align}

\subsection{Mean Success Run Length}
\label{subsec:success-run}
In this subsection, we propose another alternative measure of short term unfairness. Let us define $r_{11}$ as the probability that the next successful transmission in the system is by Node~1 \emph{given} that the current successful transmission is by Node~1. Define $EU_1$ as the \emph{mean number of consecutive successes by Node~1 before any other node succeeds}. It is easy to see that $EU_1=\frac{1}{1-r_{11}}$. Then, $EU_1$, or equivalently, $r_{11}$, can be taken as a measure of short term unfairness in the system. The larger the value of $EU_1$ (and $r_{11}$), the more biased is the success process in favor of the currently successful node. We now explain how to compute $r_{11}$ from our approximate model.

\subsubsection{Computation of $r_{11}$ for $m=0$, arbitrary $n$}
\label{subsubsec:success-run-general-n}

Consider the Markov chain $\{N_u,I_u\}$ embedded at the epochs $T_u$ introduced in Section~\ref{subsubsec:ER-L-compute}. Define, for all $(n_a,z)\neq (1,0),(1,1)$ (i.e., all \emph{collision} states),

\noindent
\begin{description}
\item $r((n_a,z),(1,1)):$ Probability that the next \emph{success} state is due to Node~1 (i.e., $(1,1)$) given that the current state is $(n_a,z)$
\end{description}

\noindent
Then, for all $(n_a,z)\neq (1,0),(1,1)$, $r((n_a,z),(1,1))$ can be obtained as the solution to the following system of linear equations ($(2n-3)$ linear equations in $(2n-3)$ variables):
\begin{align}
 r((n_a,z),(1,1)) &= P((n_a,z),(1,1))\nonumber\\
&+\sum_{(n_a^\prime,z^\prime)\neq (1,0),(1,1)}\bigg[P((n_a,z),(n_a^\prime,z^\prime))\nonumber\\
&\times r((n_a^\prime,z^\prime),(1,1))\bigg]\:\forall (n_a,z)\neq (1,0),(1,1)
\end{align}
where $P((n_a,z),(n_a^\prime,z^\prime))$ are as derived in Section~\ref{subsubsec:fairness-ix-trans-prob-compute}. 

The above expression can be explained as follows: the next success state can be due to Node~1 if either (i) Node~1 succeeds in the next transmission cycle; probability of this event is given by the first term on the R.H.S.; or (ii) the next transmission cycle results in a collision leading to some state $(n_a^\prime,z^{\prime})$, and starting from that state, the next success state is due to Node~1; the second term on the R.H.S gives the probabilities of these events. 

Finally, $r_{11}$ can be computed using the same argument as above, and is given by
\begin{align}
 r_{11} &= P((1,1),(1,1)) \nonumber\\
&+ \sum_{(n_a^\prime,z^\prime)\neq (1,0),(1,1)}P((1,1),(n_a^\prime,z^\prime))r((n_a^\prime,z^\prime),(1,1))
\end{align}

\subsubsection{Computation of $r_{11}$ for $n=2$, arbitrary $m$}
\label{subsubsec:success-run-long-distance}
Consider a Markov chain $\{Y_u\}\in\{0_{s,1},0_{s,2},0_c,\pm 1,\ldots,\pm m\}$ embedded at the epochs $T_u$. This Markov chain keeps track of the misalignment of the backoff counter of Node~1 w.r.t. Node~2 (in case of a collision), as well as the successful Node Id (in case of a success). The state values can be interpreted as follows:

\noindent
$0_{s,1}:$ Node~1 was successful in the last cycle

\noindent
$0_{s,2}:$ Node~2 was successful in the last cycle

\noindent
$0_c:$ There was a collision in the last cycle, but the backoff counters of the nodes are aligned, i.e., both start counting down their backoffs at $T_u$

\noindent
$+k:$ There was a collision, and Node~1's backoff is deferred by $k$ slots, i.e., Node~1 will start backoff countdown at $T_u+k$, for all $k=1.\ldots,m$

\noindent
$-k:$ There was a collision, and Node~2's backoff is deferred by $k$ slots, i.e., Node~1 starts backoff at $T_u$, Node~2 starts backoff at $T_u+k$, for all $k=1,\ldots,m$

\noindent
It is easy to see that under (A1) and (A2), $\{Y_u\}$ is a DTMC. Denote by $P(y,y^\prime)$, the transition probability from state $y$ to state $y^\prime$ in this DTMC. 

Define, for all $y\neq 0_{s,1},0_{s,2}$, 

\noindent
$r(y,0_{s,1}):$ Probability that the next success state is due to Node~1 given that the current system state is $y$

\noindent
Then, using the same arguments as in Section~\ref{subsubsec:success-run-general-n}, $r(y,0_{s,1})$ for all $y\neq 0_{s,1},0_{s,2}$ can be obtained as the solution to a system of linear equations ($(2m+1)$ linear equations in $(2m+1)$ variables) as follows:
\begin{align}
 r(y,0_{s,1}) &= P(y,0_{s,1}) + \sum_{y^\prime \neq 0_{s,1},0_{s,2}}P(y,y^\prime)r(y^\prime,0_{s,1})\:\forall y\neq 0_{s,1},0_{s,2}
\end{align}

Finally, $r_{11}$ can be computed as follows:
\begin{align}
 r_{11} &= P(0_{s,1},0_{s,1}) + \sum_{y\neq 0_{s,1},0_{s,2}}P(0_{s,1},y)r(y,0_{s,1})
\end{align}

The transition probabilities $P(y,y^\prime)$ can be computed using the same renewal arguments used for computing the transition probabilities in Section~\ref{subsec:mrp-analysis-given-beta}, Case~1. We omit the details for brevity, and directly write down the expressions for the transition probabilities used in the above derivation.

\begin{align}
 P(0_{s,1},0_{s,1}) &= \frac{\bs(1-\bd)^{m+1}}{1-(1-\bd)(1-\bs)}\\
P(0_{s,1},+k) &= \frac{\bs(1-\bd)^k\bd}{1-(1-\bd)(1-\bs)},\:0\leq k\leq m\\
P(0_{s,1},-k) &= \frac{\bd(1-\bs)^k\bs}{1-(1-\bd)(1-\bs)},\:1\leq k\leq m\\
P(0_c,0_{s,1}) &= \frac{\bc(1-\bc)^{m+1}}{1-(1-\bc)^2}\\
P(0_c,+k) &= \frac{\bc^2(1-\bc)^k}{1-(1-\bc)^2},\:0\leq k\leq m\\
P(0_c,-k) &= \frac{\bc^2(1-\bc)^k}{1-(1-\bc)^2},\:1\leq k\leq m\\
P(+k,0_{s,1}) &= (1-\bc)^k P(0_c,0_{s,1}),\:1\leq k\leq m\\
P(-k,0_{s,1}) &= (1-\bc)^k P(0_c,0_{s,1})\nonumber\\
&+\sum_{j=1}^k (1-\bc)^{j-1}\bc (1-\bc)^{j+m-k},\:1\leq k\leq m
\end{align}

For all $k,k^\prime \in\{1,\ldots,m\}$,
\begin{align}
 P(+k,+k^\prime) &= (1-\bc)^k P(0_c,+k^\prime)=P(-k,-k^\prime)\\
P(+k,-k^\prime) &= (1-\bc)^k P(0_c,-k^\prime)\nonumber\\
&+\sum_{j=\max\{k+1-k^\prime,1\}}^k \bigg[(1-\bc)^{j-1}\bc\nonumber\\
&\times (1-\bc)^{j+k^\prime-k-1}\bc\bigg] = P(-k,+k^\prime)
\end{align}

\section{Optimizing the Backoff Sequence for Throughput and Fairness}
\label{sec:optimization-throughput-fairness}
Intuitively, an unfair system may actually achieve higher system throughput than a fair system, since in the former, one node or the other will have unhindered access to the channel over extended periods, whereas in the latter, there will be more contention. However, a high long run average system throughput does not yield the desired quality of experience in the presence of significant short term unfairness. Now that we have developed methods to quantify the extent of short term unfairness in a system, we can use these measures to tune the protocol parameters to achieve desired throughput and fairness objectives. In particular, our interest is in maximizing system throughput subject to some constraint on the extent of short term unfairness. We demonstrate with an example how we can do this using our analytical methods for the case of $n=2$, and large propagation delay, $m$ (in slots). For the purposes of this example, we use the mean success run length, $EU_1$ as the measure of short term unfairness. The advantage of this over the throughput fairness index measure is that if we use throughput fairness index as the fairness measure, then we need to specify two values, namely, the value of $L$, as well as the target fairness index\footnote{A fairness index of 1 is achievable only as $L\to\infty$. For any finite $L$, we need to specify a target value $1-\epsilon$.} to specify the optimization problem, whereas if we use $EU_1$, we need to specify only the target value for $EU_1$ (i.e., an upper bound). 

\subsection{Throughput maximization subject to short term fairness: an example}
\label{subsec:throughput-stu-optimization}
Consider a system with $n=2$, and propagation delay of $m=10$ backoff slots. The system uses the IEEE~802.11 backoff expansion framework with default values for $p,K,$ and maximum backoff exponent, $maxBE$, namely, $p=2$, $K=6$, and $maxBE=10$, i.e., the maximum backoff a node can take is $2^{maxBE}=1024$. For the purposes of optimization, we treat the minimum backoff exponent, $minBE$ as the free variable. Recall that the initial backoff window of a node is $[1,2^{minBE}]$. Our aim is to choose $minBE$ to maximize system throughput subject to the fairness constraint that $EU_1(minBE)<3$. 

To this end, we proceed as follows. We first compute $EU_1$ as a function of $minBE$ for the given system for $0\leq minBE\leq 10$. The results are shown in Panel~1 of Figure~\ref{fig:throughput-fairness-optimization}. Also shown is the target fairness objective. As can be seen from the plot, $minBE\geq 6$ achieves the fairness objective of $EU_1(minBE)<3$.  
\begin{figure}[t]
\begin{center}
\includegraphics[scale=0.37]{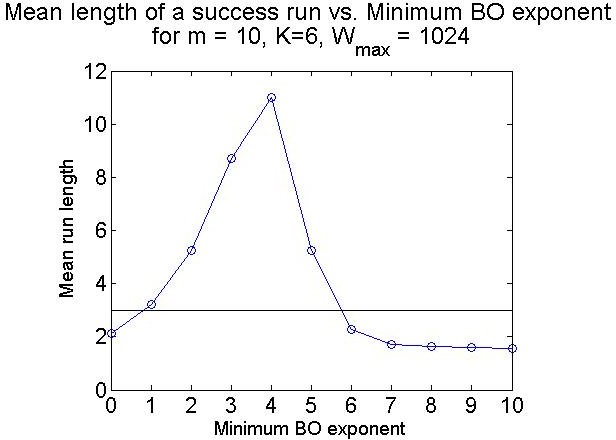}
\hspace{0.1mm}
\includegraphics[scale=0.35]{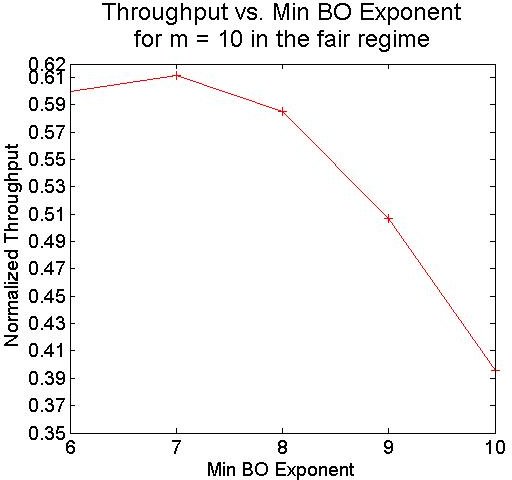}
\caption{Throughput maximization subject to short term fairness constraint for $n=2, m=10$. (Panels are numbered row-wise, from left to right) Panel~1: Mean success run length as a function of $minBE$; the flat line is the target fairness objective. Panel~2: Throughput as a function of $minBE$ in the fair regime.}
\label{fig:throughput-fairness-optimization}
\vspace{-5mm}
\end{center}
\normalsize
\end{figure}
We next compute the system throughput as a function of $minBE$ in this ``fair regime'', i.e., for $6\leq minBE\leq 10$. The results are shown in Panel~2 of Figure~\ref{fig:throughput-fairness-optimization}. It can be seen from the plot that $minBE=7$ achieves the maximum throughput for this system subject to the fairness constraint. 

\noindent
\remarks

\begin{enumerate}
 \item We see from Panel~1 of Figure~\ref{fig:throughput-fairness-optimization} that $EU_1$ first increases with $CW_{\min}$, then decreases. This can be explained by looking at the corresponding backoff sequences. When $minBE=0$, the backoff sequence is $[1,2,4,8,16,32,64]$. Thus, the difference between minimum and maximum possible backoff window size is 63. When $minBE=1$, the backoff sequence is $[2,4,8,16,32,64,128]$. Thus, the difference between minimum and maximum possible backoff window size is 126, more than the previous case. This difference (which can be taken as a measure of the backoff variability) keeps on increasing until $minBE=4$, at which point it is 1008. This causes the short term unfairness to increase. Beyond $minBE=4$, the difference starts decreasing, since the maximum backoff window size is clamped at 1024, and the minimum backoff window ($2^{minBE}$) keeps increasing. This causes the short term unfairness to decrease.

\item In the fairness regime, throughput shows a general decreasing trend, since when $minBE$ is already large, further increasing $minBE$ causes an increase in the system idle time, without significantly improving the collision probability.  
\end{enumerate}

\section{Revisiting Bianchi Analysis: Some Observations}
\label{sec:bianchi-revisited}
In this section, we aim to explain the scope and limitations of the standard f.p. analysis due to Bianchi \cite{bianchi00performance} using our generalized approximate system model of Section~\ref{sec:mrp-state-dependent}. We start by reviewing the system model and assumptions in the Bianchi analysis.

\subsection{Independence assumption in the Bianchi model}
\label{subsec:bianchi-independence}
In Bianchi's analysis, the system evolution is modeled as follows: in backoff time, in each backoff slot, each node attempts i.i.d. with probability $\beta$. 

In this system evolution, consider a Markov chain embedded at the success epochs; the Markov chain tracks the node id of the successful node at each success epoch, and has state space $\{1,2,\ldots,n\}$. Then, under the above assumption of Bernoulli attempt processes with state independent rates, the transition probabilities of this Markov chain are $p_{i,j}=\frac{1}{n}$, for all $i,j\in\{1,\ldots,n\}$; thus, the underlying assumption in Bianchi's model is that \emph{the success process is i.i.d.}. When is this a good assumption? We aim to provide some partial answers to this question using our generalized system model.

\subsection{The independence assumption in the light of the MRP model}
\label{subsec:mrp-bianchi-relation}

Consider again the Markov chain of successful node ids (embedded at the success epochs) in the generalized system model with state dependent Bernoulli attempt processes introduced in Section~\ref{subsec:system-evolution-mrp}. By symmetry, the transition probabilities of this Markov chain in the generalized model satisfy $p_{i,i}=p_{j,j}$ for all $i\neq j$, and $p_{i,j}=\frac{1-p_{i,i}}{n-1}$ for all $i$, for all $j\neq i$. 

Let us compare the transition probability matrix (t.p.m.) of this Markov chain under the Bianchi model with that under the generalized model. Due to the symmetry property mentioned above, it is enough to compare a single row in the t.p.m.; without loss of generality, consider Row~1. The KL distance between the first rows is easily seen to be $\log n - H(p_{1,1})$, i.e., \emph{the difference between the entropies of the two p.m.fs}. This suggests that the i.i.d. success process assumption in the Bianchi model is accurate \emph{when the entropy of a row of the t.p.m. in the generalized MRP model is close to maximum}, i.e., $\log n$. This is also intuitive, since independence, or a lack of correlation in the success process would imply high level of uncertainty in the system evolution. 

We further explore the implications of this observation for the simplest case of $n=2,m=0$. Note that in this case, achieving $H(p_{1,1})$ close to 1 is equivalent to achieving $p_{1,1}$ (and equivalently, $p_{1,2}$) close to 1/2. We have the following lemma.

\begin{lemma}
\label{lem:bd-bs}
 For any sufficiently small $\epsilon > 0$, to achieve $\frac{1}{2}-\epsilon \leq p_{1,2}\leq \frac{1}{2}+\epsilon$ for a system with $n=2,m=0$, it suffices to have $1-2\epsilon\leq \frac{\bd}{\bs}\leq 1+2\epsilon$.
\end{lemma}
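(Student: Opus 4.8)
The plan is to reduce the statement to an exact closed-form expression for $p_{1,2}$ in terms of $\bs$ and $\bd$ (with $\bc$ dropping out by symmetry), and then bound the deviation of $p_{1,2}$ from $\tfrac12$. Since $n=2$ and $m=0$, I would work with the success-id chain embedded at the epochs $T_u$ together with its single intermediate collision state; concretely this is the chain $\{Y_u\}$ of Section~\ref{subsubsec:success-run-long-distance} specialized to $m=0$, whose state space collapses to $\{0_{s,1},0_{s,2},0_c\}$, since there are no misalignment states $\pm k$ when $m=0$. Using the one-slot renewal argument exactly as in Section~\ref{subsec:mrp-analysis-given-beta}, the nonzero transition probabilities out of $0_{s,1}$ are
\begin{align*}
P(0_{s,1},0_{s,1}) &= \frac{\bs(1-\bd)}{1-(1-\bs)(1-\bd)}, & P(0_{s,1},0_{s,2}) &= \frac{\bd(1-\bs)}{1-(1-\bs)(1-\bd)},\\
P(0_{s,1},0_c) &= \frac{\bs\bd}{1-(1-\bs)(1-\bd)},
\end{align*}
and those out of $0_c$ are $P(0_c,0_{s,1})=P(0_c,0_{s,2})=\frac{\bc(1-\bc)}{1-(1-\bc)^2}$ and $P(0_c,0_c)=\frac{\bc^2}{1-(1-\bc)^2}$.

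The key simplification I would exploit is that the collision state $0_c$ is symmetric in the two nodes, so the probability that the next success is by Node~1 starting from $0_c$ is exactly $\tfrac12$. Formally, solving the one-variable first-step equation $r(0_c,0_{s,1}) = P(0_c,0_{s,1}) + P(0_c,0_c)\,r(0_c,0_{s,1})$ gives $r(0_c,0_{s,1}) = \tfrac12$, and $\bc$ cancels entirely. A further first-step expansion (noting that reaching $0_{s,2}$ contributes nothing to the event ``next success is Node~1'') then yields
\[
p_{1,1} = r_{11} = P(0_{s,1},0_{s,1}) + P(0_{s,1},0_c)\cdot\tfrac12 = \frac{\bs\bigl(1-\tfrac12\bd\bigr)}{1-(1-\bs)(1-\bd)}.
\]
Since $p_{1,2}=1-p_{1,1}$ for $n=2$, a short manipulation collapses the numerator to the clean identity
\[
p_{1,2} - \tfrac12 = \frac{\bd-\bs}{2\bigl(\bs+\bd-\bs\bd\bigr)},
\]
which is the heart of the argument: the deviation from perfect fairness is governed entirely by the signed gap $\bd-\bs$, normalized by $2(\bs+\bd-\bs\bd)$.

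Finally I would convert the hypothesis into the conclusion. Writing $|\bd-\bs| = \bs\,\bigl|\tfrac{\bd}{\bs}-1\bigr|$ and using $1-2\epsilon\le \tfrac{\bd}{\bs}\le 1+2\epsilon$ gives $|\bd-\bs|\le 2\epsilon\,\bs$. Because $\bs+\bd-\bs\bd = \bs+\bd(1-\bs)\ge \bs$ for rates in $(0,1]$, the denominator is at least $\bs$, whence
\[
\Bigl|p_{1,2}-\tfrac12\Bigr| \;=\; \frac{|\bd-\bs|}{2\bigl(\bs+\bd-\bs\bd\bigr)} \;\le\; \frac{2\epsilon\,\bs}{2\bs} \;=\; \epsilon,
\]
which is exactly $\tfrac12-\epsilon\le p_{1,2}\le \tfrac12+\epsilon$. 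I expect the only real obstacle to be the bookkeeping: specializing the transition probabilities correctly and tracking the cancellations that produce the numerator $\tfrac12(\bd-\bs)$. Once the closed form is in hand the bound is immediate, and in fact holds for every $\epsilon\in(0,\tfrac12)$; the ``sufficiently small'' caveat is needed only to keep the hypothesis $1-2\epsilon\le \bd/\bs$ a meaningful positive constraint on the attempt rates.
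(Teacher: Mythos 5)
Your proposal is correct and follows essentially the same route as the paper: both reduce the lemma to the closed form $p_{1,2}=\bd(1-\tfrac12\bs)/(\bs+\bd(1-\bs))$ obtained from the success-id/collision chain of Section~\ref{subsec:success-run}, and then finish with elementary algebra. The only differences are that you actually derive that closed form (via $r(0_c,0_{s,1})=\tfrac12$ and a first-step expansion), which the paper merely asserts, and your final step bounds the deviation directly through the identity $p_{1,2}-\tfrac12=(\bd-\bs)/\bigl(2(\bs+\bd-\bs\bd)\bigr)$ rather than inverting the formula as the paper does; both manipulations are valid for all $\epsilon\in(0,\tfrac12)$.
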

  
\begin{proof}
 It can be shown, using the method described in Section~\ref{subsec:success-run} (Note that $p_{1,1}$ is nothing but $r_{11}$ from Section~\ref{subsec:success-run}), that for $n=2,m=0$,
\begin{align}
 p_{1,2} &= \frac{\bd(1-\frac{1}{2}\bs)}{\bs+\bd(1-\bs)}
\end{align}
Then, simple algebraic manipulations yield that to achieve $p_{1,2}\geq x$ (respectively, $\leq x$), for any $0\leq x\leq 1$, we need $\frac{\bd}{\bs}\geq \frac{x}{1-x+(x-1/2)\bs}$ (respectively, $\frac{\bd}{\bs}\leq \frac{x}{1-x+(x-1/2)\bs}$). Thus, to achieve $\frac{1}{2}-\epsilon \leq p_{1,2}\leq \frac{1}{2}+\epsilon$, we need 
\begin{align}
 \frac{\frac{1}{2}-\epsilon}{\frac{1}{2}+\epsilon-\epsilon\bs}\leq \frac{\bd}{\bs}\leq \frac{\frac{1}{2}+\epsilon}{\frac{1}{2}-\epsilon+\epsilon\bs}
\end{align}
Since $\epsilon(1-\bs)>0$, to achieve the above, it suffices to have $1-2\epsilon \leq \frac{\bd}{\bs}\leq 1 + 2\epsilon$.
\end{proof}

Lemma~\ref{lem:bd-bs} implies that for $n=2,m=0$, the independence assumption in Bianchi's model is accurate when $\frac{\bd}{\bs}$ is close to 1 in the generalized MRP model. This is also intuitively satisfactory, since this makes the attempt processes of the successful node, and the interrupted node indistinguishable. 
 
\section{Conclusion}
\label{sec:conclusion-chap8}
We have considered a class of single-hop networks with saturated, IEEE~802.11 DCF based transmitters and their receivers, where the system exhibits a performance anomaly known as short term unfairness. We have demonstrated with several examples that short term unfairness abounds; it arises for several classes of backoff sequences, as well as when the propagation delays among the nodes are \emph{non-negligible} compared to the slot duration, and the standard fixed point analysis (or simple extensions thereof) do not predict the system performance well in such cases. We then proposed a detailed stochastic model of the system evolution, and developed a novel approximate, yet accurate, analysis of this model. Interestingly, for the case of non-negligible propagation delays, we observed that as propagation delay increases, the collision probability of a node initially increases, but then flattens out, contrary to simple intuition (Figure~\ref{fig:approx-model-gamma-validation}). Moreover, in such systems, after a successful transmission, the attempt rate of the successful node is much higher than the other nodes, a reflection of the short-term unfairness property (see, for example, Figure~\ref{fig:approx-model-beta-theta-validation}). We further explored the use of the approximate analysis for maximizing system throughput; we observed that at lower propagation delays, collision probability dominates throughput, while at higher delays, slot duration takes over as the dominant factor (Figure~\ref{fig:theta-vs-m-approx-model}). We also demonstrated the use of the analytical model to quantify the extent of short term unfairness in the system, and to tune the protocol parameters to achieve desired throughput and fairness objectives (Figure~\ref{fig:throughput-fairness-optimization}). Finally, we also explored an interesting connection between the assumptions in the standard f.p. analysis, and our generalized system model.

\section{Appendix}
\label{sec:appendix-wlan-analysis}
\subsection{Derivation of stationary probabilities of the Markov chain in Section~\ref{subsec:tagged-node-evolution}}
Our goal is to derive the stationary probabilities of the Markov chain $\{S_v,N_v\}\in\{0,1,\ldots,K\}\times\{1,\ldots,n\}$ embedded at the starts of backoff cycles, $T^{\prime(i)}_v$, of the tagged node, Node~$i$. 

We first need to derive the transition structure of the Markov chain. However, note that the tagged node can get interrupted in a backoff cycle due to a success by a single node, or a collision (simultaneous attempts) by several other nodes, and the evolution therefrom depends on the number of attempting nodes at that interruption instant (Recall Approximation~(A4)). Hence in this case, to derive the required stationary probabilities, it is more convenient to embed the concerned Markov chain (and the MRP) within a bigger auxiliary Markov chain (and MRP), namely a Markov chain embedded at the instants $T_u$ (the starts of transmission cycles; see Figures~\ref{fig:tx-cycle-explain} and \ref{fig:bo-cycle-explain}). To construct the auxiliary Markov chain, we associate with each $T_u$, three states, namely, (i) $S_u$, the backoff stage of Node~$i$ at $T_u$, (ii) $N_u$, the number of nodes that attempted in the just concluded transmission cycle, (iii) $B_u$, the residual backoff of Node~$i$ at $T_u$. Under Approximations~(A3) and (A4), it is easy to observe that $(S_u, N_u, B_u)$ is a DTMC embedded at the instants $\{T_u\}$ (and $\{(S_u,N_u,B_u), T_u\}$ is a Markov Renewal Process), with state space $\subset\{0,\ldots,K\}\times\{1,\ldots,n\}\times\{0,\ldots,W_K\}$. To see that this auxiliary Markov chain contains within it, the concerned Markov chain, simply observe that the set of states with $B_u=0$ is exactly the set of states in the original Markov chain (that was embedded at $\{T^{\prime(i)}_v\}$). Note that in the auxiliary chain, $B_u\neq 0\Rightarrow$ Node~$i$ was interrupted in the previous transmission cycle by $N_u$ other nodes. This facilitates tracking the evolution from an interruption instant of Node~$i$. 

We make the following simple observations about the state space of the auxiliary chain.
\begin{enumerate}
 \item $B_u>0\Rightarrow$ Node~$i$ was interrupted in the last transmission cycle $\Rightarrow N_u < n$.
 \item If Node~$i$ was interrupted in backoff stage $S_u=k$, then $B_u\in\{1,\ldots, W_k-1\}$, $k=0,\ldots,K$.
 \item $N_u=1, B_u=0\Rightarrow$ Node~$i$ transmitted successfully in the last transmission cycle $\Rightarrow S_u = 0$.  
\end{enumerate}
With the above observations, it can be verified that the total number of states in the auxiliary chain is $(n-1)\sum_{k=0}^K (W_k-1) + (n-1)(K+1) + 1$, which still grows linearly in the number of nodes. 

We now proceed to derive the transition structure of the auxiliary chain. We start by defining the following sets, which will be useful later in writing the transition probabilities.

Define, for all $0\leq x\leq n-1$ and for all $0\leq y\leq n-1$,

\begin{align}
 G(x,y) = \{(i,j): 0\leq i\leq x,0\leq j\leq n-1-x, i+j = y\}
\end{align}

Let $Q$ be the transition probability matrix of the auxiliary Markov chain, i.e., we denote by $Q((s,n_a,b),(s^\prime,n_a^\prime,b^\prime))$, the transition probability from the state $(s,n_a,b)$ to the state $(s^\prime,n_a^\prime,b^\prime)$ in the auxiliary chain. 

\noindent
\emph{Transition probabilities from states of the form $(s,n_a,0)$:}

\noindent
When the state is $(s,n_a,0)$, we know that Node~$i$ transmitted in the last transmission cycle along with $(n_a-1)$ other nodes, and its current backoff stage is $s$. Then, by our approximation~(A3), Node~$i$ will sample a new backoff uniformly from $[1,W_s]$, while $(n_a-1)$ other nodes will attempt independently w.p. $\bc$ in each backoff slot, and the remaining $(n-n_a)$ nodes will attempt independently w.p. $\bd$ in each backoff slot. Now 3 types of events can occur in the next transmission cycle. 

\begin{enumerate}
 \item Node~$i$ successfully transmits. This happens if Node~$i$ samples a backoff of $l$ slots, $1\leq l\leq W_s$, and all the other nodes remain silent for these $l$ slots. Using the Bernoulli attempt process approximation for the other nodes, the probability of this event is 
\begin{align}
 Q((s,n_a,0),(0,1,0)) &= \frac{1}{W_s}\sum_{l=1}^{W_s}(1-\bd)^{l(n-n_a)}(1-\bc)^{l(n_a-1)}\label{eqn:trans-prob-gen-n-first-eqn}
\end{align}
for all $s\in\{0,\ldots,K\}$, $n_a\in\{1,\ldots,n\}$.

\item Node~$i$ transmits and encounters a collision with $n^\prime_a-1$ other nodes. This happens if Node~$i$ samples a backoff of $l$ slots, $1\leq l\leq W_s$, and among the remaining $(n-1)$ nodes, exactly $(n^\prime_a-1)$ nodes attempt together at the $l^{th}$ slot, while the rest of the nodes remain silent. The probability of this event can be seen to be
 \begin{align}
  Q((s,n_a,0),((s+1)mod(K+1),n_a^\prime,0)) &= \frac{1}{W_s}\sum_{l=1}^{W_s}\sum_{(i,j)\in G(n_a-1,n^\prime_a-1)}((1-\bd)^{l-1}\bd)^j\nonumber\\
&\times((1-\bc)^{l-1}\bc)^i \binom{n_a-1}{i}(1-\bc)^{l(n_a-1-i)}\nonumber\\
&\times \binom{n-n_a}{j}(1-\bd)^{l(n-n_a-j)}
 \end{align}
for all $s\in\{0,\ldots,K\}$, $n_a\in\{1,\ldots,n\}$, $n^\prime_a\in\{2,\ldots,n\}$. The term corresponding to pair $(i,j)$ inside the second summation above is the probability that among the $(n_a-1)$ nodes (excluding Node~$i$) that attmepted in the previous cycle, exactly $i$ nodes attempt together in the $l^{th}$ slot of the current cycle, among the $(n-n_a)$ nodes that did not attempt in the previous cycle, exactly $j$ nodes attempt together in the $l^{th}$ slot in the current cycle, and the remaining $n-n_a^\prime$ nodes remain silent (note that by our definition of sets $G(\cdot,\cdot)$, $i+j= n^\prime_a-1$). 

\item Node~$i$ is interrupted by $n_a^\prime$ nodes, and its residual backoff is $b$, $1\leq b\leq W_s-1$. This can happen only if Node~$i$ samples a backoff $l\geq b+1$, and among the remaining $(n-1)$ nodes, exactly $n_a^\prime$ nodes attempt at the $(l-b)^{th}$ slot, while the rest of the nodes remain silent. Using similar arguments as above, the probability of this event can be verified to be

\begin{align}
 Q((s,n_a,0),(s,n_a^\prime,b)) &= \frac{1}{W_s}\sum_{l=b+1}^{W_s}\sum_{(i,j)\in G(n_a-1,n^\prime_a)}((1-\bd)^{l-b-1}\bd)^j\nonumber\\
&\times((1-\bc)^{l-b-1}\bc)^i \binom{n_a-1}{i}(1-\bc)^{(l-b)(n_a-1-i)}\nonumber\\
&\times \binom{n-n_a}{j}(1-\bd)^{(l-b)(n-n_a-j)}
\end{align}
for all $s\in\{0,\ldots,K\}$, $n_a\in\{1,\ldots,n\}$, $n^\prime_a\in\{1,\ldots,n-1\}$, $b\in\{1,\ldots,W_s-1\}$.
\end{enumerate}

\noindent
\emph{Transition probabilities from states of the form $(s,n_a,b)$ with $b>0$:}

\noindent
When the state is $(s,n_a,b)$ with $b>0$, we know that Node~$i$ was interrupted in the last transmission cycle by transmissions of $n_a$ other nodes, and its current backoff stage and residual backoff are $s$ and $b$ respectively. Then, by our approximation~(A4), Node~$i$ will resume its residual backoff countdown, while $n_a$ other nodes will attempt independently w.p. $\bc$ (respectively $\bs$) in each backoff slot if $n_a>1$ (respectively $n_a=1$), and the remaining $(n-1-n_a)$ nodes will attempt independently w.p. $\bd$ in each backoff slot. Now 3 types of events can occur in the next transmission cycle. 

\begin{enumerate}
 \item Node~$i$ transmits successfully. This happens if none of the other nodes attempt in the next $b$ slots. The probability of this event is

\begin{align}
 Q((s,n_a,b),(0,1,0)) &= (1-\bd)^{b(n-1-n_a)}(1-\bx)^{bn_a}
\end{align}
for all $s\in\{0,\ldots,K\}$, $n_a\in\{1,\ldots,n-1\}$, $b\in\{1,\ldots,W_s-1\}$. Here, $\bx=\bc$ if $n_a > 1$, and $\bx=\bs$ if $n_a=1$. 

\item Node~$i$ transmits and collides with $(n^\prime_a-1)$ other nodes. This happens if exactly $(n^\prime_a-1)$ other nodes attempt at the $b^{th}$ slot, and the rest of the nodes remain silent. Proceeding along similar lines as before, the probability of this event can be obtained as

\begin{align}
 Q((s,n_a,b),((s+1)mod(K+1),n_a^\prime,0)) &= \sum_{(i,j)\in G(n_a,n^\prime_a-1)}((1-\bx)^{b-1}\bx)^i\nonumber\\
&\times ((1-\bd)^{b-1}\bd)^j\binom{n_a}{i}(1-\bx)^{b(n_a-i)}\nonumber\\
&\times \binom{n-1-n_a}{j}(1-\bd)^{b(n-1-n_a-j)}
\end{align}
for all $s\in\{0,\ldots,K\}$, $n_a\in\{1,\ldots,n-1\}$, $b\in\{1,\ldots,W_s-1\}$, $n_a^\prime\in\{2,\ldots,n\}$. Here $\bx$ has the same interpretation as before.

\item Node~$i$ is again interrupted due to transmission by $n_a^\prime$ nodes, and its residual backoff is $b^\prime$. This can happen only if $b^\prime < b$, and exactly $n_a^\prime$ other nodes attempt at the $(b-b^\prime)^{th}$ slot, while the rest of the nodes remain silent. Using similar arguments as before, the probability of this event can be seen to be

\begin{align}
 Q((s,n_a,b),(s,n_a^\prime,b^\prime)) &= \sum_{(i,j)\in G(n_a,n^\prime_a)}((1-\bx)^{b-b^\prime-1}\bx)^i\nonumber\\
&\times ((1-\bd)^{b-b^\prime-1}\bd)^j\binom{n_a}{i}(1-\bx)^{(b-b^\prime)(n_a-i)}\nonumber\\
&\times \binom{n-1-n_a}{j}(1-\bd)^{(b-b^\prime)(n-1-n_a-j)}\label{eqn:transition-prob-gen-n-last-eqn}
\end{align}
for all $s\in\{0,\ldots,K\}$, $n_a\in\{1,\ldots,n-1\}$, $b\in\{1,\ldots,W_s-1\}$, $n_a^\prime\in\{1,\ldots,n-1\}$, $1\leq b^\prime \leq b-1$. Here $\bx$ has the same interpretation as before.
\end{enumerate}

This completes the derivation of the transition structure of the auxiliary Markov chain. All other entries in $Q$ are zero. 

It is easy to observe that the auxiliary DTMC is finite, irreducible (from any state, the state $(0,1,0)$ can be reached in one step, and from $(0,1,0)$, any state can be reached), and hence \emph{positive recurrent}. We denote by $\phi$, the stationary distribution of this Markov chain, which can be obtained as the unique solution to the system of equations $\phi = \phi Q$, subject to $\phi$ being a probability distribution. 

From the stationary distribution $\phi$ of the auxiliary Markov chain, we can obtain the stationary distribution $\psi$ of our original intended Markov chain (embedded at the backoff completion points $T^{\prime (i)}_v$ of the tagged node) as follows:

\begin{align}
 \psi(s,n_a) &= \frac{\phi(s,n_a,0)}{\sum_{(s^\prime,n_a^\prime,0)}\phi(s^\prime,n_a^\prime,0)}
\end{align}
for all $s\in\{0,\ldots,K\}$, $n_a\in\{1,\ldots,n\}$.

\bibliography{PhD_thesis_references}
\end{document}